\renewcommand{\thetable}{\Roman{table}}
\newcommand{\R}{\mathbb{R}}
\newcommand{\N}{\mathbb{N}}
\newcommand{\Tr}{\textrm{Tr}}
\newtheorem{theorem}{Theorem}[section]
\begin{document}

\title{Learning out-of-time-ordered correlators with classical kernel methods}

\author{John Tanner}
\email{john.tanner@uwa.edu.au}
\affiliation{Department of Physics, University of Western Australia, 35 Stirling Hwy, Crawley WA, 6009, Australia}

\author{Jason Pye}
\affiliation{Department of Physics, University of Western Australia, 35 Stirling Hwy, Crawley WA, 6009, Australia}
\affiliation{Nordita, Stockholm University and KTH Royal Institute of Technology, Hannes Alfv\'ens v\"ag 12, SE-106 91 Stockholm, Sweden}

\author{Jingbo Wang}
\email{jingbo.wang@uwa.edu.au}
\affiliation{Department of Physics, University of Western Australia, 35 Stirling Hwy, Crawley WA, 6009, Australia}

\date{September 3, 2024}


\newgeometry{left=2cm,right=2cm}



\begin{abstract}
\noindent
Out-of-Time Ordered Correlators (OTOCs) are widely used to investigate information scrambling in quantum systems.
However, directly computing OTOCs with classical computers is an expensive procedure.
This is due to the need to classically simulate the dynamics of quantum many-body systems, which entails computational costs that scale rapidly with system size. 
Similarly, exact simulation of the dynamics with a quantum computer (QC) will either only be possible for short times with noisy intermediate-scale quantum (NISQ) devices, or will require a fault-tolerant QC which is currently beyond technological capabilities. 
This motivates a search for alternative approaches to determine OTOCs and related quantities. 
In this study, we explore four parameterised sets of Hamiltonians describing local one-dimensional quantum systems of interest in condensed matter physics. 
For each set, we investigate whether classical kernel methods (KMs) can accurately learn the XZ-OTOC and a particular sum of OTOCs, as functions of the Hamiltonian parameters. 
We frame the problem as a regression task, generating small batches of labelled data with classical tensor network methods for quantum many-body systems with up to 40 qubits.  
Using this data, we train a variety of standard kernel machines and observe that the Laplacian and radial basis function (RBF) kernels perform best, achieving a coefficient of determination (\(R^2\)) on the testing sets of at least 0.7167, with averages between 0.8112 and 0.9822 for the various sets of Hamiltonians, together with small root mean squared error and mean absolute error.
Hence, after training, the models can replace further uses of tensor networks for calculating an OTOC function of a system within the parameterised sets.
Accordingly, the proposed method can assist with extensive evaluations of an OTOC function.
\end{abstract}

\maketitle

\section{Introduction}

In recent years, there have been significant technical advances in quantum computing technologies. 
From photons, trapped ions and neutral atoms, to nuclear magnetic resonance and superconducting qubits~\cite{Madsen2022Photonic,Bruzewics2019Trapped,Evered2023Neutral,OBrien2022Molecular,Bravyi2022Superconducting}, these technologies work alongside theoretical research by providing opportunities to investigate a qualitatively rich variety of quantum many-body systems. 
Access to these experimental platforms has not only led to the emergence of novel ideas and new research directions, but has also facilitated the development of mathematical models and tools for characterising numerous aspects of quantum many-body dynamics. 
In particular, quantum information scrambling~\cite{Landsman2019Verified}
has become an active research topic, which investigates the propagation of local quantum information into non-local degrees of freedom, with the aim of characterising quantum many-body systems.

Initially considered in the context of black holes~\cite{Hayden2007Black,Sekino2008Fast,Shenker2014Black,Shenker2015Stringy}, scrambling is now studied in more general many-body systems~\cite{Richerme2014Propagation,Jurcevic2014Quasiparticle,Lewis2019Dynamics,Landsman2019Verified}, often through the lens of the so-called out-of-time-ordered correlator (OTOC)~\cite{Hashimoto2017OTOC,Xu2023Tutorial}. 
The OTOC, which first appeared over 50 years ago~\cite{Larkin1969Quasiclassical}, is a measure of the non-commutativity of Heisenberg operators separated in time. 
In quantum field theories, the commutator captures causal relationships between observables at different points in spacetime.
Accordingly, the OTOC can be viewed as a measure of whether a local operator can causally influence measurements of another spatially-distant operator after some time. 
The value of the OTOC thus provides a kind of information-theoretic ``light cone,'' illustrating how information propagates through a system. 
The light cone is often sharper than the general Lieb-Robinson bound~\cite{Lieb1972Velocity}, and can exhibit a range of interesting behaviours. 
For example, chaotic quantum systems exhibit a linear light cone~\cite{Roberts2016Butterfly}, many-body localised systems exhibit logarithmic light cones~\cite{Huang2016MBL,Chen2016MBL,Banuls2016MBL,Deng2017Logarithmic}, and other marginal systems exhibit light cones somewhere between linear and logarithmic~\cite{Slagle2017Marginal}.

OTOCs have also been used in discussions of quantum chaos~\cite{Hosur2016Chaos,Rozenbaum2017Lyapunov,GarciaMata2023Chaos}, including in the study of black holes~\cite{Oliviero2023Unscrambling}, thermalisation~\cite{Swingle2018Unscrambling,Balachandran2021Eigenstate}, and many-body scarred systems~\cite{Yuan2022Quantum}. 
The early-time exponential growth of the OTOC was initially thought to be a diagnosis of chaos, and was even used to define a quantum analogue of the Lyapunov exponent~\cite{Maldacena2016Bound} and the Lyapunov spectrum~\cite{Gharibyan2019Lyapunov}. 
However, more recent findings have revealed that the exponential growth of OTOCs is necessary but not sufficient to diagnose chaos in quantum systems~\cite{Dowling2023Scrambling}, suggesting that this connection requires further research.  
This, together with the significant role that the OTOC plays in studying quantum information scrambling, and other intriguing applications \cite{Zamani2022Floquet,Schuster2023Learning,Cotler2023Information}, motivates detailed investigations of the OTOC. 
In this work, we focus on methods to numerically approximate the OTOC, as a means to further explore its dynamics more efficiently.

Classical methods for calculating OTOCs require the simulation of a quantum many-body system, a task which generally involves computational costs that scale exponentially in the size of the system. 
This makes the implementation of these methods extremely difficult in practice. 
There are, however, a handful of specific quantum systems for which the OTOC can be determined efficiently with classical methods. 
This includes exactly solvable cases, such as the Sachdev-Ye-Kitaev model~\cite{Polchinski2016Spectrum,Maldacena2016Remarks}, 1-dimensional (1D) quantum Ising spin chain~\cite{Lin2018Ising}, and other integrable systems~\cite{McGinley2019Slow}. 
Similarly, for 1D systems evolving under local Hamiltonians (such as those considered in this work), tensor network methods can be used to reliably estimate OTOCs~\cite{Heyl2018Detecting,Xu2019Accessing,Paeckel2019Evolution}. 
However for quantum systems in higher dimensions, contracting the corresponding tensor networks is a \(\#P\)-complete problem~\cite{Schuch2007Computational,Garcia2012Exact}. 
This poses a significant challenge for the use of tensor network methods for quantum systems in two or more dimensions.

On the other hand, procedures making use of quantum devices to directly compute OTOCs have been proposed~\cite{Swingle2016Measuring,Garttner2017Measuring,Li2017Measuring,Pg2021Exponential,Blocher2022Measuring}. 
However many of these procedures require either specialised quantum systems to simulate specific Hamiltonians, or access to a universal fault-tolerant quantum computer (QC). 
In the latter case one would, in principle, be able to simulate arbitrary quantum many-body systems. 
However, the only devices currently in operation are noisy intermediate-scale quantum (NISQ) devices~\cite{Preskill2018NISQ}.
In the case of 1D quantum systems, we believe that NISQ technologies will require significant improvements before they can match the simulation capabilities of classical tensor network methods (see Section \ref{Section5}).
For this reason, we focus on OTOCs describing 1D systems and explore applications of classical machine learning (ML) methods for accurately and efficiently determining such OTOCs (and related quantities), in order to support continued research in this direction.

To this end, we investigate whether classical kernel methods (KMs)~\cite{Steinwart2008Support,Scholkopf2001Kernels,Mohri2018Foundations}, powerful ML algorithms used for data analysis tasks, may help 
to reduce the cost of extensively evaluating OTOCs describing 1D systems. 
Specifically, we explore whether KMs can accurately approximate such OTOCs from a small amount of training data.
While neural networks are typically more efficient to train with larger datasets, here we seek a method for making accurate inferences from a small amount of data, since producing the data can be expensive. 
The small size of the datasets then makes the use of KMs over neural networks favourable since, for a given choice of hyperparameters, KMs offer a deterministic training procedure, allowing us to avoid potential convergence issues that can arise with neural networks.
Additionally, in contrast to the growing computational cost of tensor network methods with system size (see Section \ref{Generating the data}), once we obtain a trained model using KMs, we can use it to predict new OTOC values in a time that scales linearly in just the number of training datapoints (i.e., independent of the system size).

In this work, we consider four parameterised sets of Hamiltonians, each containing a collection of 1D quantum systems which are studied in the condensed matter physics literature.
For each set, we investigate whether classical KMs can be used to learn a specific OTOC, namely the XZ-OTOC, and a useful sum of OTOCs (defined in Section~\ref{Information Spreading and Out-of-Time-Ordered Correlators}) as functions of the Hamiltonian parameters.
We formulate the problem as a regression task, generating labelled data with classical tensor network methods based on matrix product operators (MPOs)~\cite{Xu2019Accessing}, for quantum many-body system sizes up to 40 qubits. 
The input data is given by uniform random samples of the Hamiltonian parameters drawn from a subset of the parameter space (see Section \ref{The parameterised sets of Hamiltonians}).
And the labels are given by either the value of the associated XZ-OTOC, or the sum of OTOCs. 
We split the generated data into training and testing sets and apply regularised empirical risk minimisation (RERM) \cite{Scholkopf2001Kernels, Mohri2018Foundations} with KMs. 
Using a variety of standard kernels, we perform a 10-fold cross-validation (see Section 4.5 of~\cite{Mohri2018Foundations}) on the training data to find good hyperparameter values. We then use these hyperparameter values to train models on all of the training data, and apply the models to the testing data to assess their suitability for the problem.

An application of this method could be in situations where one is interested in understanding the behaviour of the OTOC over large regions of a Hamiltonian parameter space.
For example, this could be used to design a quantum system with a particular information scrambling behaviour.
The idea is that instead of extensively evaluating the OTOC (using, e.g., MPO methods) for many different Hamiltonian parameters, one can perform a small number of evaluations of the OTOC and then use KMs to approximate the OTOC for the remaining parameter values of interest.
The benefit of this approach is that after the models are trained, further evaluations of the model can be done more efficiently than continued evaluations using MPO methods.
In Section~\ref{Section5}, we elaborate upon how this provides an overall reduction in computational costs.

A previous work~\cite{Wu2020Artificial} applied restricted Boltzmann machines to learn early-time OTOCs, demonstrating their approach with an example involving the 2D transverse-field Ising model.
Given that our work focuses on 1D quantum systems, a direct comparison with~\cite{Wu2020Artificial} is challenging. 
Nonetheless we include this reference as it is the only closely related work of which we are aware.

The paper proceeds as follows. 
In Section \ref{Section2}, we explain the necessary background material for discussing the ML tasks in detail, followed by a formal introduction to OTOCs and the associated sum of OTOCs on which we focus. 
In Section \ref{Section3}, we describe the parameterised sets of Hamiltonians considered in this work and the associated ML tasks, together with the classical kernels which are applied to the ML problem. 
We then finish this section by discussing the methods used to generate the datasets.
In Section \ref{Section4}, we report the numerical results, which include learning performance metrics for the trained models making predictions on the training and testing sets for every problem instance. 
Finally, in Section \ref{Section5}, we discuss the results, describe how an overall reduction in runtime can be achieved, and conclude in Section~\ref{Conclusion}, providing suggestions for improving and extending our work in future research.

\section{Background}
\label{Section2}

\subsection{Regularised empirical risk minimisation}

Supervised ML algorithms aim to find a function, called a \emph{model}, that both accurately fits a training dataset of input-output pairs and makes accurate predictions for new data. 
However, if a model fits the training data too closely then it will often generalise poorly to unseen data, referred to as \emph{overfitting}. 
This poses a challenge for supervised ML algorithms. 
The RERM method \cite{Scholkopf2001Kernels,Mohri2018Foundations} addresses this challenge by minimising a combination of the empirical risk, which measures performance on training data using a loss function, and a regularisation term. 
By including the regularisation term, RERM directly discourages overly intricate models, such as those that may arise from fitting noisy data exactly. This can help in finding a more robust model which is less prone to overfitting.

Let \(\mathcal{D}=\{(\mathbf{x}_i,y_i)\}_{i=1}^{M}\subseteq\mathcal{X}\times\R\) be a \emph{training dataset}, where \(\mathcal{X}\equiv\R^d\) is the input data domain of dimension \(d\in\mathbb{N}\), \(\mathbf{x}_i\in\mathcal{X}\) is the \(i^{\textrm{th}}\) input training data sample, \(y_i\in\mathbb{R}\) is the label for the \(i^{\textrm{th}}\) training data sample, and \(M\in\N\) is the total number of training data samples. 
We denote the set of candidate models mapping \(\mathcal{X}\to\R\), called the \emph{hypothesis class}, by \(\textsc{Hyp}(\mathcal{X},\R)\). 
The \emph{regularised empirical risk functional}, denoted \(\mathcal{L}_{\mathcal{D}}:\textsc{Hyp}(\mathcal{X},\R)\to\R\), for \(\mathcal{D}\) is then defined by
\begin{align}
\label{RegularisedRisk}
\mathcal{L}_{\mathcal{D}}(f)=L_{\mathcal{D}}(f)+\lambda\Omega(f),
\end{align}
where \(L_{\mathcal{D}}:\textsc{Hyp}(\mathcal{X},\R)\to\R\) is a \emph{loss function} for \(\mathcal{D}\), \(\Omega:\textsc{Hyp}(\mathcal{X},\R)\to\R\) is the \emph{regularisation term}, and \(\lambda>0\) is the \emph{regularisation strength}. 
A common choice of the loss function for regression is given by the \emph{mean squared error} (MSE), \(L_{\mathcal{D}}(f)=\frac{1}{M}\sum_{i=1}^{M}\left(f(\mathbf{x}_i)-y_i\right)^2\). 
Similarly, common choices of the regularisation term include the \(l_1\)-norm and the squared \(l_2\)-norm of the model parameters, called Lasso and ridge regression (see Section 11.3 of~\cite{Mohri2018Foundations}), respectively.

RERM for \(\mathcal{D}\) over \(\textsc{Hyp}(\mathcal{X},\R)\) is then the procedure of finding a model \(f_{\textrm{opt}}\in\textsc{Hyp}(\mathcal{X},\R)\) which minimises the regularised empirical risk functional for \(\mathcal{D}\),
\begin{equation}
\label{OptimalModel}
f_{\textrm{opt}}=\underset{f\in\textsc{Hyp}(\mathcal{X},\R)}{\arg\min}\mathcal{L}_{\mathcal{D}}(f).
\end{equation}
By choosing \(\lambda\) in \eqref{RegularisedRisk} to be strictly positive, called \emph{regularisation}, we can sometimes help to bound the difference between the loss function for \(f_{\textrm{opt}}\) evaluated on the training dataset, and the loss function for \(f_{\textrm{opt}}\) evaluated on unseen data~\cite{Vapnik1998Statistical}. 
This means that if RERM returns a model which accurately fits the training dataset, then the model will likely also perform well on unseen data, hence improving generalisation.

\subsection{Kernel methods}
\label{Kernel methods}

\begin{figure*}[ht]
\centering
\begin{tikzpicture}
\draw (0.5-0.75,0.5-0.125+0.25) -- ++(2,0) -- ++(1,1) -- ++(-2,0) -- ++(-1,-1) ;
\node[circle, fill=yellow!100!blue, inner sep = 0.35mm] at (1.75,1.475) {};
\node[circle, fill=yellow!95!blue, inner sep = 0.35mm] at (0.85,1.3) {};
\node[circle, fill=yellow!88.4!blue, inner sep = 0.35mm] at (1.75+0.5-0.75,1.225) {};
\node[circle, fill=yellow!73.9!blue, inner sep = 0.35mm] at (0.55+0.5-0.75,0.915) {};
\node[circle, fill=yellow!65!blue, inner sep = 0.35mm] at (2.35+0.5-0.75,1.475) {};
\node[circle, fill=yellow!51.2!blue, inner sep = 0.35mm] at (0.75+0.65-0.75,1.175) {};
\node[circle, fill=yellow!41.9!blue, inner sep = 0.35mm] at (1.5+0.5-0.75,0.975) {};
\node[circle, fill=yellow!32.6!blue, inner sep = 0.35mm] at (1.25+0.5-0.75,1.525) {};
\node[circle, fill=yellow!23.3!blue, inner sep = 0.35mm] at (1.5+0.5-0.75,1.325) {};
\node[circle, fill=yellow!16.3!blue, inner sep = 0.35mm] at (2.2+0.5-0.75,1.175) {};
\node[circle, fill=yellow!11.2!blue, inner sep = 0.35mm] at (1.2+0.5-0.75,1.065) {};
\node[circle, fill=yellow!6.5!blue, inner sep = 0.35mm] at (0.9+0.5-0.75,0.965) {};
\node[circle, fill=yellow!0!blue, inner sep = 0.35mm] at (1.85+0.5-0.75,0.825) {};
\node at (1.75+0.5-0.75-0.75,2-0.125-1.75+0.25) {\(\mathcal{X}\equiv\R^d\)};
\draw[thick,opacity=0.35,fill=yellow!0!blue] (5+0.125,-0.25) -- ++(2,-0.25) -- ++(1,1.25) -- ++(-2,0.25) -- ++(-1,-1.25) ;
\draw[thick,opacity=0.35,fill=yellow!33!blue] (5+0.125,0.25) -- ++(2,-0.25) -- ++(1,1.25) -- ++(-2,0.25) -- ++(-1,-1.25) ;
\draw[thick,opacity=0.35,fill=yellow!66!blue] (5+0.125,0.75) -- ++(2,-0.25) -- ++(1,1.25) -- ++(-2,0.25) -- ++(-1,-1.25) ;
\draw[thick,opacity=0.35,fill=yellow!100!blue] (5+0.125,1.25) -- ++(2,-0.25) -- ++(1,1.25) -- ++(-2,0.25) -- ++(-1,-1.25) ;
\draw[dotted] (5+0.125,-0.5) -- ++(1,1) -- ++(0,2) -- ++(2,0) -- ++(0,-2) -- ++(-2,0) ;
\draw (5+0.125,-0.5) -- ++ (2,0) -- ++(1,1) -- ++(0,2) -- ++(-2,0) -- ++(-1,-1) -- ++ (2,0) -- ++(1,1) -- ++(-1,-1) -- ++(0,-2) -- ++(-2,0) -- ++(0,2) ;
\node[circle, fill=yellow!100!blue, inner sep = 0.35mm] at (2+5+0.125,2.1) {};
\node[circle, fill=yellow!95!blue, inner sep = 0.35mm] at (2+5+0.125-0.9,2.1-0.1075) {};
\node[circle, fill=yellow!88.4!blue, inner sep = 0.35mm] at (1.75+5+0.125,1.85) {};
\node[circle, fill=yellow!73.9!blue, inner sep = 0.35mm] at (0.55+5+0.125,1.54) {};
\node[circle, fill=yellow!65!blue, inner sep = 0.35mm] at (2.35+5+0.125,1.35) {};
\node[circle, fill=yellow!51.2!blue, inner sep = 0.35mm] at (0.75+5.15+0.125,1.05) {};
\node[circle, fill=yellow!41.9!blue, inner sep = 0.35mm] at (1.5+5+0.125,0.85) {};
\node[circle, fill=yellow!32.6!blue, inner sep = 0.35mm] at (1.25+5+0.125,0.65) {};
\node[circle, fill=yellow!23.3!blue, inner sep = 0.35mm] at (1.5+5+0.125,0.45) {};
\node[circle, fill=yellow!16.3!blue, inner sep = 0.35mm] at (2.2+5+0.125,0.3) {};
\node[circle, fill=yellow!11.2!blue, inner sep = 0.35mm] at (1.2+5+0.125,0.19) {};
\node[circle, fill=yellow!6.5!blue, inner sep = 0.35mm] at (0.9+5+0.125,0.09) {};
\node[circle, fill=yellow!0!blue, inner sep = 0.35mm] at (1.85+5+0.125,-0.05) {};
\node at (6.75-0.75+0.125,2.875-3.75) {\(\mathcal{F}\)} ;
\draw[thick] (11-0.25+0.5+0.25,0) -- ++(0,2.5) ;
\node[circle, fill=yellow!100!blue, inner sep = 0.35mm] at (11-0.25+0.5+0.25,2.35) {};
\node[circle, fill=yellow!95!blue, inner sep = 0.35mm] at (11-0.25+0.5+0.25,2.35-0.1075) {};
\node[circle, fill=yellow!88.4!blue, inner sep = 0.35mm] at (11-0.25+0.5+0.25,2.1) {};
\node[circle, fill=yellow!73.9!blue, inner sep = 0.35mm] at (11-0.25+0.5+0.25,1.79) {};
\node[circle, fill=yellow!65!blue, inner sep = 0.35mm] at (11-0.25+0.5+0.25,1.6) {};
\node[circle, fill=yellow!51.2!blue, inner sep = 0.35mm] at (11-0.25+0.5+0.25,1.3) {};
\node[circle, fill=yellow!41.9!blue, inner sep = 0.35mm] at (11-0.25+0.5+0.25,1.1) {};
\node[circle, fill=yellow!32.6!blue, inner sep = 0.35mm] at (11-0.25+0.5+0.25,0.9) {};
\node[circle, fill=yellow!23.3!blue, inner sep = 0.35mm] at (11-0.25+0.5+0.25,0.7) {};
\node[circle, fill=yellow!16.3!blue, inner sep = 0.35mm] at (11-0.25+0.5+0.25,0.55) {};
\node[circle, fill=yellow!11.2!blue, inner sep = 0.35mm] at (11-0.25+0.5+0.25,0.44) {};
\node[circle, fill=yellow!6.5!blue, inner sep = 0.35mm] at (11-0.25+0.5+0.25,0.34) {};
\node[circle, fill=yellow!0!blue, inner sep = 0.35mm] at (11-0.25+0.5+0.25,0.2) {};
\node at (11-0.25+0.5+0.25,2.75-3.25+0.25) {\(\R\)};
\draw[thick,-stealth ] (3.25-0.3,1.75-0.5-0.25) to[out=-30,in=-150] ++(1.75,0);
\node at (4.0725-0.3,1.5-0.5-0.5-0.25) {\(\phi(\cdot)\)};
\draw[thick,-stealth] (8.5+0.25,1.75-0.5-0.25+0.1) to[out=-30,in=-150] ++(1.75,0);
\node at (9.5+0.25,1.5-0.5-0.5-0.25+0.1) {\(\langle\phi(\cdot),\phi(x^\prime)\rangle_{\mathcal{F}}\)};
\draw[thick,-stealth ] (3-1,2.125-0.25) to[out=30,in=145] ++(9,0);
\node at (6.5,4.25-0.4-0.25) {\(\mathcal{K}(\cdot,x^\prime)\)};
\end{tikzpicture}
\caption{A kernel function \(\mathcal{K}\), which implicitly computes an inner-product in a high-dimensional feature space \(\mathcal{F}\), can be used to simplify a regression problem if the associated feature map \(\phi\) arranges the inputs in \(\mathcal{F}\) in a desirable way. For example, above we see points associated with different continuous labels (indicated by the varying colours) being arranged into different parallel hyperplanes. This allows the continuous value associated with the points to be extracted via a simple projection along some axis in \(\mathcal{F}\).}
\label{Fig1}
\end{figure*}
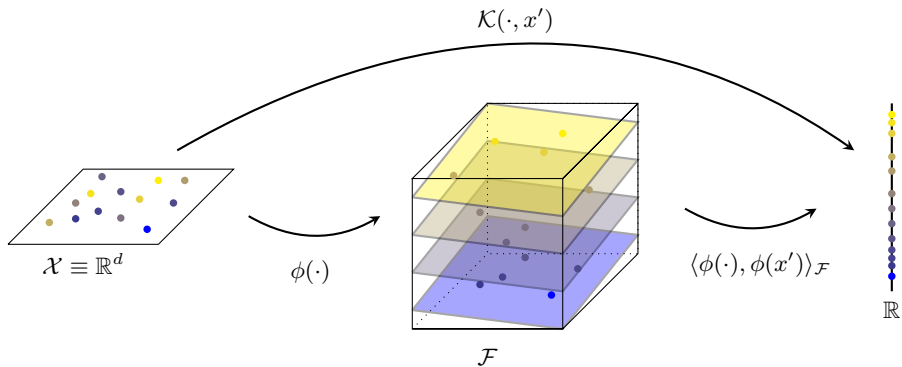

When selecting an algorithm for a supervised ML task, several factors need to be considered.
These include the dataset size and complexity, the type of problem (i.e., regression, classification, etc.), the efficiency of the training algorithm, and the availability of tools for implementation. 
KMs~\cite{Scholkopf2001Kernels,Steinwart2008Support,Mohri2018Foundations} are a collection of  ML algorithms used to capture complex patterns in small to moderate sized datasets.
The success of KMs stems largely from the use of kernel functions, which implicitly compute inner products between embeddings of input data in high-dimensional feature spaces. 
By utilising kernel functions, intricate non-linear structures in the original data can sometimes translate into standard linear functions in the feature space (see Figure \ref{Fig1}). 

Moreover, in many cases the training landscape for KMs is convex which, for a given choice of hyperparameters, enables the discovery of optimal model parameters via a deterministic procedure. 
However, the training procedure has a cubic runtime scaling in the number of training datapoints.
In contrast, back-propagation in conventional neural networks has a runtime scaling linearly in the number of training datapoints. 
So for large datasets, neural networks may be preferable. 
However if one uses a small enough training dataset, such as those used in this work, then kernels offer benefits including a deterministic training procedure, which motivates our use of them here.

Formally, a \emph{kernel} is a symmetric function \(\mathcal{K}:\mathcal{X}\times\mathcal{X}\to\R\) such that the \emph{Gram matrix} \(K_{ij}\equiv\mathcal{K}(x_i,x_j)\) of \(\mathcal{K}\) is positive semi-definite for all choices of the set \(\{x_1,\ldots,x_m\}\subseteq\mathcal{X}\) and all \(m\in\N\). It can be shown (see Chapter 2.2 of \cite{Scholkopf2001Kernels}) that any kernel \(\mathcal{K}\) can be expressed in the form
\begin{equation}
\label{FeatureMapKernel}
\mathcal{K}(x,x^\prime)=\left\langle\phi(x),\phi(x^\prime)\right\rangle_{\mathcal{F}},
\end{equation}
for some function \(\phi:\mathcal{X}\to\mathcal{F}\) and for all \(x,x^\prime\in\mathcal{X}\) (see Figure \ref{Fig1}). The function \(\phi:\mathcal{X}\to\mathcal{F}\) is called a \emph{feature map} and its codomain \(\mathcal{F}\) is a Hilbert space over \(\R\) called a \emph{feature space}.
The kernel $\mathcal{K}$ is then a measure of how similar two inputs are after being mapped into the feature space by \(\phi\).

Associated with each kernel \(\mathcal{K}\) is a Hilbert space over \(\R\) called the \emph{reproducing kernel Hilbert space} (RKHS) of \(\mathcal{K}\) (Definition 2.9 in~\cite{Scholkopf2001Kernels}), which we denote by \(\mathcal{R}_{\mathcal{K}}\). 
The RKHS \(\mathcal{R}_{\mathcal{K}}\) is a space of functions mapping \(\mathcal{X}\to\R\) defined such that
\begin{equation}
\label{RKHS}
\mathcal{R}_{\mathcal{K}}\equiv\overline{\textrm{span}}_{\R}\big\{\mathcal{K}(\cdot,x)|x\in\mathcal{X}\big\}.
\end{equation}
That is, each $x \in \mathcal{X}$ is an index for a function $\mathcal{K}(\cdot,x)$ mapping $\mathcal{X} \to \R$, defined such that $x' \mapsto \mathcal{K}(x',x)$ for all $x'\in\mathcal{X}$.
The span over \(\R\) of this collection of functions is itself a function space containing all real linear combinations of finitely many functions $\{ \mathcal{K}(\cdot,x_i) \}_{i=1}^m$ indexed by some subset $\{ x_i \}_{i=1}^m \subset \mathcal{X}$, where $m \in \N$.
One can define an inner product for such linear combinations $f = \sum_{i=1}^m \alpha_i \mathcal{K}(\cdot,x_i)$ and $g = \sum_{j=1}^{m^\prime} \beta_j \mathcal{K}(\cdot,x'_j)$ by
\begin{equation}
\label{RKHSInnerProduct}
 \langle f, g \rangle_{\mathcal{R}_{\mathcal{K}}} \equiv \sum_{i=1}^m\sum_{j=1}^{m^\prime} \alpha_i \beta_j \mathcal{K}(x_i,x'_j).
\end{equation}
The RKHS $\mathcal{R}_{\mathcal{K}}$ is then the completion of this function space with respect to the inner product in \eqref{RKHSInnerProduct}. 
Generally, an arbitrary element of the completed space $\mathcal{R}_{\mathcal{K}}$ cannot be written as a finite linear combination of kernel functions. 
However, under some weak condtions on \(\mathcal{X}\) and \(\mathcal{K}\) (see Lemma 4.33 in~\cite{Steinwart2008Support}), we can write an arbitrary model \(f\in\mathcal{R}_{\mathcal{K}}\) as a countable sum,
\begin{equation}
\label{RKHSmodel}
f(\cdot)=\sum_{i\in\N}\alpha_i\mathcal{K}(\cdot,x_i),
\end{equation}
for some \(\alpha_i\in\R\) and \(x_i\in\mathcal{X}\).

Performing RERM with KMs amounts to choosing an appropriate kernel $\mathcal{K}$ and using the corresponding RKHS $\mathcal{R}_{\mathcal{K}}$ as the hypothesis class, whose elements can be written as in \eqref{RKHSmodel}.
Optimisation over this space would appear to require searching for the best choice of infinitely many real coefficients $\{ \alpha_i \}_{i \in \N}$. 
However, the \emph{representer theorem} (Theorem 4.2 in~\cite{Scholkopf2001Kernels} and Theorem 6.11 in~\cite{Mohri2018Foundations}) reduces the problem to a finite-dimensional one. 
Suppose that we have a finite training dataset \(\mathcal{D}=\{(\mathbf{x}_i,y_i)\}_{i=1}^{M}\subseteq\mathcal{X}\times\R\) and a regularisation term \(\Omega(f)=g\left(\|f\|_{\mathcal{R}_{\mathcal{K}}}\right)\), where \(g:[0,\infty)\to\R\) is some strictly increasing function, and \(\|\cdot\|_{\mathcal{R}_{\mathcal{K}}}\) is the norm induced by the inner product on \(\mathcal{R}_{\mathcal{K}}\). 
Under these conditions, the representer theorem states that the functions in \(\mathcal{R}_{\mathcal{K}}\) which minimise the regularised empirical risk functional can be expressed in the form
\begin{equation}
\label{RTmodel}
f(\cdot)=\sum_{i=1}^{M}\alpha_i\mathcal{K}(\cdot,\mathbf{x}_i).
\end{equation}
Notice that \eqref{RTmodel} involves only finitely many real coefficients \(\{\alpha_i\}_{i=1}^{M}\) which define a linear combination of finitely many functions \(\{\mathcal{K}(\cdot,\mathbf{x}_i)\}_{i=1}^{M}\) associated with the input training data \(\{\mathbf{x}_i\}_{i=1}^{M}\). 

Therefore, we can find the minimisers of the regularised empirical risk functional by searching over the finite-dimensional space of coefficients $\vec{\alpha} = (\alpha_i)_{i=1}^M \in \R^M$. 
Further, if the regularised empirical risk functional is convex, then the optimal \(\vec{\alpha}\) is unique and we can compute it deterministically. 
In this work, we use the MSE loss function and the regularisation term defined by \(\Omega(f)=\frac{1}{M}\|f\|_{\mathcal{R}_{\mathcal{K}}}^2\) for all \(f\in\mathcal{R}_{\mathcal{K}}\). 
In this case the regularised empirical risk functional is convex (see Appendix \ref{A11}). 
From this, it can be shown (see Appendix \ref{A12}) that the optimal \(\vec{\alpha}\in\R^M\) is given by
\begin{equation}
\label{OptimalAlpha}
\vec{\alpha}=\left(K^2+\lambda K\right)^+K\vec{y},
\end{equation}
where \(\vec{y}=(y_i)_{i=1}^{M}\in\R^M\) is the vector of training data labels, \(K_{ij}=\mathcal{K}(\mathbf{x}_i,\mathbf{x}_j)\) is the  \emph{kernel matrix} for the input training data \(\{\mathbf{x}_i\}_{i=1}^{M}\), and \((\cdot)^+\) is the Moore-Penrose pseudoinverse~\cite{Penrose1955PseudoInverse}.

From \eqref{OptimalAlpha}, we see that determining the optimal \(\vec{\alpha}\in\R^M\) requires the Moore-Penrose pseudoinverse of the \(M\times M\) matrix \(K^2+\lambda K\) to be known. 
The time complexity of calculating the pseudoinverse of a square matrix, which is the most computationally expensive part of this algorithm, is cubic in the number of rows or columns.
This implies that the time complexity of training a model with this algorithm scales as \(\mathcal{O}(M^3)\) with the size of the training data set \(M\in\N\). 
Once the optimal \(\vec{\alpha}\) has been determined, \eqref{RTmodel} shows that the time-complexity of predicting the label for a new input scales as \(\mathcal{O}(M)\). 
Accordingly, it is important for the size of the training dataset to be reasonably small, otherwise the time required for training (and making predictions) can become prohibitively large.
Similarly, we need to be able to compute the kernel \(\mathcal{K}\) efficiently, otherwise both calculating the kernel matrix entries and making predictions could be computationally expensive.

\subsection{Information Spreading and Out-of-Time-Ordered Correlators}
\label{Information Spreading and Out-of-Time-Ordered Correlators}

We now consider the question of how one may quantify the spread of information in quantum many-body systems.
One way to make this concrete is to consider how one could use the natural dynamics of a quantum system to transmit information \cite{Xu2023Tutorial}.
Consider an $n$-qubit quantum system evolving under a time-independent local Hamiltonian $H$.
Let $V_j$, $W_k$ denote 1-qubit Pauli operators $V, W \in \{ X, Y, Z \}$ which act only on qubits $j, k \in \{ 1, \dots, n \}$, respectively.
Suppose Alice has access to qubit $j$ and Bob has access to qubit $k$.
Alice wants to send a classical bit $a \in \{ 0, 1 \}$ to Bob, and does so by either applying $V_j$ at $t=0$ if $a = 1$, or does nothing if $a = 0$.
The system then evolves for some time $t$, after which Bob measures $W_k$ and attempts to determine whether $V_j$ was applied (see Figure~\ref{Fig2}).

If the system begins in the state described by the density operator $\rho$, then we can use the Cauchy-Schwarz inequality to bound the difference between the expectation values that Bob measures in each case \cite{Xu2023Tutorial}. In particular,
\begin{align}
\nonumber
&| \langle V_j W_k(t) V_j \rangle_\rho - \langle W_k(t) \rangle_\rho |^2 \\
\label{eq:cl_comm}
&\qquad\qquad\leq\langle [ V_j, W_k(t) ]^\dagger [ V_j, W_k(t) ] \rangle_\rho,
\end{align}
where $W_k(t) \equiv e^{iHt} W_k e^{-iHt}$ is the operator $W_k$ evolved in the Heisenberg picture under $H$ for $t$ units of time, and $\langle \cdot \rangle_\rho$ denotes the expectation value measured from the state $\rho$.
The right-hand side of \eqref{eq:cl_comm} is a measure of the size of the commutator $[ V_j, W_k(t) ]$ in the state $\rho$.
If $\rho$ is the maximally mixed state \(\mathbb{I}^{\otimes n}/2^n\), where \(\mathbb{I}\) is the \(2\times2\) identity matrix, then the right-hand side is proportional to the square of the Frobenius norm of the operator $[ V_j, W_k(t) ]$.
Thus, we see that the size of this commutator bounds the sensitivity of the expectation value of $W_k(t)$ to the initial perturbation $V_j$.
If the commutator is small, then Alice and Bob would need to repeat the procedure many times in order for Bob to reliably distinguish between the two cases, and hence determine which bit was sent.

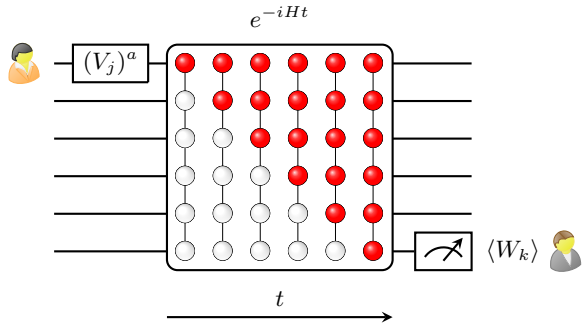
\begin{figure}
\centering
\begin{tikzpicture}[font=\small,overlay]

\node[minimum size=0.4cm,alice,female,skin=yellow!90!black] (A) at (-1.9,2.625) {};

\node[minimum size=0.4cm,bob,skin=yellow!90!black] (B) at (5.3,0.125) {};

\draw[thick, rounded corners] (0,-0.125) -- ++(3,0) -- ++(0,3) -- ++(-3,0) -- cycle ;
\node (Unitary-label) at (1.5,3.25) {\(e^{-i H t}\)};

\draw[thick] (-0.25,5*0.5-0.125) -- ++(0,0.5) -- ++(-1.0,0) -- ++(0,-0.5) -- ++(1.0,0) ;
\node (Unitary-label) at (-0.75,5*0.5+0.125) {$(V_j)^a$};

\draw[thick] (3.3,-0.125) -- ++(0.75,0) -- ++(0,0.5) -- ++(-0.75,0) -- cycle;
\draw[thick] (3.62925-0.2,0.05) arc (145:35:0.3) ;
\draw[thick,-stealth] (3.5+0.75*0.5-0.2,0) -- ++({0.4*cos(50)},{0.4*sin(50)});
\node[] (A) at (4.6,0.125) {$\langle W_k \rangle$};

\draw[thick] (0,0*0.5+0.125) -- ++(-1.5,0) ;
\draw[thick] (0,1*0.5+0.125) -- ++(-1.5,0) ;
\draw[thick] (0,2*0.5+0.125) -- ++(-1.5,0) ;
\draw[thick] (0,3*0.5+0.125) -- ++(-1.5,0) ;
\draw[thick] (0,4*0.5+0.125) -- ++(-1.5,0) ;
\draw[thick] (0,5*0.5+0.125) -- ++(-0.25,0) ;
\draw[thick] (-1.25,5*0.5+0.125) -- ++(-0.25,0) ;

\draw[thick] (3,0*0.5+0.125) -- ++(0.29,0) ;
\draw[thick] (3,1*0.5+0.125) -- ++(1.05,0) ;
\draw[thick] (3,2*0.5+0.125) -- ++(1.05,0) ;
\draw[thick] (3,3*0.5+0.125) -- ++(1.05,0) ;
\draw[thick] (3,4*0.5+0.125) -- ++(1.05,0) ;
\draw[thick] (3,5*0.5+0.125) -- ++(1.05,0) ;

\draw[thick,-stealth] (0,-0.75) -- ++(3,0) ;
\node at (1.5,-0.5) {\(t\)};

\end{tikzpicture}
\begin{tikzpicture}

\draw[line width=0.05pt] (-2,0) -- ++(0,2.5) ;
\draw[line width=0.05pt] (-1.5,0) -- ++(0,2.5) ;
\draw[line width=0.05pt] (-1,0) -- ++(0,2.5) ;
\draw[line width=0.05pt] (-0.5,0) -- ++(0,2.5) ;
\draw[line width=0.05pt] (0,0) -- ++(0,2.5) ;
\draw[line width=0.05pt] (0.5,0) -- ++(0,2.5) ;

\begin{scope}[blend group=screen]
\draw[top color=gray,bottom color=white] (-2,0) circle (0.125cm) ;
\draw[shading=ball,ball color=white]      (-2,0) circle (0.125cm) ;
\end{scope}
\begin{scope}[blend group=screen]
\draw[top color=gray,bottom color=white] (-1.5,0) circle (0.125cm) ;
\draw[shading=ball,ball color=white]      (-1.5,0) circle (0.125cm) ;
\end{scope}
\begin{scope}[blend group=screen]
\draw[top color=gray,bottom color=white] (-1,0) circle (0.125cm) ;
\draw[shading=ball,ball color=white]      (-1,0) circle (0.125cm) ;
\end{scope}
\begin{scope}[blend group=screen]
\draw[top color=gray,bottom color=white] (-0.5,0) circle (0.125cm) ;
\draw[shading=ball,ball color=white]      (-0.5,0) circle (0.125cm) ;
\end{scope}
\begin{scope}[blend group=screen]
\draw[top color=gray,bottom color=white] (0,0) circle (0.125cm) ;
\draw[shading=ball,ball color=white]      (0,0) circle (0.125cm) ;
\end{scope}
\begin{scope}[blend group=screen]
\draw[top color=red,bottom color=red] (0.5,0) circle (0.125cm) ;
\draw[shading=ball,ball color=red]      (0.5,0) circle (0.125cm) ;
\end{scope}

\begin{scope}[blend group=screen]
\draw[top color=gray,bottom color=white] (-2,0.5) circle (0.125cm) ;
\draw[shading=ball,ball color=white]      (-2,0.5) circle (0.125cm) ;
\end{scope}
\begin{scope}[blend group=screen]
\draw[top color=gray,bottom color=white] (-1.5,0.5) circle (0.125cm) ;
\draw[shading=ball,ball color=white]      (-1.5,0.5) circle (0.125cm) ;
\end{scope}
\begin{scope}[blend group=screen]
\draw[top color=gray,bottom color=white] (-1,0.5) circle (0.125cm) ;
\draw[shading=ball,ball color=white]      (-1,0.5) circle (0.125cm) ;
\end{scope}
\begin{scope}[blend group=screen]
\draw[top color=gray,bottom color=white] (-0.5,0.5) circle (0.125cm) ;
\draw[shading=ball,ball color=white]      (-0.5,0.5) circle (0.125cm) ;
\end{scope}
\begin{scope}[blend group=screen]
\draw[top color=red,bottom color=red] (0,0.5) circle (0.125cm) ;
\draw[shading=ball,ball color=red]      (0,0.5) circle (0.125cm) ;
\end{scope}
\begin{scope}[blend group=screen]
\draw[top color=red,bottom color=red] (0.5,0.5) circle (0.125cm) ;
\draw[shading=ball,ball color=red]      (0.5,0.5) circle (0.125cm) ;
\end{scope}

\begin{scope}[blend group=screen]
\draw[top color=gray,bottom color=white] (-2,1) circle (0.125cm) ;
\draw[shading=ball,ball color=white]      (-2,1) circle (0.125cm) ;
\end{scope}
\begin{scope}[blend group=screen]
\draw[top color=gray,bottom color=white] (-1.5,1) circle (0.125cm) ;
\draw[shading=ball,ball color=white]      (-1.5,1) circle (0.125cm) ;
\end{scope}
\begin{scope}[blend group=screen]
\draw[top color=gray,bottom color=white] (-1,1) circle (0.125cm) ;
\draw[shading=ball,ball color=white]      (-1,1) circle (0.125cm) ;
\end{scope}
\begin{scope}[blend group=screen]
\draw[top color=red,bottom color=red] (-0.5,1) circle (0.125cm) ;
\draw[shading=ball,ball color=red]      (-0.5,1) circle (0.125cm) ;
\end{scope}
\begin{scope}[blend group=screen]
\draw[top color=red,bottom color=red] (0,1) circle (0.125cm) ;
\draw[shading=ball,ball color=red]      (0,1) circle (0.125cm) ;
\end{scope}
\begin{scope}[blend group=screen]
\draw[top color=red,bottom color=red] (0.5,1) circle (0.125cm) ;
\draw[shading=ball,ball color=red]      (0.5,1) circle (0.125cm) ;
\end{scope}

\begin{scope}[blend group=screen]
\draw[top color=gray,bottom color=white] (-2,1.5) circle (0.125cm) ;
\draw[shading=ball,ball color=white]      (-2,1.5) circle (0.125cm) ;
\end{scope}
\begin{scope}[blend group=screen]
\draw[top color=gray,bottom color=white] (-1.5,1.5) circle (0.125cm) ;
\draw[shading=ball,ball color=white]      (-1.5,1.5) circle (0.125cm) ;
\end{scope}
\begin{scope}[blend group=screen]
\draw[top color=red,bottom color=red] (-1,1.5) circle (0.125cm) ;
\draw[shading=ball,ball color=red]      (-1,1.5) circle (0.125cm) ;
\end{scope}
\begin{scope}[blend group=screen]
\draw[top color=red,bottom color=red] (-0.5,1.5) circle (0.125cm) ;
\draw[shading=ball,ball color=red]      (-0.5,1.5) circle (0.125cm) ;
\end{scope}
\begin{scope}[blend group=screen]
\draw[top color=red,bottom color=red] (0,1.5) circle (0.125cm) ;
\draw[shading=ball,ball color=red]      (0,1.5) circle (0.125cm) ;
\end{scope}
\begin{scope}[blend group=screen]
\draw[top color=red,bottom color=red] (0.5,1.5) circle (0.125cm) ;
\draw[shading=ball,ball color=red]      (0.5,1.5) circle (0.125cm) ;
\end{scope}

\begin{scope}[blend group=screen]
\draw[top color=gray,bottom color=white] (-2,2) circle (0.125cm) ;
\draw[shading=ball,ball color=white]      (-2,2) circle (0.125cm) ;
\end{scope}
\begin{scope}[blend group=screen]
\draw[top color=red,bottom color=red] (-1.5,2) circle (0.125cm) ;
\draw[shading=ball,ball color=red]      (-1.5,2) circle (0.125cm) ;
\end{scope}
\begin{scope}[blend group=screen]
\draw[top color=red,bottom color=red] (-1,2) circle (0.125cm) ;
\draw[shading=ball,ball color=red]      (-1,2) circle (0.125cm) ;
\end{scope}
\begin{scope}[blend group=screen]
\draw[top color=red,bottom color=red] (-0.5,2) circle (0.125cm) ;
\draw[shading=ball,ball color=red]      (-0.5,2) circle (0.125cm) ;
\end{scope}
\begin{scope}[blend group=screen]
\draw[top color=red,bottom color=red] (0,2) circle (0.125cm) ;
\draw[shading=ball,ball color=red]      (0,2) circle (0.125cm) ;
\end{scope}
\begin{scope}[blend group=screen]
\draw[top color=red,bottom color=red] (0.5,2) circle (0.125cm) ;
\draw[shading=ball,ball color=red]      (0.5,2) circle (0.125cm) ;
\end{scope}

\begin{scope}[blend group=screen]
\draw[top color=red,bottom color=red] (-2,2.5) circle (0.125cm) ;
\draw[shading=ball,ball color=red]      (-2,2.5) circle (0.125cm) ;
\end{scope}
\begin{scope}[blend group=screen]
\draw[top color=red,bottom color=red] (-1.5,2.5) circle (0.125cm) ;
\draw[shading=ball,ball color=red]      (-1.5,2.5) circle (0.125cm) ;
\end{scope}
\begin{scope}[blend group=screen]
\draw[top color=red,bottom color=red] (-1,2.5) circle (0.125cm) ;
\draw[shading=ball,ball color=red]      (-1,2.5) circle (0.125cm) ;
\end{scope}
\begin{scope}[blend group=screen]
\draw[top color=red,bottom color=red] (-0.5,2.5) circle (0.125cm) ;
\draw[shading=ball,ball color=red]      (-0.5,2.5) circle (0.125cm) ;
\end{scope}
\begin{scope}[blend group=screen]
\draw[top color=red,bottom color=red] (0,2.5) circle (0.125cm) ;
\draw[shading=ball,ball color=red]      (0,2.5) circle (0.125cm) ;
\end{scope}
\begin{scope}[blend group=screen]
\draw[top color=red,bottom color=red] (0.5,2.5) circle (0.125cm) ;
\draw[shading=ball,ball color=red]      (0.5,2.5) circle (0.125cm) ;
\end{scope}

\end{tikzpicture}
\vskip1cm
\caption{The Alice-Bob classical communication protocol in the case where the system is a 1D spin chain with open boundary conditions, evolving under a time-independent local Hamiltonian \(H\).
Alice and Bob have access to the individual qubits at opposite ends of the chain. 
Alice wants to send a bit $a \in \{0,1\}$ to Bob by applying $V_j$ if $a=1$, or doing nothing if $a=0$.
The system then evolves for a time \(t\), during which the influence of Alice's operation propagates through the system.
Bob then performs a measurement of \(W_k\) to try and determine the value of the bit \(a\).}
\label{Fig2}
\end{figure}

One can gain further insight into how information spreads in the system by examining the Heisenberg evolution of local observables.
Specifically, the Heisenberg time evolution of $W_k$ can be expanded as
\begin{align}
\label{BCHHeisenberg}
W_k(t)
=& W_k+\sum_{l=1}^{\infty}\frac{(it)^l}{l!}\underbrace{[H,\ldots,[H,[H}_{l\textrm{ times}},W_k]]\ldots].
\end{align}
If the Hamiltonian \(H\) only contains local (e.g., nearest-neighbour) interactions, then the number of sites upon which the nested commutators act non-trivially generally increases with the sum index $l$ in \eqref{BCHHeisenberg}.
The operator $W_k(t)$ is thus highly non-local in general.
However, for small $|t|$, the terms in the sum with large support (i.e., large \(l\)) are insignificant and only become significant as $|t|$ increases.
This gradual increase in the magnitude of the non-local terms describes how $W_k(t)$ spreads to become increasingly non-local over time.
The extent to which $W_k(t)$ fails to commute with an operator $V_j$ then depends on the magnitude of the terms in \eqref{BCHHeisenberg} which do not simply contain the identity operator on qubit $j$ (i.e., those with support on qubit $j$).

These considerations suggest quantifying the spread of information in terms of the size of commutators between local operators in the Heisenberg picture.
To this end, we define the \emph{OTOC} as follows.
Firstly, the \emph{squared-commutator OTOC} (SC-OTOC)~\cite{Swingle2018Unscrambling,Harrow2021Separation}, denoted \(C_{jk}(t)\), is the non-negative real number defined such that
\begin{equation}
\label{SqComOTOC}
C_{jk}(t) \equiv \frac{1}{2}\big\langle[V_j,W_k(t)]^\dagger[V_j,W_k(t)]\big\rangle_{\mathbb{I}^{\otimes n}\slash2^n}.
\end{equation}
Since $V_j$ and $W_k$ are both Hermitian and unitary, one can expand the commutators in \eqref{SqComOTOC} to write \(C_{jk}(t)=1-F_{jk}(t)\), where \(F_{jk}(t)\) is the real number given by
\begin{equation}
\label{OTOC}
F_{jk}(t) \equiv \big\langle V_j W_k(t) V_j W_k(t)\big\rangle_{\mathbb{I}^{\otimes n}\slash2^n}.
\end{equation}
The latter quantity, $F_{jk}(t)$, will be referred to as simply the \emph{OTOC}~\cite{Maldacena2016Bound,Hashimoto2017OTOC,Xu2023Tutorial}.

The OTOC and SC-OTOC are sometimes defined via an expectation value measured from an arbitrary state.
However, throughout this paper we will focus on the case of the maximally mixed state, which ensures that $F_{jk}(t)$ is real.
This choice is well-motivated in the literature on scrambling, since it provides a simple setting where the dynamics of information spreading can be studied in isolation from any pre-existing correlations in the state \cite{Xu2023Tutorial,Hayden2007Black}.
The absence of correlations in $\rho = \mathbb{I}^{\otimes n}\slash2^n$ also greatly simplifies the numerical simulations described in Section~\ref{Generating the data}.
Further, since the case of the maximally mixed state is related to the Frobenius norm of the commutator, it can be used to derive a bound which holds for any state $\rho$. 
Specifically, using H\"older's inequality, we can show that
\begin{equation}
\label{BoundForArbitraryState}
  \frac{1}{2}\big\langle[V_j,W_k(t)]^\dagger[V_j,W_k(t)]\big\rangle_{\rho} \leq 2^n \| \rho \|_{\infty}\, C_{jk}(t),
\end{equation}
where $C_{jk}(t)$ is the quantity from \eqref{SqComOTOC} defined with respect to the maximally mixed state, and $\| \rho \|_{\infty}$ is the operator norm of $\rho$ (i.e., the largest eigenvalue of $\rho$).
Note that if $\rho$ is pure, then $\| \rho \|_{\infty}= 1$.
In general, it can also be bounded by the purity $\| \rho \|_{\infty} \leq \sqrt{\text{tr} (\rho^2)}$.
Thus we see, from \eqref{BoundForArbitraryState}, that the choice of the state $\mathbb{I}^{\otimes n}\slash2^n$ in \eqref{SqComOTOC} captures general properties of the information spreading, since then $C_{jk}(t)$ can be used to infer bounds for any other state.

Associated with the OTOC and the SC-OTOC are three distinct time regimes---short, intermediate, and long---each characterised by different time-dependent behaviours. 
In the short-time regime, the SC-OTOC grows, with the growth rate depending on the system's characteristics. 
For chaotic systems, it is expected to grow exponentially. 
The intermediate-time regime marks a transition from growth to decay, occurring around the so-called \emph{scrambling time} of the system. 
Finally, in the long-time regime, the SC-OTOC decays to a constant value, possibly with some superposed oscillations. 
All three time regimes have been studied for various reasons in the literature (see Section 2 of~\cite{GarciaMata2023Chaos} for an overview and references). 
Accordingly, methods for calculating OTOCs in any of the three regimes may be of practical use, but we first need to know where the boundaries of each regime lies. 
For qubit models, such as those considered in this work, the scrambling time scales at most linearly in the size of the system, depending on the locality of the Hamiltonian (see the fourth paragraph of the Introduction in~\cite{Maldacena2016Bound}).
As a result, we know that the short-time regime and part of the intermediate-time regime (which contains the scrambling time) must occur before a time which is linear in the system size.
We will make use of this in Section~\ref{The parameterised sets of Hamiltonians}.

Equation \eqref{eq:cl_comm} illustrates how the OTOC bounds the transmission of classical information in a quantum system, but the OTOC can also be used to study the transmission of quantum information.
This is often investigated in the setup of the Hayden-Preskill protocol, which is a toy model for studying the recovery of information from black holes~\cite{Hayden2007Black}.
Similar to above, we consider an isolated $n$-qubit system, which we denote by $\mathcal{S}$.
Alice has access to a qubit $q_A$ in \(\mathcal{S}\), and Bob has access to a qubit $q_B$, also in \(\mathcal{S}\).
Instead of applying an operation to $q_A$ as before, Alice will prepare $q_A$ in a state which is maximally entangled with another reference system $R_1$,
\begin{equation}
  \ket{\psi}_{q_A \cup R_1} = \tfrac{1}{\sqrt{2}}( \ket{00} + \ket{11}).
\end{equation}
The remainder of the system, \(\mathcal{S}\setminus q_A\), is prepared in the maximally mixed state, $\rho_{\mathcal{S} \setminus q_A} = \mathbb{I}^{\otimes n-1} / 2^{n-1}$.
Since $\mathcal{S}$ is a closed system, the reference system $R_1$ remains maximally entangled with $\mathcal{S}$ as the system evolves in time.
Initially, $R_1$ is only entangled with the subsystem $q_A$, but as $q_A$ interacts with the rest of the system, the entanglement with $R_1$ generally spreads into the non-local degrees of freedom of $\mathcal{S}$.
One can track the spreading of entanglement by examining to what extent Bob is able to recover the entanglement with $R_1$ from another qubit, such as $q_B$, in \(\mathcal{S}\).

In the Hayden-Preskill protocol, Bob is also given access to a reference system introduced to purify the subsystem $\mathcal{S} \setminus q_A$~\cite{Hayden2007Black}.
Since $\mathcal{S} \setminus q_A$ begins in a maximally mixed state, it can be purified with another $(n-1)$-qubit reference system $R_2$, with which $\mathcal{S} \setminus q_A$ forms $n-1$ Bell pairs,
\begin{equation}
  \ket{\psi}_{\left(\mathcal{S}\setminus q_A\right)\cup R_2 } = \tfrac{1}{2^{(n-1)/2}} ( \ket{00} + \ket{11} )^{\otimes (n-1)}.
\end{equation}
Figure~\ref{Fig3} shows a tensor network representation of the various systems involved, where \(\scalebox{1.4}{\(\circ\)}\) indicates that the subsystems form a collection of Bell pairs.
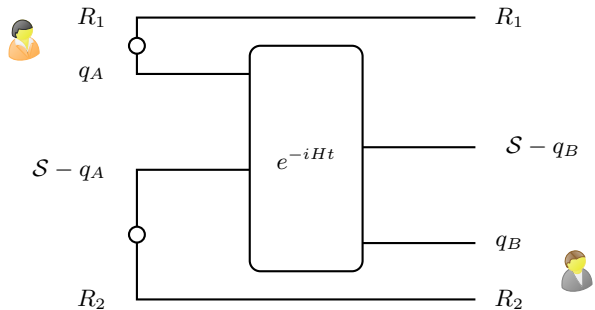
\begin{figure}
\centering
\begin{tikzpicture}[scale=1.5]
\draw[thick, rounded corners] (0,0) -- ++(1,0) -- ++(0,2) -- ++(-1,0) -- cycle;
\node at (0.5,1) {\(e^{-i H t}\)};
\draw[thick] (0,1.75) -- ++(-1,0) -- ++(0,0.5) -- ++(3,0);
\draw[color=black,thick, fill=white] (-1,2) circle (2pt);
\node at (-1.4,2.25) {\(R_1\)};
\node at (2.3,2.25) {\(R_1\)};
\node at (-1.4,1.75) {\(q_A\)};
\draw[thick] (0,0.9) -- ++(-1,0) -- ++(0,-0.9-0.25) -- ++(3,0);
\draw[color=black,thick, fill=white] (-1,0.325) circle (2pt);
\node at (-1.6,0.9) {\(\mathcal{S}\setminus q_A\)};
\node at (-1.4,-0.25) {\(R_2\)};
\node at (2.3,-0.25) {\(R_2\)};
\draw[thick] (1,0.25) -- ++(1,0);
\node at (2.3,0.25) {\(q_B\)};
\draw[thick] (1,1.1) -- ++(1,0);
\node at (2.6,1.1) {\(\mathcal{S}\setminus q_B\)};
\node[minimum size=0.4cm,alice,female,skin=yellow!90!black] (A) at (-2.0,2.05) {};
\node[minimum size=0.4cm,bob,skin=yellow!90!black] (B) at (2.9,0) {};
\end{tikzpicture}
\caption{Tensor network representation of the systems involved in the Alice-Bob quantum communication protocol. Initially, subsystems \(R_1\) and \(q_A\) form a Bell state, while subsystems \(\mathcal{S}\setminus q_A\) and \(R_2\) form \(n-1\) Bell states. The unitary operator \(e^{-i H t}\) is then applied to the system \(\mathcal{S}\). Bob's aim is to detect and quantify the entanglement with $R_1$ that has spread from $q_A$ to $q_B$.}
\label{Fig3}
\end{figure}
The amount of information that Bob has about $R_1$ can then be quantified by the mutual information,
\begin{equation}
\label{MutualInformation}
I(R_1:q_B \cup R_2) \equiv S(R_1)+S(q_B \cup R_2)-S(R_1 \cup q_B \cup R_2),
\end{equation}
where $S(A)\equiv -\Tr ( \rho_A \log_2 \rho_A )$ is the von Neumann entropy of the reduced state $\rho_A$ of a general subsystem $A$.
The mutual information $I(R_1:q_B \cup R_2)$ is importantly also related to the coherent information $I( R_1 \rangle q_B \cup R_2)$ of the induced channel from $q_A$ to $q_B \cup R_2$, which in turn is related to the quantum capacity of the channel~\cite{Wilde2013Quantum,Nielsen2000Quantum}.
Indeed, from a simple application of the definition of $I( R_1 \rangle q_B \cup R_2)$ it is easy to verify that
\begin{align}
I( R_1 \rangle q_B \cup R_2) 
&= I(R_1:q_B \cup R_2) - 1,
\end{align}
where we have used $S(R_1) = 1$. 

How can the OTOC be used to understand the spreading of entanglement?
To this end, it can be shown (see Section IV.A in~\cite{Xu2023Tutorial}) that the OTOC provides a lower bound on the mutual information,
\begin{align}
\nonumber
&I(R_1:q_B\cup R_2) \\
\label{MutualInformationAndOTOC}
&\geq 4 - \log_2 \left( 7 + \sum_{\substack{V,W\\\in\{X,Y,Z\}}} \frac{1}{2^n} \Tr\left(\left(V_{q_B} W_{q_A}(-t)\right)^2\right)\right).
\end{align}
The sum appearing in the right-hand side of \eqref{MutualInformationAndOTOC} is a sum of OTOCs for different choices of $V,W\in\{X,Y,Z\}$ acting on qubits $q_B$ and $q_A$, respectively.
From this inequality, we can clearly see the role that the OTOC plays in detecting how much information about $R_1$ has propagated from $q_A$ to $q_B$.
For instance, at $t=0$ the operators \(W_{q_A}(0)\) and \(V_{q_B}\) commute, so each term in the sum above is equal to $1$, yielding a trivial lower bound of $I(R_1:q_B\cup R_2) \geq 0$.
However, as time progresses, the support of $W_{q_A}(-t)$ will generally grow to include qubit $q_B$ so that \(W_{q_A}(-t)\) and \(V_{q_B}\) no longer commute.
When this happens, the OTOC values begin to decay from 1, causing the right-hand side of \eqref{MutualInformationAndOTOC} to be strictly greater than $0$.
This allows Bob to infer that at least some information about $R_1$ has reached $q_B$, and gives him a lower bound on how much information he possesses.

Beyond simply determining how much information has reached $q_B$, in~\cite{Yoshida2017Efficient} an efficient decoding procedure was proposed for recovering the entanglement with $R_1$, by applying a unitary operation on $q_B \cup R_2$ to reconstruct a Bell state on $R_1 \cup q_B$.
The fidelity of the decoding protocol is closely related to the sum of OTOCs appearing in \eqref{MutualInformationAndOTOC}.
Because of the importance of this sum in determining the spreading of quantum information through a quantum many-body system, in the following sections we will investigate how to approximate its value in different systems using kernel machines.

\section{Methods}
\label{Section3}

The manner in which information propagates through the system \(\mathcal{S}\) depends on the details of its dynamics, which is specified by the time-independent Hamiltonian \(H\). 
This means that in order to compute OTOCs and related quantities, such as the sum of OTOCs in \eqref{MutualInformationAndOTOC}, generally we need to simulate the evolution of the system \(\mathcal{S}\).
This amounts to calculating \(e^{-i H t}\). 
Such a calculation becomes computationally intensive for a generic \(H\) as the size of the system increases, and needs to be performed for every \(H\) and value of time \(t\) that one wishes to investigate. 
In spite of this, is there some property of many-body Hamiltonians that can be used to determine OTOCs without the need to simulate the full dynamics? 
Here we investigate whether such properties may be captured by a ML model which makes use of KMs.
In this section we describe the methods used to conduct the investigation. 
This includes a description of the parameterised sets of Hamiltonians we consider, how the task is framed as a learning problem, the specific kernels which are trained and applied to the data, and the techniques used to generate the data.

\subsection{The parameterised sets of Hamiltonians}
\label{The parameterised sets of Hamiltonians}

The learning problem addressed in this work requires us to choose a parameterised set of Hamiltonians \(\{H(x):x\in\R^d\}\) to investigate. 
This determines both the set of possible dynamics for $\mathcal{S}$, and the explicit form of the OTOCs considered in the learning problem. 
With this in mind, we consider four distinct choices of the parameterised set of \(n\)-qubit Hamiltonians. Each of the sets forms a 3-dimensional (i.e., \(d=3\)) subspace of Hermitian operators describing a collection of 1D quantum systems. 
The parameterised sets are chosen to contain systems which are widely considered in condensed matter physics.

The first set of Hamiltonians, denoted \(\mathscr{H}_1=\{H_1(x):x\in\R^3\}\), has elements \(H_1(x)\) defined by
\begin{align}
\nonumber
H_{1}(x) &= x_1\left(\sum_{i=1}^{n}X_i\right)+x_2\left(\sum_{j=1}^{n-1}X_jX_{j+1}\right)\\
\label{HaldaneHamiltonians}
&\qquad+x_3\left(\sum_{k=1}^{n-2}Z_kX_{k+1}Z_{k+2}\right)
\end{align}
for all \(x = (x_1,x_2,x_3) \in\R^3\). 
The ground states of the Hamiltonians in \(\mathscr{H}_1\) exhibit a \(\mathbb{Z}_2\times\mathbb{Z}_2\) symmetry-protected topological phase that is considered in the machine learning articles~\cite{Cong2019Convolutional,Wu2023Phase}, and motivates our consideration of the set here.

The second set, denoted \(\mathscr{H}_2=\{H_2(x):x\in\R^3\}\), is often called the XYZ Heisenberg model \cite{Takhtadzhan1979Heisenberg,Pinheiro2013Heisenberg,Rota2018Heisenberg,Mohamed2021Heisenberg} and has elements \(H_2(x)\) defined by
\begin{align}
\nonumber
H_2(x) &= x_1\left(\sum_{i=1}^{n-1}X_iX_{i+1}\right)+x_2\left(\sum_{j=1}^{n-1}Y_jY_{j+1}\right)\\
\label{HeisenbergHamiltonians}
&\qquad\qquad+x_3\left(\sum_{k=1}^{n-1}Z_kZ_{k+1}\right)
\end{align}
for all \(x = (x_1,x_2,x_3) \in\R^3\). The Hamiltonians in \(\mathscr{H}_2\) serve as quantum models of ferromagnetic and anti-ferromagnetic materials. 
Specifically, they describe 1D spin-chains with nearest-neighbour spin couplings.
In the case $x_1, x_2, x_3 > 0$, this model captures the tendancy of neighbouring atoms in anti-ferromagnetic materials to align their atomic magnetic moments in an anti-parallel fashion when occupying low-energy states.
For $x_1, x_2, x_3 < 0$, the spins tend to align, which models ferromagnetic materials.

The third set, denoted \(\mathscr{H}_3=\{H_3(x):x\in\R^3\}\), is a generalised version of the Majumdar-Ghosh model \cite{Majumdar1969On1,Majumdar1969On2,Caspers1984Majumdar,Chhajlany2007Majumdar} and has elements \(H_3(x)\) defined by
\begin{align}
\nonumber
H_3(x)=&x_1\left(\sum_{i=1}^{n-1}X_iX_{i+1}+0.5\sum_{j=1}^{n-1}X_jX_{j+2}\right)\\
\nonumber
+&x_2\left(\sum_{i=1}^{n-1}Y_iY_{i+1}+0.5\sum_{j=1}^{n-1}Y_jY_{j+2}\right)\\
\label{MajumdarGhoshHamiltonians}
+&x_3\left(\sum_{i=1}^{n-1}Z_iZ_{i+1}+0.5\sum_{j=1}^{n-1}Z_jZ_{j+2}\right)
\end{align}
for all \(x = (x_1,x_2,x_3) \in\R^3\). Similar to \(\mathscr{H}_2\), the Hamiltonians in \(\mathscr{H}_3\) provide quantum models of ferromagnetic and anti-ferromagnetic materials. 
They also describe 1D spin-chains with nearest-neighbour spin couplings, in addition to next-nearest neighbour spin-couplings with half the coupling strength. 
This is a slightly more realistic model of anti-ferromagnetic and ferromagnetic materials which does not assume that atoms only interact with their nearest-neighbours.

The fourth and final set, denoted \(\mathscr{H}_4=\{H_4(x):x\in\R^3\}\), is often called the mixed-field Ising chain \cite{Weng1996MixedField,Wurtz2020MixedField,Chiba2024MixedField} and has elements defined by
\begin{align}
\nonumber
H_4(x)&=x_1\left(\sum_{i=1}^{n}X_i\right)+x_2\left(\sum_{i=1}^{n}Z_i\right)\\
\label{MixedFieldIsingHamiltonians}
&\qquad+x_3\left(\sum_{k=1}^{n-1}Z_kZ_{k+1}\right)
\end{align}
for all \(x = (x_1,x_2,x_3) \in\R^3\). 
The Hamiltonians in this set describe 1D spin-chains with nearest-neighbour \(Z\)-component spin couplings under the influence of an external homogenous magnetic field with non-zero components in the physical \(x\) and \(z\) directions.

The parameterised Hamiltonians we consider all satisfy the property $H(tx) = t H(x)$ for any $t \in \mathbb{R}$, where \(H\) denotes one of the parameterised Hamiltonians defined in equations \eqref{HaldaneHamiltonians}--\eqref{MixedFieldIsingHamiltonians}. 
Using this property, but replacing \(t\mapsto\|x\|\) and \(x\mapsto\hat{x}\) (where \(\hat{x}=x\slash\|x\|\) is the unit vector parallel to \(x\)), we see that \(H(x)=\|x\|H(\hat{x})\), which further implies that 
\begin{align}
\label{NormOfParameterVectorIsTime}
e^{-iH(x)}=e^{-i\|x\|H(\hat{x})}.
\end{align}  
Equation \eqref{NormOfParameterVectorIsTime} shows how the operator \(e^{-iH(x)}\) is equivalent to the time-evolution operator describing evolution under the Hamiltonian \(H(\hat{x})\) for a time \(t=\|x\|\).
This holds for all parameterised Hamiltonians defined in \eqref{HaldaneHamiltonians}--\eqref{MixedFieldIsingHamiltonians}, and means that we can capture changes in time by scaling the parameter vector $x$, which changes $\|x\|$ but leaves $H\left(\hat{x}\right)$ fixed (up to a sign if we scale by a negative constant). 
Hence, choosing a range of evolution times for the associated quantum systems corresponds to choosing the parameter vectors with norms lying in the same range.

Accordingly, it is most natural to sample the input data (i.e., the parameter vectors \(x\)) from a ball in \(\R^3\) centered at the origin, whose radius determines the maximum evolution time.
The goal is then to choose an appropriate radius for each of the balls which is both feasible to simulate and captures an interesting range of evolution times. 
Specifically, we chose the radius of the ball from which the input data is sampled to scale linearly with the size of the underlying quantum system. 
As discussed in Section \ref{Information Spreading and Out-of-Time-Ordered Correlators}, the scrambling time for qubit models scales linearly in the size of the system, so we expect that a linear scaling (in the system size) of the maximum evolution time will be sufficient to capture both the short-time regime and possibly part of the intermediate-time regime.

For \(\mathscr{H}_1\), \(\mathscr{H}_2\) and \(\mathscr{H}_3\), we chose to uniformly sample input data from the ball of radius \(n\) centered at the origin. 
This means that the effective evolution time \(t\) of the corresponding \(n\)-qubit quantum systems is bounded above by the system size \(t\leq t_{max}=n\). 
For \(\mathscr{H}_4\), we chose to uniformly sample input data from the ball of radius \(2n\) centered at the origin, meaning that \(t\leq t_{max}=2n\).
Increasing the radius of the ball for \(\mathscr{H}_4\) was motivated primarily by the observation that sampling from the ball of radius \(n\) did not produce much variation in the value of the OTOCs. 
Informally, one possible explanation for this is that the only term in \eqref{MixedFieldIsingHamiltonians} which contributes to the spread of information is the term multiplied by \(x_3\), and this term only transfers information about \(Z\) to neighbouring qubits.
In contrast, the terms in the Hamiltonians in \(\mathscr{H}_1\), \(\mathscr{H}_2\) and \(\mathscr{H}_3\) that contribute to the spread of information collectively transfer at least \(Z\) and \(X\) information. 
Accordingly, it seems reasonable that the systems in \(\mathscr{H}_4\) require a greater maximum evolution time to observe changes in the correlations captured by the OTOC.

\subsection{The learning problem}
\label{The learning problem}

In order to formulate the problem as a learning task, we consider one of the parameterised sets of Hamiltonians \(\mathscr{H}=\{H(x):x\in\R^d\}\), and set up ML models to learn two distinct functions of the Hamiltonian parameters $x \in \mathbb{R}^d$.
The first function, denoted \(\mathscr{O}_{XZ}:\R^d\to\R\), captures the XZ-OTOC when \(j=n\) and \(k=1\) (i.e., $W=Z$ on the first qubit and $V=X$ on the last qubit in the 1D spin-chain). Specifically, \(\mathscr{O}_{XZ}\) is defined such that
\begin{equation}
\label{XZOTOC}
\mathscr{O}_{XZ}(x)=\frac{1}{2^n}\Tr\left(X_n\mathcal{Z}_1(x)X_n\mathcal{Z}_1(x)\right)
\end{equation}
for all \(x\in\R^d\), where we denote by \(\mathcal{Z}_1(x)\equiv e^{iH(x)}Z_1e^{-iH(x)}\) the operator \(Z_1\) evolved in the Heisenberg picture under \(H(\hat{x})\) for \(\|x\|\) units of time (see Section \ref{The parameterised sets of Hamiltonians}). The second function, denoted \(\mathscr{O}_{Sum}:\R^d\to\R\), captures the sum of OTOCs in \eqref{MutualInformationAndOTOC} when \(q_A\) and \(q_B\) lie at opposite ends of the 1D spin chain, and is given by
\begin{equation}
\label{SumOfOTOCs}
\mathscr{O}_{Sum}(x)=\sum_{V,W\in\{X,Y,Z\}}\frac{1}{2^n}\Tr\left(V_n\mathcal{W}_1(x)V_n\mathcal{W}_1(x)\right),
\end{equation}
where we denote \(\mathcal{W}_1(x)\equiv e^{-iH(x)}W_1e^{iH(x)}\). 
Note that the signs of the exponents in the definition of \(\mathcal{W}_1(x)\) have been reversed to account for the minus sign in the arguments of the Heisenberg operators in \eqref{MutualInformationAndOTOC}.

The learning task is then to find a model which accurately learns \(\mathscr{O}_{XZ}\) or \(\mathscr{O}_{Sum}\) from sampled data (we describe our procedure for generating data in Section~\ref{Generating the data}).
In total, we consider 64 instances of the learning task: one for each choice of \(\mathscr{O}_{XZ}\) or \(\mathscr{O}_{Sum}\), with one of the four parameterised sets of Hamiltonians \(\mathscr{H}\) from Section~\ref{The parameterised sets of Hamiltonians}, and a system size of $n\in\{5, 10, 15, 20, 25, 30, 35, 40\}$ qubits.
For each of the tasks, we trial six different kernel functions (see Section~\ref{The kernels}) and compare their performance by determining how well the models perform on unseen test data.
Thus, the generated data is split into training and testing data, 80\% for training and 20\% for testing. 
Next we perform a 10-fold cross-validation (see Section 4.5 of~\cite{Mohri2018Foundations}) on the training data to find suitable hyperparameter values for the various kernels using a grid search.
The hyperparameter values which we trial are listed in Table \ref{HyperparameterValues} of Appendix \ref{Tables of hyperparameter values}. 
Of the trialled values, the best value of each hyperparameter is selected based on the average coefficient of determination (\(R^2\)) over all ten folds of the training data. These hyperparameter values are then used to train ML models on the complete training dataset, which are subsequently applied to the testing data. The best trialled hyperparameter values for every instance of the learning task can be found in Tables \ref{BestHyperparametersH1}, \ref{BestHyperparametersH2}, \ref{BestHyperparametersH3}, and \ref{BestHyperparametersH4} of Appendix \ref{Tables of hyperparameter values}.

\subsection{The kernels}
\label{The kernels}

We now provide definitions of the six kernels which are used to train ML models and make predictions on the testing sets for each problem instance. 
Some of the kernels depend on hyperparameters \(\gamma\in\R\), \(c_0\in\R\) and \(d\in\N\) which are tuned via a 10-fold cross-validation on the training data (see Section \ref{The learning problem}).

The first kernel is the \emph{linear kernel}, denoted \(\mathcal{K}_{\textrm{lin}}:\mathcal{X}\times\mathcal{X}\to\R\), which is defined by the standard Euclidean inner product of its arguments,
\begin{equation}
\label{LinearKernel}
\mathcal{K}_{\textrm{lin}}(x,x^\prime) = \langle x,x^\prime\rangle.
\end{equation}
The second is the \emph{polynomial kernel}, denoted \(\mathcal{K}_{\textrm{poly}}:\mathcal{X}\times\mathcal{X}\to\R\), defined such that
\begin{equation}
\label{PolynomialKernel}
\mathcal{K}_{\textrm{poly}}(x,x^\prime) = \big(\gamma\langle x,x^\prime\rangle+c_0\big)^d.
\end{equation}
The third is the \emph{radial basis function} (RBF) \emph{kernel}, denoted \(\mathcal{K}_{\textrm{RBF}}:\mathcal{X}\times\mathcal{X}\to\R\), defined such that
\begin{equation}
\label{RBFKernel}
\mathcal{K}_{\textrm{RBF}}(x,x^\prime) = e^{-\gamma\|x-x^\prime\|^2},
\end{equation}
where \(\|\cdot\|\) is the standard Euclidean norm. 
The fourth is the \emph{Laplacian kernel}, denoted \(\mathcal{K}_{\textrm{Lap}}:\mathcal{X}\times\mathcal{X}\to\R\), defined such that
\begin{equation}
\label{LaplacianKernel}
\mathcal{K}_{\textrm{Lap}}(x,x^\prime) = e^{-\gamma\|x-x^\prime\|_1},
\end{equation}
where \(\|\cdot\|_1\) is the \(l_1\) norm. 
The fifth is the \emph{sigmoid kernel}, denoted \(\mathcal{K}_{\textrm{sig}}:\mathcal{X}\times\mathcal{X}\to\R\), defined such that
\begin{equation}
\label{SigmoidKernel}
\mathcal{K}_{\textrm{sig}}(x,x^\prime) = \tanh\big(\gamma\langle x,x^\prime\rangle+c_0\big).
\end{equation}
And the sixth kernel is the \emph{cosine kernel}, denoted \(\mathcal{K}_{\textrm{cos}}:\mathcal{X}\times\mathcal{X}\to\R\), defined such that
\begin{equation}
\label{CosineKernel}
\mathcal{K}_{\textrm{cos}}(x,x^\prime) = \frac{\langle x,x^\prime\rangle}{\|x\|\,\|x^\prime\|}.
\end{equation}

Each of these kernels are commonly used in ML algorithms that apply KMs to regression problems, which lead to our use of them here. Additionally, the Scikit Learn Python package~\cite{Pedregosa2011Scikit} has a function \texttt{pairwise\_kernels} for conveniently calculating the associated kernel matrices. 
Using this function, the 10-fold cross-validation for the regularisation strength \(\lambda\) and other hyperparameters \(\gamma\), \(c_0\) and \(d\) is performed.

\subsection{Generating the data}
\label{Generating the data}

To generate the data for this work (see Data Availability section to access the data), we use an efficient numerical algorithm based on matrix product operators (MPOs)~\cite{Xu2019Accessing}, which are classical tensor networks well suited to simulating local 1D quantum systems.
Specifically, we sample inputs \(x\) from subsets of \(\R^3\) (see Section \ref{The parameterised sets of Hamiltonians}), which effectively selects a Hamiltonian $H(x)$ from one of the sets $\mathscr{H}$.
We then directly calculate the associated labels determined by \(\mathscr{O}_{XZ}\) and \(\mathscr{O}_{Sum}\) with the MPO-based algorithm.
This involves representing each of the operators under the trace in \eqref{XZOTOC} and \eqref{SumOfOTOCs} as an MPO, evolving them in the Heisenberg picture using time-evolving block decimation (TEBD) \cite{Vidal2003Efficient,Vidal2004Efficient}, then contracting the MPOs together to calculate the OTOC.

Applying the algorithm requires us to choose an appropriate maximum bond dimension \(\chi\) for the MPOs.
A larger \(\chi\) makes the algorithm more computationally expensive to implement, but also means that the computed numerical values are generally more accurate.
Accordingly, we need to pick a bond dimension that both fits within the computational budget, and leads to reasonably accurate labels for the data points. 
In Figure \ref{ConvergenceInChiFigure} of Appendix \ref{Convergence in the bond dimension}, we provide plots of some of the labels for the 40-qubit systems against the value of \(\chi\) used to calculate them. 
This shows how well the labels have converged with respect to the bond dimensions selected for producing the datasets.
In an attempt to balance the trade-off between compute costs and convergence, we chose to use a maximum bond dimension \(\chi\) of 110, 140, 110 and 150 to calculate the values of the function \(\mathscr{O}_{XZ}\) with \(\mathscr{H}_1\), \(\mathscr{H}_2\), \(\mathscr{H}_3\) and \(\mathscr{H}_4\), respectively.
Similarly, we chose to use a maximum bond dimension \(\chi\) of 70, 90, 70 and 100 to calculate the values of the function \(\mathscr{O}_{Sum}\) with \(\mathscr{H}_1\), \(\mathscr{H}_2\), \(\mathscr{H}_3\) and \(\mathscr{H}_4\), respectively.

In addition to this selection of bond dimensions, we use a Trotter step size of \(\Delta t=0.05\) to perform the Heisenberg picture time evolution of the MPOs using TEBD. 
This seems reasonable when one considers that the magnitude of the error terms involved in the algorithm scale as \(\mathcal{O}\left((\Delta t)^3\right)\). 
We also make use of a time-splitting method (see Section VI.B.2 in~\cite{Xu2023Tutorial}) to improve the numerical accuracy of the algorithm by reducing the amount of time that the MPOs need to be evolved for. 

Overall, the runtime for calculating the OTOC with this method scales as
\begin{align}
\mathcal{O}\left(n\chi^3t\slash\Delta t+n\chi^4\right),
\label{MPOruntime}
\end{align} 
where $n$ is the size of the quantum system and $t$ is the evolution time. The first term in \eqref{MPOruntime} comes from performing TEBD to evolve the MPOs (see Lemma 2 of~\cite{Vidal2003Efficient}), and the second term comes from contracting the MPOs together to calculate the OTOC (see Section III.A.3 in~\cite{Stoudenmire2010Minimally}).
Compared with the exponential runtime scaling for exact diagonalisation, the runtime for the MPO-based method appears to scale just linearly with the system size $n$.
However as the system size increases, we increase the maximum evolution time $t_{max}$ linearly (see Section \ref{The parameterised sets of Hamiltonians}), and typically a larger $\chi$ is necessary to maintain accuracy, which effectively results in a high degree polynomial runtime scaling in $n$.

Using this method, for each qubit number considered ($n\in\{5,10,15,20, 25,30,35,40\}$), each parameterised set of Hamiltonians, and both functions \(\mathscr{O}_{XZ}\) and \(\mathscr{O}_{Sum}\), we generate 1250 labelled data samples, 1000 for training and 250 for testing.
The random samples of the inputs are drawn using the same random seed. Thus, for each qubit number, all datasets contain the same input parameters. The values of the functions are then computed using the ITensor Julia library~\cite{Fishman2022ITensor} to implement the MPO-based time-splitting algorithm. 
It should be noted though, that even with the MPO-based algorithm, producing the data in this work required $\sim 10^6$ CPU hours, and thus required access to a supercomputer.

\section{Results}
\label{Section4}

\begin{figure*}
\centering
\includegraphics[width=\textwidth]{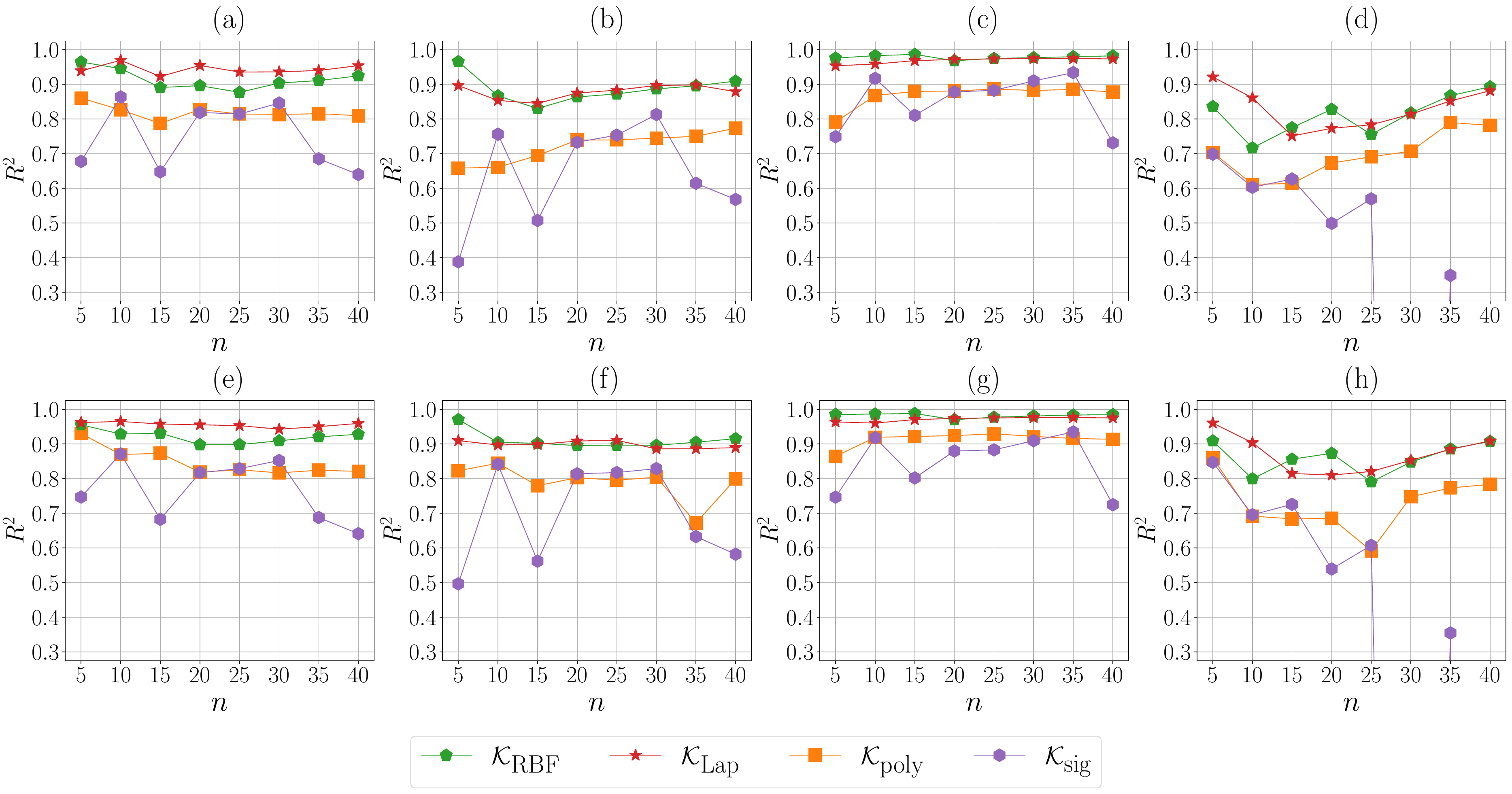}
\caption{Coefficient of determination ($R^2$) for the models trained on all 1000 training datapoints with the best hyperparameter values trialled in the cross-validation, making predictions on the testing sets. 
The top row of plots shows the results for the datasets with labels determined by \(\mathscr{O}_{XZ}\) with (a) \(\mathscr{H}_1\), (b) \(\mathscr{H}_2\), (c) \(\mathscr{H}_3\) and (d) \(\mathscr{H}_4\).
The bottom row of plots shows the results for the datasets with labels determined by \(\mathscr{O}_{Sum}\) with (e) \(\mathscr{H}_1\), (f) \(\mathscr{H}_2\), (g) \(\mathscr{H}_3\) and (h) \(\mathscr{H}_4\).}
\label{Results_R2}
\end{figure*}

We now report the main numerical results of this study. 
In Figure \ref{Results_R2} we plot the coefficient of determination (\(R^2\)) performance metric for the RBF, Laplacian, polynomial, and sigmoid kernels based on predictions made on the testing sets for all problem instances. 
The linear and cosine kernels are not included in the figure because they perform so poorly that adjusting the plot ranges to include them would obscure details in the important \(R^2\in[0.8,1]\) range.
Table \ref{Results_Summary} presents a summary of the \(R^2\), root-mean-squared-error (RMSE), and mean-absolute-error (MAE) performance metrics obtained with the Laplacian and RBF kernels, which consistently achieved the best performance metrics of all the kernels we consider.
Additional numerical results, including the \(R^2\), RMSE, and MAE values for all six kernels, making predictions on both the training and testing sets, are provided in Tables \ref{ResultsOXZH1}, \ref{ResultsOXZH2}, \ref{ResultsOXZH3}, \ref{ResultsOXZH4}, \ref{ResultsOSumH1}, \ref{ResultsOSumH2}, \ref{ResultsOSumH3}, and \ref{ResultsOSumH4} of Appendix \ref{Tables of numerical results}.
All results in Appendix \ref{Tables of numerical results} (which includes the \(R^2\) values in Figure \ref{Results_R2}, and the results which are summarised in Table \ref{Results_Summary}) are obtained using models trained on all 1000 training datapoints with the best hyperparameter values trialled in the cross-validation (see Section \ref{The learning problem}). 
In Figure \ref{ConvergenceInNumTrainData}, we also provide results showing how the kernels perform when supplied less than 1000 training datapoints.

Since the linear and cosine kernels perform poorly, we will not discuss their performance in detail, but instead provide reasons explaining their poor performance.
Specifically, we can combine  \eqref{RTmodel} and \eqref{LinearKernel} to show that the linear kernel can only be used to learn models of the form,
\begin{align}
f(x)=\left\langle x,\sum_{i=1}^{M}\alpha_i\mathbf{x}_i\right\rangle.
\label{LinearModel}
\end{align}
Similarly, combining \eqref{RTmodel} and \eqref{CosineKernel} shows that the cosine model can only be used to learn models of the form,
\begin{align}
f(x)=\left\langle\frac{x}{\|x\|},\sum_{i=1}^{M}\alpha_i\frac{\mathbf{x}_i}{\|\mathbf{x}_i\|}\right\rangle.
\label{CosineModel}
\end{align}
From \eqref{LinearModel} is is clear that the linear kernel can only be used to express linear functions, while \eqref{CosineModel} shows that the cosine kernel can only be used to express functions which are independent of the Euclidean norm of their argument. 
However, from \eqref{XZOTOC} and \eqref{SumOfOTOCs}, one can see that $\mathscr{O}_{XZ}$ and $\mathscr{O}_{Sum}$ are highly non-linear and depend substantively on the Euclidean norm of their argument (see Section \ref{The parameterised sets of Hamiltonians}). 
For this reason, we should not expect the linear and cosine kernels to perform well on the learning problems considered in this work.

On the other hand, the polynomial and sigmoid kernels perform reasonably well. 
As can be seen in Figure \ref{Results_R2}, at least one of these kernels achieves an \(R^2\) score on the testing sets exceeding 0.8 in 37 of the 64 problem instances, with similar RMSE and MAE values to the Laplacian and RBF kernels (see Appendix \ref{Tables of numerical results}). 
Note though, that the sigmoid kernel performs poorly at predicting both \(\mathscr{O}_{XZ}\) and \(\mathscr{O}_{Sum}\) with \(\mathscr{H}_4\) for system sizes of \(n\in\{30, 40\}\) (see Figure \ref{Results_R2}(d) and \ref{Results_R2}(h)), achieving large negative \(R^2\) scores which fall outside the plot ranges.
Despite the overall good performance of the polynomial and sigmoid kernels, it is clear from Figure \ref{Results_R2} that the Laplacian and RBF kernels consistently perform the best.
For this reason we focus the rest of this section primarily on reporting the performance of the Laplacian and RBF kernels.

\begin{table*}
\begin{adjustbox}{center,max width=\textwidth}
\begin{tabular}{| c | c | c | c  c  c | c  c  c | c  c  c |}

\cline{4-12}

\multicolumn{3}{c}{}&\multicolumn{3}{|c|}{\(\boldsymbol{R^2}\)}&\multicolumn{3}{c|}{\textbf{RMSE}}&\multicolumn{3}{c|}{\textbf{MAE}}\\

\cline{1-12}
\multicolumn{1}{|c|}{\textbf{Function}}&\multicolumn{1}{c|}{\textbf{Kernel}}&\multicolumn{1}{c|}{\textbf{Hamiltonians}}&\multicolumn{1}{c|}{Min}&\multicolumn{1}{c|}{Mean}&\multicolumn{1}{c|}{SD}&\multicolumn{1}{c|}{Max}&\multicolumn{1}{c|}{Mean}&\multicolumn{1}{c|}{SD}&\multicolumn{1}{c|}{Max}&\multicolumn{1}{c|}{Mean}&\multicolumn{1}{c|}{SD}\\

\cline{1-12}


\multirow{8}{*}{\(\mathscr{O}_{XZ}\)}&\multirow{4}{*}{$\mathcal{K}_{\textrm{Lap}}$}&\(\mathscr{H}_1\)& 0.9228 & 0.9440 & 0.0136 & 0.1137 & 0.0930 & 0.0150 & 0.0763 & 0.0562 & 0.0112 \\

&&\(\mathscr{H}_2\)& 0.8452 & 0.8785 & 0.0188 & 0.1372 & 0.1229 & 0.0105 & 0.0920 & 0.0805 & 0.0075 \\

&&\(\mathscr{H}_3\)& 0.9537 & 0.9686 & 0.0075 & 0.0843 & 0.0714 & \textbf{0.0060}& 0.0573 & 0.0483 & \textbf{0.0055} \\

&&\(\mathscr{H}_4\)& 0.7511 & 0.8298 & 0.0552 & 0.1282 & 0.1085 & 0.0110 & 0.0715 & 0.0603 & 0.0073 \\

\cline{2-12} 


&\multirow{4}{*}{$\mathcal{K}_{\textrm{RBF}}$}&\(\mathscr{H}_1\)& 0.8771 & 0.9142 & 0.0273 & 0.1359 & 0.1132 & 0.0174 & 0.0790 & 0.0634 & 0.0123 \\

&&\(\mathscr{H}_2\)& 0.8306 & 0.8865 & 0.0372 & 0.1399 & 0.1158 & 0.0173 & 0.0894 & 0.0766 & 0.0132 \\

&&\(\mathscr{H}_3\)& \textbf{0.9682} & \textbf{0.9785} & \textbf{0.0054} & \textbf{0.0724} & \textbf{0.0593} & 0.0079 & \textbf{0.0511} & \textbf{0.0393} & 0.0064 \\

&&\(\mathscr{H}_4\)& 0.7167 & 0.8112 & 0.0550 & 0.1499 & 0.1164 & 0.0201 & 0.0731 & 0.0585 & 0.0097 \\

\cline{1-12}


\multirow{8}{*}{\(\mathscr{O}_{Sum}\)}&\multirow{4}{*}{$\mathcal{K}_{\textrm{Lap}}$}&\(\mathscr{H}_1\)& 0.9429 & 0.9558 & 0.0066 & 0.8534 & 0.7363 & 0.0687 & 0.5653 & 0.4475 & 0.0682 \\

&&\(\mathscr{H}_2\)& 0.8862 & 0.8985 & 0.0097 & 1.0664 & 0.9749 & 0.0713 & 0.7712 & 0.6841 & 0.0661 \\

&&\(\mathscr{H}_3\)& 0.9606 & 0.9716 & 0.0058 & 0.6719 & 0.5973 & \textbf{0.0302} & 0.4879 & 0.3979 & \textbf{0.0402} \\

&&\(\mathscr{H}_4\)& 0.8103 & 0.8697 & 0.0505 & 0.9806 & 0.8370 & 0.1279 & 0.5750 & 0.4868 & 0.0760 \\

\cline{2-12} 


&\multirow{4}{*}{$\mathcal{K}_{\textrm{RBF}}$}&\(\mathscr{H}_1\)& 0.8979 & 0.9214 & 0.0180 & 1.1179 & 0.9772 & 0.1227 & 0.6916 & 0.5689 & 0.0821 \\

&&\(\mathscr{H}_2\)& 0.8957 & 0.9109 & 0.0234 & 0.9708 & 0.8939 & 0.1096 & 0.6700 & 0.5925 & 0.1095 \\

&&\(\mathscr{H}_3\)& \textbf{0.9703} & \textbf{0.9822} & \textbf{0.0055} & \textbf{0.6243} & \textbf{0.4737} & 0.0820 & \textbf{0.4376} & \textbf{0.3113} & 0.0693 \\

&&\(\mathscr{H}_4\)& 0.7908 & 0.8589 & 0.0420 & 1.0930 & 0.8920 & 0.1084 & 0.5934 & 0.4860 & 0.0740 \\

\cline{1-12}

\end{tabular}
\end{adjustbox}
\caption{Summary of the \(R^2\), RMSE, and MAE values for the Laplacian and RBF kernels on the testing sets with labels determined by \(\mathscr{O}_{XZ}\) or \(\mathscr{O}_{Sum}\), and \(\mathscr{H}_1\), \(\mathscr{H}_2\), \(\mathscr{H}_3\), or \(\mathscr{H}_4\). 
The table includes the smallest \(R^2\), highest RMSE, and highest MAE, together with the mean and standard deviation (SD) of the eight values of each metric obtained across eight system sizes (\(n\in\{5,10,15,\ldots,40\}\)). 
For each function, the best value of each quantity reported in the table is written in boldface, with a smaller SD considered better so that the reported means are more statistically representative of expected performance.
Note that the models used to obtain the results summarised here are trained on all 1000 training datapoints with the best hyperparameter values trialled in the cross-validation.}
\label{Results_Summary}
\end{table*}

In Table \ref{Results_Summary}, we provide a summary of the \(R^2\), RMSE, and MAE values obtained with the Laplacian and RBF kernels on the testing sets with labels determined by \(\mathscr{O}_{XZ}\) or \(\mathscr{O}_{Sum}\), and one of the parameterised sets of Hamiltonians \(\mathscr{H}_1\), \(\mathscr{H}_2\), \(\mathscr{H}_3\), or \(\mathscr{H}_4\). 
Specifically, the table contains the smallest \(R^2\), highest RMSE, and highest MAE, together with the mean and standard deviation of the eight values of each metric obtained across the eight system sizes \(n\in\{5,10,15,\ldots,40\}\). 
These results offer an overview of how well the Laplacian and RBF kernels perform in learning the two OTOC functions, \(\mathscr{O}_{XZ}\) and \(\mathscr{O}_{Sum}\), across a range of quantum systems with various system sizes up to 40 qubits. 
This provides insight into the kernels' suitability for the learning problems under study.

From Table \ref{Results_Summary}, we see that in the problem instances associated with \(\mathscr{O}_{XZ}\), the \(R^2\) scores achieved by the Laplacian and RBF kernels on the testing sets are all at least 0.7167, with averages over the eight system sizes lying between 0.8112 and 0.9785 for the different sets of Hamiltonians.  Similarly, for the problem instances associated with \(\mathscr{O}_{Sum}\), the \(R^2\) achieved by the kernels are at least 0.7908, with averages lying between 0.8598 and 0.9822 for the various Hamiltonians. 
We also see that the collections of \(R^2\) scores are distributed with small standard deviations, the greatest of which is 0.0552. 
This indicates that the \(R^2\) scores are distributed closely around their associated means, which is consistent with Figure \ref{Results_R2}.

Table \ref{Results_Summary} also shows that, for the problem instances associated with \(\mathscr{O}_{XZ}\), the RMSE and MAE values all lie below 0.1499 and 0.0920 respectively. While those for the problem instances associated with \(\mathscr{O}_{Sum}\) all lie below 1.1179 and 0.7712 respectively. 
This indicates that the numerical error in the labels predicted by either one of the kernels for \(\mathscr{O}_{XZ}\), or \(\mathscr{O}_{Sum}\), will on average be much smaller than \(7.5\%\),  or \(9.3\%\), of the range of the functions which return values in \([-1,1]\), or \([-3,9]\), respectively.
Note though, that the smallest numerical value of any label determined by \(\mathscr{O}_{XZ}\), and \(\mathscr{O}_{Sum}\), present in the datasets was -0.2297, and -0.1086, respectively.

\begin{figure*}
\centering
\begin{tikzpicture}
\node at (0,0) {\includegraphics[width=0.95\textwidth]{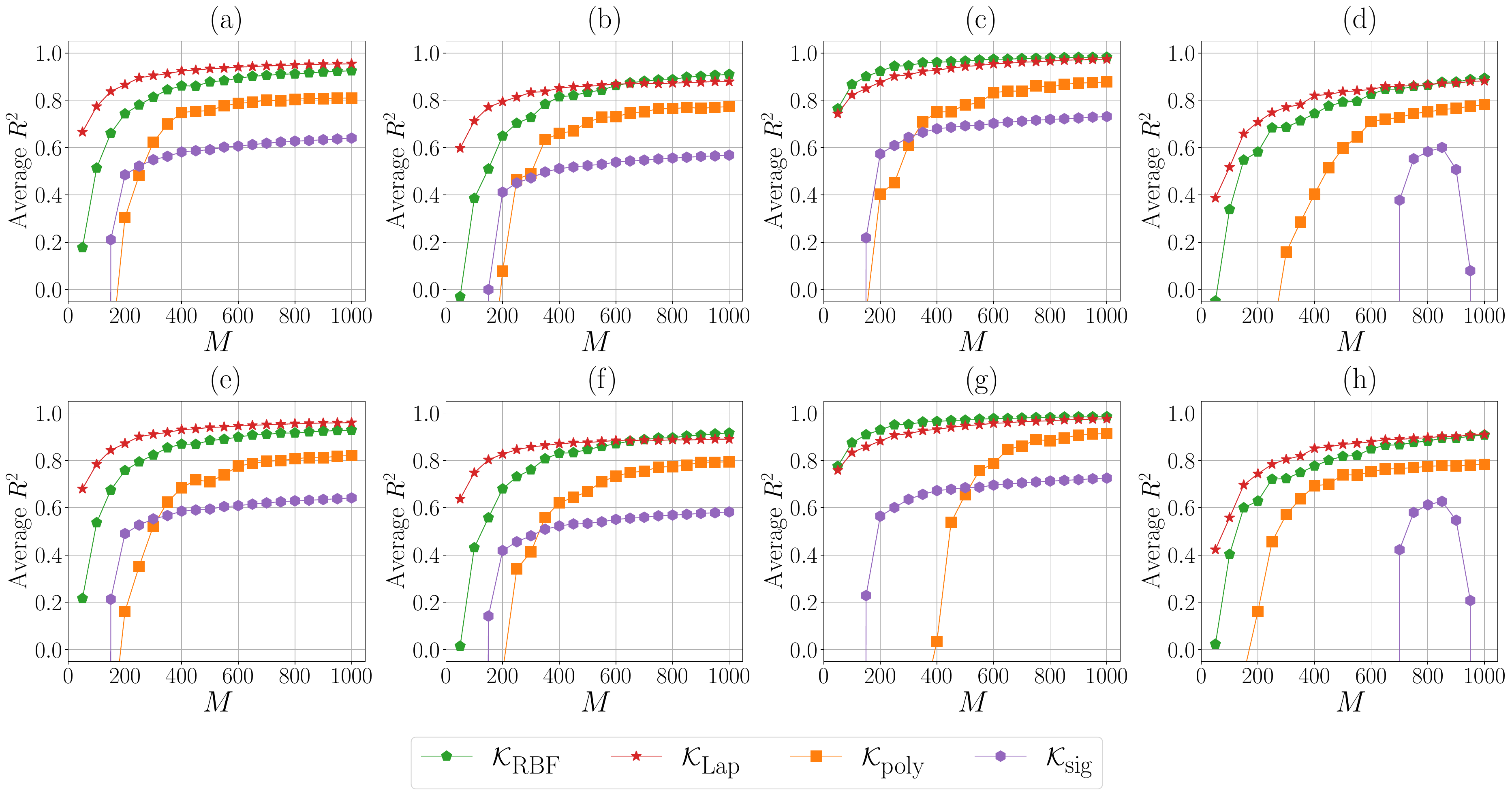}};
\end{tikzpicture}
\caption{Average coefficient of determination (Average \(R^2\)) for each kernel making predictions on the 40-qubit testing sets, plotted against the number of training data samples \(M\) used to train the associated ML models. 
Each point in the plots corresponds to the numerical value of \(R^2\) averaged over 20 different models using the same kernel. 
Each of the models is the result of 
using a random subset of the training data containing \(M\) of the total 1000 training data samples. 
The hyperparameters are fixed to be the best hyperparameters found during the 10-fold cross-validation performed on all 1000 training data samples. 
Plots (a), (b), (c) and (d) show the average \(R^2\) score on the testing datasets with labels determined by \(\mathscr{O}_{XZ}\) with \(\mathscr{H}_1\), \(\mathscr{H}_2\), \(\mathscr{H}_3\) and \(\mathscr{H}_4\), respectively. Plots (e), (f), (g) and (h) show the average \(R^2\) score on the testing datasets with labels determined by \(\mathscr{O}_{Sum}\) with \(\mathscr{H}_1\), \(\mathscr{H}_2\), \(\mathscr{H}_3\) and \(\mathscr{H}_4\), respectively.}
\label{ConvergenceInNumTrainData}
\end{figure*}

Another aspect of the results which warrants mention are the variations in the best trialled values of each hyperparameter (see Tables \ref{BestHyperparametersH1}, \ref{BestHyperparametersH2}, \ref{BestHyperparametersH3}, and \ref{BestHyperparametersH4} of Appendix \ref{Tables of hyperparameter values}), and the performance of each kernel, as the system size \(n\) changes.
For the kernels which did not perform well (e.g., the linear and cosine kernels), one would not expect the hyperparameter values or performance metrics to change in a predictable way. 
However for the Laplacian and RBF kernels, which both depend on two hyperparameters \(\lambda\) and \(\gamma\), one might expect the optimal choice of these hyperparameters and the associated performance metrics to change in a predictable manner. 
Unfortunately, this does not appear to be the case, which raises the question of the underlying causes of these variations.

One factor of our method which likely contributes to these variations is the decreasing density of input training data samples with increasing system size \(n\).
This happens because we sample the input training data from a sphere of radius proportional to \(n\), while the total number of training data samples remains fixed at 1000. 
The decrease in density likely impacts the value of \(\gamma\), since \(\gamma\) controls the rate of exponential decay of the terms in \eqref{RTmodel}, which in turn affects how the model interpolates between the input training data samples.
In order to establish a clearer trend for the hyperparameter values, and hence the associated performance metrics, we expect that maintaining the density of training data samples would assist. 
However, this would require a number of training data samples scaling cubically with the system size (as does the volume of the sphere from which the data is sampled), which was not feasible with our computational budget.

Similarly, while the Laplacian kernel may appear to perform better than the RBF kernel in many instances, it also exhibits a larger difference between training and testing performance (see Appendix \ref{Tables of numerical results}). 
This could be interpreted as partial overfitting, 
which can usually be remedied by using a small positive value of $\lambda$.
However the best regularisation strength among those trialled for the Laplacian kernel was frequently \(\lambda=0\) (see Appendix \ref{Tables of hyperparameter values}). 
This, and the absence of a clear trend in the best trialled hyperparameter values, suggests that a more finely grained grid search may improve the results.
This would allow hyperparameter values which are closer to optimal to be found, which may help in uncovering the underlying trend that one might expect to observe in the optimal values.
However we did not perform this here in order to keep the costs of training manageable (see~\eqref{NsamplesKMruntime}).

Given the cost of producing training data, it is natural to ask whether one can obtain similar results with fewer data samples.
To investigate this, we performed a collection of learning experiments which involve varying the amount of training data supplied to the models.
First, we set the hyperparameter values for each kernel in each problem instance equal to the best values trialled in the cross-validation performed on all of the training data (see Appendix \ref{Tables of hyperparameter values}).
Next, we choose different training dataset sizes \(M\in\N\) to supply to the models. 
Specifically, for each \(M\in\{50m:m=1,2,\ldots,20\}\), we randomly sample \(M\) of the 1000 training data points, and repeat this 20 times to obtain 20 (possibly intersecting) subsets of size \(M\).
We then train a model on each of these 20 subsets of the training data. 
Finally, we calculate the average \(R^2\) score achieved by the 20 models making predictions on the entire testing set, as an indicator of how the models perform with only \(M\) training data samples. 
The results of this analysis are shown in Figure \ref{ConvergenceInNumTrainData} for the 40-qubit problem instances. 
The plots show that, even with only a few hundred training data points, the average \(R^2\) scores for the Laplacian and RBF kernels often remain above \(0.8\), and occasionally reach above \(0.9\). 
This shows that the kernels are capable of obtaining good learning performance metrics when supplied with fewer training data samples than were used in the earlier analysis.

Overall, the results show that the learning performance of the RBF and Laplacian kernels are similar and consistently the best (see Figure \ref{Results_R2}). 
On all problem instances, 
they obtain good values for all three performance metrics considered (see Table \ref{Results_Summary}) after tuning their hyperparameters with a standard and inexpensive cross-validation. 
Even when supplied with just a few hundred training data samples, the RBF and Laplacian kernels manage to achieve $R^2$ scores frequently in excess of 0.8 (see Figure \ref{ConvergenceInNumTrainData}).
We also note that the results indicate that the performance of the Laplacian and RBF kernels depends more on the choice of Hamiltonian than the OTOC function we try to learn. 
For example, it is clear from Table \ref{Results_Summary} that the kernels perform better at learning OTOCs describing \(\mathscr{H}_3\) than OTOCs describing the other sets.

\section{Discussion}
\label{Section5}

The results in the previous section demonstrate that the Laplacian and RBF kernels can be used to obtain accurate estimates of OTOCs and related quantities, such as the lower bound in \eqref{MutualInformationAndOTOC}, from a relatively small amount of training data. 
The suitability of these kernels for the learning problems described in Section \ref{The learning problem} suggests that the OTOC functions $\mathscr{O}_{XZ}$ and $\mathscr{O}_{Sum}$ change smoothly in such a way that the value of either function, for a given input, can be closely inferred from the function values at nearby inputs. 
This behaviour is captured well by the Laplacian and RBF kernels since the value of these kernels decay exponentially with the distance between their arguments (as measured by the \(l_1\) and squared \(l_2\) norms respectively). 
The polynomial and sigmoid kernels also performed reasonably well, but the results show that the Laplacian and RBF kernels consistently performed best of all six kernels that we trialled in this work.

Given that the Laplacian and RBF kernels are capable of accurately predicting OTOCs from a small amount of training data, this means that we can use KMs together with the MPO-based algorithm~\cite{Xu2019Accessing} to obtain a reduction in overall runtime when one is interested in evaluating the OTOC for many different systems and values of time. 
Specifically, once we have a trained model, it can be used to substitute the MPO-based algorithm by making accurate predictions of OTOC values in a time scaling linearly in just the size of the training dataset (i.e., independent of the system size).

To see this, suppose that we are interested in evaluating \(\mathscr{O}_{XZ}\) or \(\mathscr{O}_{Sum}\) for a large number \(N\in\N\) of different inputs drawn from the spherical domains described in Section \ref{The parameterised sets of Hamiltonians}. 
From \eqref{MPOruntime}, the runtime scaling of using the MPO-based algorithm to calculate these values is
\begin{align}
\label{NsamplesMPOruntime}
\mathscr{O}\left(Nn\chi^3 t\slash\Delta t+Nn\chi^4\right).
\end{align}
If, however, we decide to instead produce $M$ training data samples using the MPO-based algorithm, where \(M<<N\), and then predict the OTOC values using the Laplacian or RBF kernels, then the overall runtime scales as
\begin{align}
\label{NsamplesKMruntime}
\mathcal{O}\left(Mn\chi^3t\slash\Delta t+Mn\chi^4+N_{\lambda}N_{\gamma}M^3+NM\right),
\end{align}
where $N_{\lambda}$ and $N_{\gamma}$ are the number of different values of $\lambda$ and $\gamma$, respectively, used in the grid search. 
The first two terms in \eqref{NsamplesKMruntime} come from producing the \(M\) training datapoints, the third term from training the model using KMs with a cross-validation by grid search, and the last term from predicting the OTOC values with the trained model (see the end of Section \ref{Kernel methods}). 
Comparing \eqref{NsamplesMPOruntime} with \eqref{NsamplesKMruntime} shows that, after producing the training dataset and training the model, the cost of making further predictions is reduced significantly from \(\mathcal{O}\left(n\chi^3t\slash\Delta t+n\chi^4\right)\) to \(\mathcal{O}(M)\) per prediction. 
So in cases where the number of required training data samples is much smaller than the overall number of predictions we would like to make (i.e., \(M<<N\)), applying KMs may significantly reduce the overall computational cost of obtaining the OTOC values.

This may be useful, for example, to a user who is interested in finding a quantum system which exhibits some interesting property that can be inferred from OTOCs. 
For instance, perhaps one is interested in seeking a highly chaotic system, so that local information quickly spreads across the system, which may be convenient in situations such as the classical communication protocol depicted in Figure \ref{Fig2}. 
One possible way of finding such a system would be to search for a choice of Hamiltonian parameters which results in the greatest quantum Lyapunov exponent. 
This entails calculating the short-time regime behaviour of the OTOC for many different systems and values of time associated with various choices of the Hamiltonian parameters. 
In such cases, where \(N>>M\), the proposed runtime reduction could be practically useful.

The main cost of applying KMs to the learning problems we consider is that of producing the training datasets.
And in order to produce the data in this work, as discussed in Section \ref{Generating the data}, we make use of the MPO-based algorithm, since this algorithm lends itself well to simulating local 1D quantum systems.
However, producing the data, even with this algorithm, can become expensive as the size of the system becomes large. 
This is because the maximum evolution time that we simulate, and often the necessary maximum bond dimension, increases with system size, leading to a high degree polynomial runtime scaling in the system size.

One could also consider generating data, or even directly calculating the OTOC values, on a QC instead.
However, we expect that currently available NISQ devices will require substantial improvements before it will be possible to match the capabilities of tensor network methods for simulating 1D quantum systems.
For example, neutral atom processors provide large qubit numbers, but exhibit two qubit gate error rates of roughly 0.5\%~\cite{Evered2023Neutral}, which severely limits circuit depths. 
In contrast, trapped ion devices provide small two qubit gate errors around 0.1\%, but only have on the order of 50 qubits~\cite{DeCross2024QuantinuumH2}.
In fact, of all currently available devices, only IBM's superconducting processors~\cite{Kim2023Evidence} have enough qubits with sufficiently small error rates to challenge classical methods for 1D systems. 
And it appears that even these devices would not be capable of generating the datasets in this work with good accuracy.

To see this, in Section I.B.1 of~\cite{Mohseni2024How}, the authors state that performing quantum simulations on NISQ hardware typically requires circuit depths greater than what can be performed with current devices, even those achieving a 0.1\% average two qubit gate error rate.
We believe the quantum simulations required to calculate the OTOCs considered in our work are no exception.
For example, Trotterising the Hamiltonians discussed in Section \ref{The parameterised sets of Hamiltonians} using a Trotter step size of \(\Delta t=0.05\), up to a time equal to the system size \(n\), requires at least \(20n\) layers of two qubit gates.
The number of layers which can feasibly be implemented depends on many factors. 
But as a baseline, in 2023 the authors of~\cite{Kim2023Evidence} reported an implementation of 60 layers of two qubit CNOT gates with high fidelity on IBM's 127-qubit superconducting processor \texttt{ibm\_kyiv}. 
Since the publication of~\cite{Kim2023Evidence}, the baseline error levels have marginally improved, so it is likely that more than 60 layers of CNOT gates would now be possible. 
However, the current error rates still exceed those which are expected to be necessary for many useful near-term applications (see Table I of~\cite{Mohseni2024How}).
And when we compare the baseline of 60 layers with $20n$, it becomes apparent that the number of layers required to use \(\Delta t=0.05\) will quickly exceed that which is feasible. 

One could argue that we may simply increase the Trotter step size to reduce the number of layers, while still simulating up to a time $t_{max}=n$. 
However, this will result in the Trotter error \(\mathcal{O}\left((\Delta t)^3\right)\) scaling cubically with the system size. 
For example, with 40 qubits and 100 layers, we would have \(\Delta t=0.4\), which increases the magnitude of the Trotter error by a factor of \(\sim 500\) when compared with \(\Delta t=0.05\).
This leads us to believe that current NISQ devices are not yet capable of generating the data used in this work with the same accuracy as the MPO-based algorithm, especially in cases with large $n$.
This is not in contradiction with the claims of \cite{Kim2023Evidence,DeCross2024QuantinuumH2}, since there the authors simulate quantum systems in higher dimensions, while here we consider only 1D quantum systems, to which tensor network methods are well suited. 
Instead, we simply argue that classical tensor networks are preferable over methods utilising NISQ devices for generating the datasets used in this work.

Despite this, one of the limitations of the approach taken here is that the OTOC labels, which were generated using the MPO-based algorithm, are only approximate due to the use of a finite bond dimension.
The analysis presented in Figure~\ref{ConvergenceInChiFigure} suggests that the chosen maximum bond dimension \(\chi\) was sufficiently large for most problem instances, since the OTOC values appear to have converged.
However, calculations for some of the Hamiltonians from the set \(\mathscr{H}_4\) (see Figures \ref{ConvergenceInChiFigure}(d) and \ref{ConvergenceInChiFigure}(h)) could have benefitted from a larger \(\chi\), though this would have been beyond our computational resources.
We also note that the plots in Figure~\ref{ConvergenceInChiFigure} correspond to the 40-qubit systems, which is the largest of the systems used here.
For smaller system sizes, the convergence with respect to \(\chi\) will occur more quickly.

Another limitation of our ML approach that warrants discussion is that one should not expect the kernels to be able to extrapolate beyond the ball in $\R^3$ from which the input training data is sampled.
Recall, the testing data, on which we evaluated the performance of the models, was drawn from the same distribution as the training data.
Thus, what we have demonstrated is that the kernels are able to make accurate predictions for other Hamiltonians within this region in parameter space.
This may still be useful however, since the region of parameter space on which the models perform well captures evolution times which scale linearly in the system size. 
As discussed in Section \ref{Information Spreading and Out-of-Time-Ordered Correlators}, we expect that this captures the short-time regime, and possibly part of the intermediate-time regime. 
Accordingly, we expect that the models may be used to make predictions for sufficiently large values of times to be of theoretical interest. 
In contrast, the only related work of which we are aware \cite{Wu2020Artificial} uses ML techniques to learn OTOCs in just a small part of the short-time regime (when Re$[F_{jk}](t)<0.85$). 
However, a direct comparison of our work with~\cite{Wu2020Artificial} is not straightforward since they demonstrated their method using a 2D quantum system, while this work focuses on 1D systems.

\section{Conclusion}
\label{Conclusion}

In this paper, we explored the application of classical KMs to learn and predict the OTOC, an important quantity used to investigate quantum information scrambling.
By viewing the XZ-OTOC, and the sum of OTOCs in \eqref{MutualInformationAndOTOC}, as functions of parameters associated with a parameterised set of Hamiltonians, we frame the learning task as a regression problem and employ six standard kernel functions. 
For all four of the parameterised sets of Hamiltonians considered, each describing local 1D quantum systems, we provide a significant body of numerical results (see Appendix \ref{Tables of numerical results}) which shows that the Laplacian and RBF kernels perform best. 
Indeed, both kernels closely approximate the XZ-OTOC and the sum of OTOCs, achieving $R^2$ scores on the testing data of at least 0.7167 and typically around 0.9, with small RMSE and MAE (see Table \ref{Results_Summary}), from a small number of training datapoints and a standard cross-validation.
In fact, Figure \ref{ConvergenceInNumTrainData} shows that the kernels can achieve $R^2$ scores in excess of 0.8 even with just a few hundred datapoints.

Since the kernels perform well with a small amount of training data, we can apply KMs to reduce the overall computational cost of extensively evaluating OTOCs.
Specifically, by using classical tensor network methods to generate a small amount of training data, training a model using the RBF or Laplacian kernel, then applying the model to make predictions, we can quickly generate accurate approximations of OTOCs. 
One must, of course, incur the cost of producing the training data, which can be expensive. 
However, in cases where one needs to evaluate OTOCs for many different systems and values of time belonging to the short- and possibly intermediate-time regimes, we can use kernel methods to reduce the cost of many of the predictions to a time scaling linearly in the number of training datapoints, as demonstrated by \eqref{NsamplesKMruntime}.

The paper naturally lends itself to a handful of possible extensions, which we suggest as future research directions. 
For example, one may like to investigate whether the KM approach works with other OTOCs, other linear combinations of OTOCs, or other types of correlation functions such as the Loschmidt echo \cite{Wisniacki2012Loschmidt} or entanglement entropy~\cite{Nielsen2000Quantum}. 
We could also examine how the approach works for other parameterised sets of Hamiltonians. 
For instance, we could investigate sets with \(d>3\) (i.e., parameterised by four or more parameters), sets containing Hamiltonians describing quantum systems in higher dimensions, or sets containing highly non-local Pauli strings in one or more terms.
Though producing data for the latter two cases would likely be challenging. 
It might also be interesting to apply quantum kernel methods \cite{Schuld2021Kernels,Liu2021Rigorous,Havlicek2019Supervised} to investigate whether a problem-specific quantum kernel may offer significant increases in learning performance. 
Similarly, it would be useful to investigate how the results change in response to increasing the size of the parameter region from which we sample the input data. 
This may shed light on whether the kernels can be used to accurately predict long-time OTOCs, though producing the data in this case would likely be expensive. 
Finally, as discussed earlier, it would be useful to explore whether a predictable trend in the hyperparameter values may be uncovered by using a more finely grained grid search.

\section*{Data availability}
\label{Data availability}

The datasets produced in this study are available in the GitHub repository at \href{https://github.com/John-J-Tanner/ITensor-OTOC-Data}{https://github.com/John-J-Tanner/ITensor-OTOC-Data}. 
The repository includes all relevant datasets necessary to replicate the study's results, and example Python codes that can be used to visualise the data.
For further information or inquiries, please contact the corresponding author, John Tanner, at \emph{john.tanner@uwa.edu.au}.

\section*{Acknowledgments}

The authors thank Jason Twamley for insightful conversations on the out-of-time ordered correlator, which inspired the initial concept of this study. 
The authors would also like to thank Lyle Noakes and Yusen Wu for invaluable discussions.
JT is supported by an Australian Government Research Training Program (RTP) Scholarship. 
JP was supported by the Brian Dunlop Physics Fellowship at The University of Western Australia and is supported at Nordita by the Wenner-Gren Foundations and, in part, by the Wallenberg Initiative on Networks and Quantum Information (WINQ).
Nordita is supported in part by NordForsk.
This work was supported by resources provided by the Pawsey Supercomputing Research Centre with funding from the Australian Government and the Government of Western Australia.

\clearpage

\bibliographystyle{unsrt}
\typeout{}
\bibliography{Bibliography_V2}


\clearpage



\appendix

\renewcommand{\thetheorem}{A.\arabic{theorem}} 
\setcounter{theorem}{0} 
\counterwithin{figure}{section}
\counterwithin{table}{section}


\section{Supplementary results}

\renewcommand\thefigure{A.\arabic{figure}} 
\renewcommand\thetable{A.\arabic{table}} 

\begin{theorem}
\label{A11} 
Let \(\mathcal{X}\equiv\R^d\) be the input data domain and \(\mathcal{D}=\{(\mathbf{x}_i,y_i)\}_{i=1}^{M}\subseteq\mathcal{X}\times\R\) be a training dataset. Let \(\mathcal{K}:\mathcal{X}\times\mathcal{X}\to\R\) be a kernel and define \(\widetilde{\mathcal{L}}_{\mathcal{D}}:\R^M\to\R\) s.t.
\begin{align}
\label{ARKHSRegularisedLTilde}
\widetilde{\mathcal{L}}_{\mathcal{D}}(\vec{\alpha})=\frac{1}{M}\big\|K\vec{\alpha}-\vec{y}\big\|^2+\frac{\lambda}{M}\big\langle\vec{\alpha},K\vec{\alpha}\rangle
\end{align}
where \(\|\cdot\|\) is the Euclidean norm, \(\langle\cdot,\cdot\rangle\) is the Euclidean inner product, \(\vec{y}=(y_i)_{i=1}^{M}\in\R^M\) is the vector of training data labels, \(K_{ij}=\mathcal{K}(\mathbf{x}_i,\mathbf{x}_j)\) is the  \emph{kernel matrix} for the input training data \(\{\mathbf{x}_i\}_{i=1}^{M}\).
Then \eqref{ARKHSRegularisedLTilde} defines a smooth convex map.
\end{theorem}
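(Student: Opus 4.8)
The plan is to reduce both assertions --- smoothness and convexity --- to the single structural fact that the kernel matrix $K$ is symmetric and positive semi-definite, which holds because $K_{ij}=\mathcal{K}(\mathbf{x}_i,\mathbf{x}_j)$ is the Gram matrix of the kernel $\mathcal{K}$ on the finite set $\{\mathbf{x}_i\}_{i=1}^{M}$.

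First I would expand $\widetilde{\mathcal{L}}_{\mathcal{D}}$ into an explicit quadratic form. Using $K^{\top}=K$, one has $\|K\vec{\alpha}-\vec{y}\|^2=\vec{\alpha}^{\top}K^2\vec{\alpha}-2\vec{y}^{\top}K\vec{\alpha}+\|\vec{y}\|^2$ and $\langle\vec{\alpha},K\vec{\alpha}\rangle=\vec{\alpha}^{\top}K\vec{\alpha}$, so that
\[
\widetilde{\mathcal{L}}_{\mathcal{D}}(\vec{\alpha})=\tfrac{1}{M}\,\vec{\alpha}^{\top}\!\left(K^2+\lambda K\right)\vec{\alpha}-\tfrac{2}{M}\,\vec{y}^{\top}K\vec{\alpha}+\tfrac{1}{M}\|\vec{y}\|^2 .
\]
The right-hand side is a polynomial of degree at most two in the $M$ coordinates of $\vec{\alpha}$, hence $C^{\infty}$ on $\R^M$; this settles smoothness.

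For convexity I would read the Hessian off the displayed quadratic, $\nabla^2\widetilde{\mathcal{L}}_{\mathcal{D}}=\tfrac{2}{M}\left(K^2+\lambda K\right)$, a constant symmetric matrix, and then verify it is positive semi-definite. Diagonalising $K$ with eigenvalues $\mu_i\geq 0$ (legitimate since $K$ is symmetric PSD), the eigenvalues of $K^2+\lambda K$ are $\mu_i^2+\lambda\mu_i=\mu_i(\mu_i+\lambda)\geq 0$, using $\mu_i\geq 0$ and $\lambda>0$. Since a twice-differentiable function on $\R^M$ whose Hessian is everywhere positive semi-definite is convex, this completes the argument. As an alternative to the Hessian computation, one can observe that $\vec{\alpha}\mapsto\|K\vec{\alpha}-\vec{y}\|^2$ is convex, being the composition of the convex map $z\mapsto\|z\|^2$ with an affine map, while $\vec{\alpha}\mapsto\langle\vec{\alpha},K\vec{\alpha}\rangle=\|K^{1/2}\vec{\alpha}\|^2$ is convex since $K\succeq 0$; their nonnegative combination with weight $\lambda/M>0$ is then convex.

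I do not anticipate a genuine obstacle: the statement is essentially a bookkeeping observation. The one point requiring care is to invoke positive semi-definiteness of $K$ (not merely its symmetry) --- this is what forces both $K^2+\lambda K\succeq 0$ and the quadratic form $\langle\vec{\alpha},K\vec{\alpha}\rangle$ to be nonnegative --- together with using $K^{\top}=K$ correctly when expanding the squared residual norm.
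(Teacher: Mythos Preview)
Your argument is correct. Both you and the paper dispose of smoothness by observing that $\widetilde{\mathcal{L}}_{\mathcal{D}}$ is a polynomial in the entries of $\vec{\alpha}$. For convexity, however, the routes differ: you compute the Hessian $\tfrac{2}{M}(K^2+\lambda K)$ and argue it is positive semi-definite via the spectrum of $K$, whereas the paper verifies the defining inequality $\widetilde{\mathcal{L}}_{\mathcal{D}}(\mu\vec{\alpha}_1+(1-\mu)\vec{\alpha}_2)\leq\mu\widetilde{\mathcal{L}}_{\mathcal{D}}(\vec{\alpha}_1)+(1-\mu)\widetilde{\mathcal{L}}_{\mathcal{D}}(\vec{\alpha}_2)$ by hand, expanding both terms and collecting the cross pieces into $-\tfrac{\mu(1-\mu)}{M}\|K(\vec{\alpha}_1-\vec{\alpha}_2)\|^2$ and $-\tfrac{\lambda\mu(1-\mu)}{M}\langle\vec{\alpha}_1-\vec{\alpha}_2,K(\vec{\alpha}_1-\vec{\alpha}_2)\rangle$, which are then discarded as nonpositive. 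Both arguments hinge on the same structural input, namely $K\succeq 0$; your Hessian approach is shorter and more transparently identifies the operator $K^2+\lambda K$ that reappears in the subsequent optimisation (Theorem~A.2), while the paper's direct verification is more self-contained in that it does not invoke the second-order characterisation of convexity. Your alternative route via composition with affine maps and nonnegative combination is also valid and arguably the cleanest of the three.
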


\begin{proof}
\renewcommand{\qed}{\hfill\(\blacksquare\)}
The map \(\widetilde{\mathcal{L}}_{\mathcal{D}}\) is clearly smooth since it is just a polynomial in the components of \(\vec{\alpha}\).
In order to show that \(\widetilde{\mathcal{L}}_{\mathcal{D}}\) is convex we need to show that 
\begin{equation}
\label{ConvexEqn}
\widetilde{\mathcal{L}}_{\mathcal{D}}\big(\mu\vec{\alpha}_1+(1-\mu)\vec{\alpha}_2\big)\leq\mu\widetilde{\mathcal{L}}_{\mathcal{D}}(\vec{\alpha_1})+(1-\mu)\widetilde{\mathcal{L}}_{\mathcal{D}}(\vec{\alpha}_2)
\end{equation}
for all \(\vec{\alpha}_1,\vec{\alpha}_2\in\R^M\) and \(\mu\in[0,1]\).
So let \(\vec{\alpha}_1,\vec{\alpha}_2\in\R^M\) and \(\mu\in[0,1]\) then
\begin{widetext}
\begin{align}
\nonumber
&\widetilde{\mathcal{L}}_{\mathcal{D}}\big(\mu\vec{\alpha}_1+(1-\mu)\vec{\alpha}_2\big)\\
\nonumber
=&\frac{1}{M}\big\|K\big(\mu \vec{\alpha}_1+(1-\mu)\vec{\alpha}_2\big)-\vec{y}\big\|^2
+\frac{\lambda}{M}\big\langle \mu\vec{\alpha}_1+(1-\mu)\vec{\alpha}_2,K\big(\mu\vec{\alpha}_1+(1-\mu)\vec{\alpha}_2\big)\big\rangle\\
\nonumber
=&\frac{1}{M}\big\|\mu(K\vec{\alpha}_1-\vec{y})+(1-\mu)(K\vec{\alpha}_2-\vec{y})\big\|^2
+\frac{\lambda}{M}\big\langle \mu\vec{\alpha}_1+(1-\mu)\vec{\alpha}_2,K\big(\mu\vec{\alpha}_1+(1-\mu)\vec{\alpha}_2\big)\big\rangle.\\
\nonumber
=&\frac{1}{M}\Big(\mu^2\big\|K\vec{\alpha}_1-\vec{y}\big\|^2+2\mu(1-\mu)\big\langle K\vec{\alpha}_1-\vec{y},K\vec{\alpha}_2-\vec{y}\big\rangle+(1-\mu)^2\big\|K\vec{\alpha}_2-\vec{y}\big\|^2\Big)\\
\nonumber
&\qquad+\frac{\lambda}{M}\Big(\mu^2\big\langle\vec{\alpha}_1,K\vec{\alpha}_1\big\rangle+2\mu(1-\mu)\big\langle\vec{\alpha}_1,K\vec{\alpha}_2\big\rangle+(1-\mu)^2\big\langle\vec{\alpha}_2,K\vec{\alpha}_2\big\rangle\Big).\\
\nonumber
=&\frac{1}{M}\Big(\big(-\mu(1-\mu)+\mu\big)\big\|K\vec{\alpha}_1-\vec{y}\big\|^2+2\mu(1-\mu)\big\langle K\vec{\alpha}_1-\vec{y},K\vec{\alpha}_2-\vec{y}\big\rangle+\big(-\mu(1-\mu)+(1-\mu)\big)\big\|K\vec{\alpha}_2-\vec{y}\big\|^2\Big)\\
\nonumber
&\qquad+\frac{\lambda}{M}\Big(\big(-\mu(1-\mu)+\mu\big)\big\langle\vec{\alpha}_1,K\vec{\alpha}_1\big\rangle+2\mu(1-\mu)\big\langle\vec{\alpha}_1,K\vec{\alpha}_2\big\rangle+\big(-\mu(1-\mu)+(1-\mu)\big)\big\langle\vec{\alpha}_2,K\vec{\alpha}_2\big\rangle\Big).\\
\nonumber
=&\frac{1}{M}\Big(\mu\big\|K\vec{\alpha}_1-\vec{y}\big\|^2+(1-\mu)\big\|K\vec{\alpha}_2-\vec{y}\big\|^2-\mu(1-\mu)\big\|\big(K\vec{\alpha}_1-\vec{y}\big)-\big(K\vec{\alpha}_2-\vec{y}\big)\big\|^2\Big)\\
\nonumber
&\qquad+\frac{\lambda}{M}\Big(\mu\big\langle\vec{\alpha}_1,K\vec{\alpha}_1\big\rangle+(1-\mu)\big\langle\vec{\alpha}_2,K\vec{\alpha}_2\big\rangle-\mu(1-\mu)\big\langle\vec{\alpha}_1-\vec{\alpha}_2,K(\vec{\alpha}_1-\vec{\alpha}_2)\big\rangle\Big).\\
\nonumber
\leq&\frac{1}{M}\Big(\mu\big\|K\vec{\alpha}_1-\vec{y}\big\|^2+(1-\mu)\big\|K\vec{\alpha}_2-\vec{y}\big\|^2\Big)+\frac{\lambda}{M}\Big(\mu\big\langle\vec{\alpha}_1,K\vec{\alpha}_1\big\rangle+(1-\mu)\big\langle\vec{\alpha}_2,K\vec{\alpha}_2\big\rangle\Big).\\
=&\mu\widetilde{\mathcal{L}}_{\mathcal{D}}(\vec{\alpha}_1) + (1-\mu) \widetilde{\mathcal{L}}_{\mathcal{D}}(\vec{\alpha}_2).
\end{align}
\end{widetext}
Hence \(\widetilde{\mathcal{L}}_{\mathcal{D}}\big(\mu\vec{\alpha}_1+(1-\mu)\vec{\alpha}_2\big)\leq\mu\widetilde{\mathcal{L}}_{\mathcal{D}}(\vec{\alpha}_1) + (1-\mu) \widetilde{\mathcal{L}}_{\mathcal{D}}(\vec{\alpha}_2)\) for all \(\vec{\alpha}_1,\vec{\alpha}_2\in\R^M\) and \(\mu\in[0,1]\), which is \eqref{ConvexEqn}. This concludes the proof.
\end{proof}

\begin{theorem}
\label{A12}
Let \(\mathcal{X}\equiv\R^d\) and \(\mathcal{D}=\{(\mathbf{x}_i,y_i)\}_{i=1}^{M}\subseteq\mathcal{X}\times\R\). Given a kernel \(\mathcal{K}:\mathcal{X}\times\mathcal{X}\to\R\), define the mean squared error \(L_{\mathcal{D}}:\mathcal{R}_{\mathcal{K}}\to\R\) such that \(L_{\mathcal{D}}(f)=\frac{1}{M}\sum_{i=1}^{M}(f(\mathbf{x}_i)-y_i)^2\), the regularisation term \(\Omega:\mathcal{R}_{\mathcal{K}}\to\R\) such that \(\Omega(f)=\frac{1}{M}||f||_{\mathcal{R}_{\mathcal{K}}}^2\) and the regularised empirical risk functional \(\mathcal{L}_{\mathcal{D}}:\mathcal{R}_{\mathcal{K}}\to\R\) such that \(\mathcal{L}_{\mathcal{D}}(f)=L_{\mathcal{D}}(f)+\lambda\Omega(f)\) for some \(\lambda>0\). Under these conditions, the optimal model \(f_{\textrm{opt}}\) for \(\mathcal{D}\) in \(\mathcal{R}_{\mathcal{K}}\) with respect to \(\mathcal{L}_{\mathcal{D}}\) is given by
\begin{equation}
\label{RKHSOptimalProof}
f_{\textrm{opt}}(x)=\sum_{i=1}^{M}\alpha_i\mathcal{K}(x,\mathbf{x}_i)
\end{equation}
for all \(x\in\mathcal{X}\) with
\begin{equation}
\label{RKHSOptimalAlphaProof}
\vec{\alpha}=(K^2+\lambda K)^+K\vec{y},
\end{equation}
where \(\vec{y}=(y_i)_{i=1}^{M}\in\R^M\) is the vector of training data labels, \(K_{ij}=\mathcal{K}(\mathbf{x}_i,\mathbf{x}_j)\) is the  \emph{kernel matrix} for the input training data \(\{\mathbf{x}_i\}_{i=1}^{M}\), and \((\cdot)^+\) is the Moore-Penrose pseudoinverse~\cite{Penrose1955PseudoInverse}.
\end{theorem}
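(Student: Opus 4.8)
The plan is to use the representer theorem to collapse the optimisation over the infinite-dimensional space $\mathcal{R}_{\mathcal{K}}$ onto the finite-dimensional problem already treated in Theorem~\ref{A11}, and then to solve that problem by a first-order condition. First I would check the hypotheses of the representer theorem (Theorem 4.2 in~\cite{Scholkopf2001Kernels}, Theorem 6.11 in~\cite{Mohri2018Foundations}): the loss $L_{\mathcal{D}}$ depends on $f$ only through the finitely many values $f(\mathbf{x}_1),\dots,f(\mathbf{x}_M)$, and the regularisation term $\Omega(f)=\frac{1}{M}\|f\|_{\mathcal{R}_{\mathcal{K}}}^2 = g(\|f\|_{\mathcal{R}_{\mathcal{K}}})$ with $g(t)=t^2/M$ strictly increasing on $[0,\infty)$. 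Hence at least one minimiser of $\mathcal{L}_{\mathcal{D}}$ over $\mathcal{R}_{\mathcal{K}}$ takes the form $f=\sum_{i=1}^{M}\alpha_i\mathcal{K}(\cdot,\mathbf{x}_i)$, so it is enough to minimise over $\vec{\alpha}\in\R^M$.

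Next I would evaluate $\mathcal{L}_{\mathcal{D}}$ on such an $f$. Using the inner product \eqref{RKHSInnerProduct} (equivalently, the reproducing property), $f(\mathbf{x}_j)=\sum_i \alpha_i \mathcal{K}(\mathbf{x}_j,\mathbf{x}_i)=(K\vec{\alpha})_j$ and $\|f\|_{\mathcal{R}_{\mathcal{K}}}^2=\sum_{i,j}\alpha_i\alpha_j\mathcal{K}(\mathbf{x}_i,\mathbf{x}_j)=\langle\vec{\alpha},K\vec{\alpha}\rangle$, so that $\mathcal{L}_{\mathcal{D}}(f)=\frac{1}{M}\|K\vec{\alpha}-\vec{y}\|^2+\frac{\lambda}{M}\langle\vec{\alpha},K\vec{\alpha}\rangle=\widetilde{\mathcal{L}}_{\mathcal{D}}(\vec{\alpha})$, the map of Theorem~\ref{A11}. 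Since $\widetilde{\mathcal{L}}_{\mathcal{D}}$ is smooth and convex by that theorem, a point $\vec{\alpha}$ is a global minimiser if and only if $\nabla\widetilde{\mathcal{L}}_{\mathcal{D}}(\vec{\alpha})=0$. A direct computation, using that $K$ is symmetric, gives $\nabla\widetilde{\mathcal{L}}_{\mathcal{D}}(\vec{\alpha})=\frac{2}{M}\big((K^2+\lambda K)\vec{\alpha}-K\vec{y}\big)$, so the stationarity condition is the normal equation $(K^2+\lambda K)\vec{\alpha}=K\vec{y}$.

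To conclude I would verify that $\vec{\alpha}=(K^2+\lambda K)^+K\vec{y}$ solves this equation, for which the key point is consistency: since $K$ is symmetric positive semi-definite and $\lambda>0$, we have $K^2+\lambda K=K(K+\lambda\mathbb{I})$ with $K+\lambda\mathbb{I}$ positive definite and hence invertible, so $\operatorname{range}(K^2+\lambda K)=\operatorname{range}(K)\ni K\vec{y}$. Because $(K^2+\lambda K)(K^2+\lambda K)^+$ is the orthogonal projector onto $\operatorname{range}(K^2+\lambda K)$, consistency yields $(K^2+\lambda K)(K^2+\lambda K)^+ K\vec{y}=K\vec{y}$, i.e.\ $\vec{\alpha}=(K^2+\lambda K)^+K\vec{y}$ satisfies the normal equation; by convexity, the corresponding $f$ in \eqref{RKHSOptimalProof} is optimal. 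I would also note that although $\vec{\alpha}$ need not be unique when $K$ is singular, any two solutions differ by an element of $\ker K$ and therefore define the same function in $\mathcal{R}_{\mathcal{K}}$.

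I expect the main obstacle to be the handling of the Moore--Penrose pseudoinverse: one must argue that the normal equation is genuinely \emph{consistent} (so that $(K^2+\lambda K)^+$ returns an exact solution rather than merely a least-squares one) and reconcile the possible non-uniqueness of $\vec{\alpha}$ with the uniqueness of the optimal model. The remaining steps --- checking the representer theorem's hypotheses, rewriting the functional, and differentiating --- are routine.
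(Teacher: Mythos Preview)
Your proposal is correct and follows essentially the same route as the paper: invoke the representer theorem, rewrite $\mathcal{L}_{\mathcal{D}}$ as $\widetilde{\mathcal{L}}_{\mathcal{D}}(\vec{\alpha})$, use Theorem~\ref{A11} to reduce to the first-order condition $(K^2+\lambda K)\vec{\alpha}=K\vec{y}$, and then write the solution via the Moore--Penrose pseudoinverse. The one substantive difference is in the final step: the paper simply asserts that the ``best choice'' is the minimum-norm least-squares solution $(K^2+\lambda K)^+K\vec{y}$ without checking that the normal equation is consistent, whereas you actually verify consistency (using $K^2+\lambda K=K(K+\lambda\mathbb{I})$ with $K+\lambda\mathbb{I}$ invertible, so $\operatorname{range}(K^2+\lambda K)=\operatorname{range}(K)\ni K\vec{y}$) and observe that any two solutions differ by an element of $\ker K$ and hence define the same $f$. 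This is a genuine improvement in rigor over the paper's treatment, not a different strategy.
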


\begin{proof}
\renewcommand{\qed}{\hfill\(\blacksquare\)}
In this case the representer theorem (Theorem 4.2 in~\cite{Scholkopf2001Kernels} and Theorem 6.11 in~\cite{Mohri2018Foundations}) states that the optimal model \(f_{\textrm{opt}}\) for \(\mathcal{D}\) in \(\mathcal{R}_{\mathcal{K}}\) with respect to \(\mathcal{L}_{\mathcal{D}}\) admits a representation of the form
\begin{align}
\label{foptProof}
f_{\textrm{opt}}(\cdot)=\sum_{i=1}^{M}\alpha_i\mathcal{K}(\cdot, \mathbf{x}_i)
\end{align}
for some \(\alpha_i\in\R\). Substituting \eqref{foptProof} into the regularised empirical risk functional we have that
\begin{widetext}
\begin{align}
\nonumber
\mathcal{L}_{\mathcal{D}}(f)
&=\frac{1}{M}\sum_{i=1}^{M}\left(\sum_{j=1}^{M}\alpha_j\mathcal{K}(\mathbf{x}_i,\mathbf{x}_j)-y_i\right)^2+\lambda\left\langle\sum_{k=1}^{M}\alpha_k\mathcal{K}(\cdot,x_k),\sum_{l=1}^{M}\alpha_l\mathcal{K}(\cdot,x_l)\right\rangle_{\mathcal{R}_{\mathcal{K}}}\\
\nonumber
&=\frac{1}{M}\sum_{i=1}^{M}\left(\sum_{j=1}^{M}\alpha_j\mathcal{K}(\mathbf{x}_i,x_j)-y_i\right)^2+\lambda\sum_{k,l=1}^{M}\alpha_k\mathcal{K}(x_k,x_l)\alpha_l
\end{align}
\end{widetext}
At this point we can view \(\alpha_i\) for \(i\in\{1,\ldots,M\}\) as variational parameters which parameterise the space of possible optimal models. This allows us to express the value of the regularised risk on this space of possible minimisers in terms of \(\vec{\alpha}\in\R^M\) by defining \(\widetilde{\mathcal{L}}_{\mathcal{D}}:\R^M\to\R\) such that
\begin{align}
\label{RKHSKernelMatrixRegularisedRisk}
\widetilde{\mathcal{L}}_{\mathcal{D}}(\vec{\alpha})&=\frac{1}{M}\big|\hskip-0.05cm\big|K\vec{\alpha}-\vec{y}\big|\hskip-0.05cm\big|^2+\lambda\left\langle\vec{\alpha},K\vec{\alpha}\right\rangle,
\end{align}
where \(\|\cdot\|\) is the Euclidean norm, \(\langle\cdot,\cdot\rangle\) is the Euclidean inner product, \(\vec{y}=(y_i)_{i=1}^{M}\in\R^M\) is the vector of training data labels, \(K_{ij}=\mathcal{K}(\mathbf{x}_i,\mathbf{x}_j)\) is the  \emph{kernel matrix} for the input training data \(\{\mathbf{x}_i\}_{i=1}^{M}\). By Theorem \ref{A11}, \eqref{RKHSKernelMatrixRegularisedRisk} defines a smooth convex function \(\R^M\to\R\) so we have that \(\widetilde{\mathcal{L}}_{\mathcal{D}}(\alpha^*)\) is the global minimum of \(\widetilde{\mathcal{L}}_{\mathcal{D}}\) if and only if \(\nabla_{\vec{\alpha}}\widetilde{\mathcal{L}}_{\mathcal{D}}(\vec{\alpha})|_{\vec{\alpha}=\alpha^*}=\vec{0}\). This implies that \(\alpha^*\in\R^M\) satisfies
\begin{align}
\nonumber
&\frac{2}{M}K(K\alpha^*-\vec{y})+2\frac{\lambda}{M} K\alpha^*=\vec{0}\\
\label{Eq28}
\implies&(K^2+\lambda K)\alpha^*=K\vec{y}.
\end{align}
Equation \eqref{Eq28} may be underdetermined and not permit a unique solution for \(\alpha^*\), or it may be overdetermined and not permit a solution at all. In either case, the best choice of \(\alpha^*\) is the minimum-norm solution to
\[\underset{\alpha^*\in\R^M}{\arg\min}\big\|(K^2+\lambda K)\alpha^*-K\vec{y}\big\|^2\]
which is given by
\begin{align}
\nonumber
\alpha^*=(K^2+\lambda K)^+K\vec{y}
\end{align}
where \((\cdot)^+\) is the Moore-Penrose pseudo-inverse (see Chapter 5.5 of~\cite{Golub2013Matrix}). Hence the optimal model is defined by
\begin{equation}
\nonumber
f_{\textrm{opt}}(x)=\sum_{i=1}^{M}\alpha_i\mathcal{K}(x,x_i)
\end{equation}
for all \(x\in\mathcal{X}\) with
\begin{equation}
\nonumber
\vec{\alpha}=(K^2+\lambda K)^+K\vec{y}
\end{equation}
which is \eqref{RKHSOptimalAlphaProof}. This concludes the proof.
\end{proof}

\onecolumngrid
\newgeometry{bottom=1.75cm,left=2cm,right=2cm}
\section{Convergence in the bond dimension}
\label{Convergence in the bond dimension}
\renewcommand\thefigure{B.\arabic{figure}} 
\renewcommand\thetable{B.\arabic{table}}
In this appendix, Figure \ref{ConvergenceInChiFigure} provides insight into how well the labels in the datasets produced in this work have converged with respect to the bond dimensions used to calculate them. 
These results are discussed in more detail in the second last paragraph of Sec. \ref{Section5}.
\vskip-0.25cm
\begin{figure}[H]
\centering
\begin{tikzpicture}[scale=0.9]
\node at (0,0) {\includegraphics[width=0.765\textwidth]{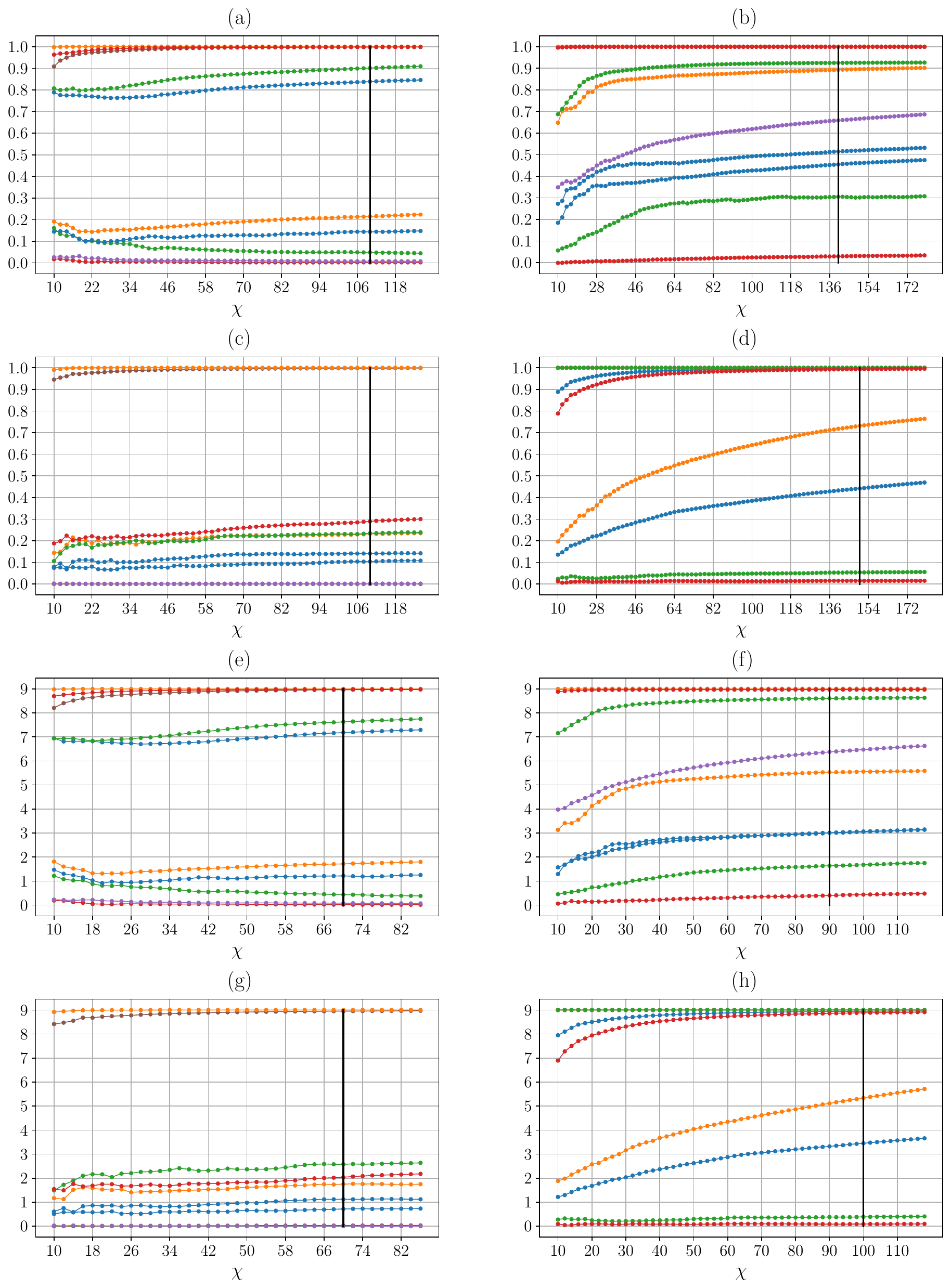}};
\node at (-7.825,7.7) {\scriptsize\(\mathscr{O}_{XZ}\)};
\node at (0.15,7.7) {\scriptsize\(\mathscr{O}_{XZ}\)};
\node at (-7.825,2.6375) {\scriptsize\(\mathscr{O}_{XZ}\)};
\node at (0.15,2.6375) {\scriptsize\(\mathscr{O}_{XZ}\)};
\node at (-7.75,-2.375) {\scriptsize\(\mathscr{O}_{Sum}\)};
\node at (0.2,-2.375) {\scriptsize\(\mathscr{O}_{Sum}\)};
\node at (-7.75,-7.45) {\scriptsize\(\mathscr{O}_{Sum}\)};
\node at (0.2,-7.45) {\scriptsize\(\mathscr{O}_{Sum}\)};
\end{tikzpicture}
\caption{The numerical values of \(\mathscr{O}_{XZ}\) and \(\mathscr{O}_{Sum}\) for ten randomly sampled inputs plotted against the maximum bond dimension \(\chi\) used to calculate them for 40-qubit quantum systems. 
Plots (a), (b), (c) and (d) show \(\mathscr{O}_{XZ}\) with \(\mathscr{H}_1\), \(\mathscr{H}_2\), \(\mathscr{H}_3\) and \(\mathscr{H}_4\), respectively. 
Plots (e), (f), (g) and (h) show \(\mathscr{O}_{Sum}\) with \(\mathscr{H}_1\), \(\mathscr{H}_2\), \(\mathscr{H}_3\) and \(\mathscr{H}_4\), respectively. 
The vertical black lines indicate the value of \(\chi\) used to calculate the rest of the data for the corresponding function and parameterised set of Hamiltonians. 
In the same order as the plots, the chosen values of \(\chi\) are 110, 140, 110, 150, 70, 90, 70 and 100.}
\label{ConvergenceInChiFigure}
\end{figure}

\section{Hyperparameter values}
\label{Tables of hyperparameter values}
\renewcommand\thefigure{C.\arabic{figure}} 
\renewcommand\thetable{C.\arabic{table}}
In this section of the appendices, Table \ref{HyperparameterValues} provides a list of the hyperparameter values which were trialled for each kernel during the 10-fold cross-validation. 
Of all the values which were trialled, Tables \ref{BestHyperparametersH1}, \ref{BestHyperparametersH2}, \ref{BestHyperparametersH3}, and \ref{BestHyperparametersH1} then provide the best hyperparameter values which achieved the highest average coefficient of determination over all ten folds of the training dataset. 
These hyperparameter values were then used to train the kernels on the full training datasets and subsequently apply the trained models to the testing data.
\begin{table}[H]
\begin{adjustbox}{center,max width=\textwidth}
\begin{tabular}{| c c c |}
\cline{1-3}
\multicolumn{1}{|c|}{\textbf{Kernel}}& \multicolumn{1}{c|}{\textbf{Hyperparameter}}
& \multicolumn{1}{c|}{\textbf{Cross-validated hyperparameter values}}\\
\cline{1-3}


\multirow{1}{*}{Linear}& \multicolumn{1}{|c|}{\(\lambda\)}&\footnotesize\(\{0,\,10^{-8},\,10^{-7},\,10^{-6},\,10^{-5},\,10^{-4},\,10^{-3},\,10^{-2},\,10^{-1},\,1,\,10,\,10^{2},\,10^{3},\,10^{4},\,10^{5}\}\)\\
\cline{1-3}


\multirow{4}{*}{Polynomial}
& \multicolumn{1}{|c|}{\(\lambda\)}&\footnotesize\(\{0,\,10^{-8},\,10^{-7},\,10^{-6},\,10^{-5},\,10^{-4},\,10^{-3},\,10^{-2},\,10^{-1},\,1,\,10,\,10^{2},\,10^{3},\,10^{4},\,10^{5}\}\)\\
& \multicolumn{1}{|c|}{\(\gamma\)}&\footnotesize\(\{10^{-3},\,10^{-2},\,10^{-1},\,1,\,10,\,10^{2},\,10^{3}\}\)\\
& \multicolumn{1}{|c|}{\(c_0\)}&\footnotesize\(\{10^{-3},\,10^{-2},\,10^{-1},\,1,\,10,\,10^{2},\,10^{3}\}\)\\
& \multicolumn{1}{|c|}{\(d\)}&\footnotesize\(\{1,\,2,\,3,\,4,\,5,\,6,\,7,\,8,\,9,\,10\}\)\\
\cline{1-3}


\multirow{2}{*}{RBF}
& \multicolumn{1}{|c|}{\(\lambda\)}&\footnotesize\(\{0,\,10^{-8},\,10^{-7},\,10^{-6},\,10^{-5},\,10^{-4},\,10^{-3},\,10^{-2},\,10^{-1},\,1,\,10,\,10^{2},\,10^{3},\,10^{4},\,10^{5}\}\)\\
& \multicolumn{1}{|c|}{\(\gamma\)}&\footnotesize\(\{10^{-3},\,10^{-2},\,10^{-1},\,1,\,10,\,10^{2},\,10^{3}\}\)\\
\cline{1-3}


\multirow{2}{*}{Laplacian}
& \multicolumn{1}{|c|}{\(\lambda\)}&\footnotesize\(\{0,\,10^{-8},\,10^{-7},\,10^{-6},\,10^{-5},\,10^{-4},\,10^{-3},\,10^{-2},\,10^{-1},\,1,\,10,\,10^{2},\,10^{3},\,10^{4},\,10^{5}\}\)\\
& \multicolumn{1}{|c|}{\(\gamma\)}&\footnotesize\(\{10^{-3},\,10^{-2},\,10^{-1},\,1,\,10,\,10^{2},\,10^{3}\}\)\\
\cline{1-3}


\multirow{3}{*}{Sigmoid}
& \multicolumn{1}{|c|}{\(\lambda\)}&\footnotesize\(\{0,\,10^{-8},\,10^{-7},\,10^{-6},\,10^{-5},\,10^{-4},\,10^{-3},\,10^{-2},\,10^{-1},\,1,\,10,\,10^{2},\,10^{3},\,10^{4},\,10^{5}\}\)\\
& \multicolumn{1}{|c|}{\(\gamma\)}&\footnotesize\(\{10^{-3},\,10^{-2},\,10^{-1},\,1,\,10,\,10^{2},\,10^{3}\}\)\\
& \multicolumn{1}{|c|}{\(c_0\)}&\footnotesize\(\{10^{-3},\,10^{-2},\,10^{-1},\,1,\,10,\,10^{2},\,10^{3}\}\)\\
\cline{1-3}


\multirow{1}{*}{Cosine}& \multicolumn{1}{|c|}{\(\lambda\)}&\footnotesize\(\{0,\,10^{-8},\,10^{-7},\,10^{-6},\,10^{-5},\,10^{-4},\,10^{-3},\,10^{-2},\,10^{-1},\,1,\,10,\,10^{2},\,10^{3},\,10^{4},\,10^{5}\}\)\\
\cline{1-3}

\end{tabular}
\end{adjustbox}
\caption{Lists of hyperparameters values which were trialled during the 10-fold cross-validation performed on the training datasets. The hyperparameter \(\lambda\) refers to the regularisation strength which appears in \eqref{RegularisedRisk}, while the hyperparameters \(\gamma\), \(c_0\) and \(d\) correspond to those which appear in definitions of the kernels in \eqref{LinearKernel}, \eqref{PolynomialKernel}, \eqref{RBFKernel}, \eqref{LaplacianKernel}, \eqref{SigmoidKernel} and \eqref{CosineKernel}.}
\label{HyperparameterValues}
\end{table}

\vskip-0.5cm

\begin{table}[H]
\centering
\begin{adjustbox}{center,max width=\textwidth}
\begin{tabular}{| c | c c c c c c c c c c |}

\cline{4-11}

\multicolumn{3}{c|}{}&\multicolumn{8}{c|}{\textbf{Number of qubits} \((n)\)}\\

\cline{1-11}

\multicolumn{1}{|c|}{\textbf{Function}}&\multicolumn{1}{c|}{\textbf{Kernel}}& \multicolumn{1}{c|}{\textbf{Hyperparameter}}
& \multicolumn{1}{c|}{5}& \multicolumn{1}{c|}{10}
& \multicolumn{1}{c|}{15}& \multicolumn{1}{c|}{20}
& \multicolumn{1}{c|}{25}& \multicolumn{1}{c|}{30}
& \multicolumn{1}{c|}{35}& \multicolumn{1}{c|}{40}\\

\cline{1-11}


\multirow{13}{*}{\(\mathscr{O}_{XZ}\)}&\multirow{1}{*}{Linear}& \multicolumn{1}{|c|}{\(\lambda\)}&\footnotesize\(10^3\)&\footnotesize\(10^4\)&\footnotesize\(10^4\)&\footnotesize\(10^5\)&\footnotesize\(10^5\)&\footnotesize\(10^5\)&\footnotesize\(10^5\)&\footnotesize\(10^5\)\\

\cline{2-11}


&\multirow{4}{*}{Polynomial}&\multicolumn{1}{|c|}{\(\lambda\)}&\footnotesize\(10^5\)&\footnotesize\(1\)&\footnotesize\(1\)&\footnotesize\(1\)&\footnotesize\(10^{-1}\)&\footnotesize\(1\)&\footnotesize\(10^4\)&\footnotesize\(10^5\)\\

&&\multicolumn{1}{|c|}{\(\gamma\)}&\footnotesize\(1\)&\footnotesize\(10^{-2}\)&\footnotesize\(10^{-2}\)&\footnotesize\(10^{-2}\)&\footnotesize\(10^{-2}\)&\footnotesize\(10^{-3}\)&\footnotesize\(10^{-2}\)&\footnotesize\(10^{-2}\)\\

&&\multicolumn{1}{|c|}{\(c_0\)}&\footnotesize\(10\)&\footnotesize\(1\)&\footnotesize\(1\)&\footnotesize\(1\)&\footnotesize\(1\)&\footnotesize\(1\)&\footnotesize\(10\)&\footnotesize\(10\)\\

&&\multicolumn{1}{|c|}{\(d\)}&\footnotesize\(6\)&\footnotesize\(9\)&\footnotesize\(6\)&\footnotesize\(6\)&\footnotesize\(6\)&\footnotesize\(10\)&\footnotesize\(6\)&\footnotesize\(6\)\\

\cline{2-11}


&\multirow{2}{*}{RBF}&\multicolumn{1}{|c|}{\footnotesize \(\lambda\)}&\footnotesize\(10^{-2}\)&\footnotesize\(10^{-2}\)&\footnotesize\(10^{-2}\)&\footnotesize\(10^{-2}\)&\footnotesize\(10^{-2}\)&\footnotesize\(10^{-1}\)&\footnotesize\(10^{-1}\)&\footnotesize\(10^{-1}\)\\

&&\multicolumn{1}{|c|}{\(\gamma\)}&\footnotesize\(1\)&\footnotesize\(10^{-1}\)&\footnotesize\(10^{-1}\)&\footnotesize\(10^{-2}\)&\footnotesize\(10^{-2}\)&\footnotesize\(10^{-2}\)&\footnotesize\(10^{-2}\)&\footnotesize\(10^{-2}\)\\

\cline{2-11}


&\multirow{2}{*}{Laplacian}& \multicolumn{1}{|c|}{\(\lambda\)}&\footnotesize\(0\)&\footnotesize\(0\)&\footnotesize\(0\)&\footnotesize\(0\)&\footnotesize\(0\)&\footnotesize\(10^{-3}\)&\footnotesize\(10^{-3}\)&\footnotesize\(10^{-3}\)\\

&&\multicolumn{1}{|c|}{\(\gamma\)}&\footnotesize\(10^{-1}\)&\footnotesize\(10^{-1}\)&\footnotesize\(10^{-1}\)&\footnotesize\(10^{-1}\)&\footnotesize\(10^{-1}\)&\footnotesize\(10^{-2}\)&\footnotesize\(10^{-2}\)&\footnotesize\(10^{-2}\)\\

\cline{2-11}


&\multirow{3}{*}{Sigmoid}& \multicolumn{1}{|c|}{\(\lambda\)}&\footnotesize\(10^{-4}\)&\footnotesize\(10^{-6}\)&\footnotesize\(10^{-5}\)&\footnotesize\(0\)&\footnotesize\(10^{-5}\)&\footnotesize\(10^{-8}\)&\footnotesize\(10^{-1}\)&\footnotesize\(1\)\\

&&\multicolumn{1}{|c|}{\(\gamma\)}&\footnotesize\(10^{-2}\)&\footnotesize\(10^{-2}\)&\footnotesize\(10^{-3}\)&\footnotesize\(10^{-3}\)&\footnotesize\(10^{-3}\)&\footnotesize\(10^{-3}\)&\footnotesize\(10^{-3}\)&\footnotesize\(10^{-3}\)\\

&&\multicolumn{1}{|c|}{\(c_0\)}&\footnotesize\(10^{-1}\)&\footnotesize\(1\)&\footnotesize\(1\)&\footnotesize\(10^{-1}\)&\footnotesize\(1\)&\footnotesize\(1\)&\footnotesize\(10^{-1}\)&\footnotesize\(1\)\\

\cline{2-11}


&\multirow{1}{*}{Cosine}& \multicolumn{1}{|c|}{\(\lambda\)}&\footnotesize\(10^{2}\)&\footnotesize\(10^{3}\)&\footnotesize\(10^{3}\)&\footnotesize\(10^{3}\)&\footnotesize\(10^{3}\)&\footnotesize\(10^{3}\)&\footnotesize\(10^{2}\)&\footnotesize\(10^{2}\)\\

\cline{1-11}


\multirow{13}{*}{\(\mathscr{O}_{Sum}\)}&\multirow{1}{*}{Linear}& \multicolumn{1}{|c|}{\(\lambda\)}&\footnotesize\(10^3\)&\footnotesize\(10^4\)&\footnotesize\(10^4\)&\footnotesize\(10^4\)&\footnotesize\(10^5\)&\footnotesize\(10^5\)&\footnotesize\(10^5\)&\footnotesize\(10^5\)\\

\cline{2-11}


&\multirow{4}{*}{Polynomial}&\multicolumn{1}{|c|}{\(\lambda\)}&\footnotesize\(1\)&\footnotesize\(10^{-3}\)&\footnotesize\(1\)&\footnotesize\(10^{-2}\)&\footnotesize\(10^{-1}\)&\footnotesize\(1\)&\footnotesize\(1\)&\footnotesize\(1\)\\

&&\multicolumn{1}{|c|}{\(\gamma\)}&\footnotesize\(10^{-1}\)&\footnotesize\(10^{-2}\)&\footnotesize\(10^{-2}\)&\footnotesize\(10^{-3}\)&\footnotesize\(10^{-3}\)&\footnotesize\(10^{-3}\)&\footnotesize\(10^{-3}\)&\footnotesize\(10^{-3}\)\\

&&\multicolumn{1}{|c|}{\(c_0\)}&\footnotesize\(1\)&\footnotesize\(1\)&\footnotesize\(1\)&\footnotesize\(1\)&\footnotesize\(1\)&\footnotesize\(1\)&\footnotesize\(1\)&\footnotesize\(1\)\\

&&\multicolumn{1}{|c|}{\(d\)}&\footnotesize\(8\)&\footnotesize\(8\)&\footnotesize\(8\)&\footnotesize\(10\)&\footnotesize\(10\)&\footnotesize\(10\)&\footnotesize\(9\)&\footnotesize\(8\)\\

\cline{2-11}


&\multirow{2}{*}{RBF}&\multicolumn{1}{|c|}{\footnotesize \(\lambda\)}&\footnotesize\(10^{-3}\)&\footnotesize\(10^{-1}\)&\footnotesize\(10^{-2}\)&\footnotesize\(10^{-2}\)&\footnotesize\(10^{-2}\)&\footnotesize\(10^{-1}\)&\footnotesize\(10^{-1}\)&\footnotesize\(10^{-1}\)\\

&&\multicolumn{1}{|c|}{\(\gamma\)}&\footnotesize\(10^{-1}\)&\footnotesize\(10^{-1}\)&\footnotesize\(10^{-1}\)&\footnotesize\(10^{-2}\)&\footnotesize\(10^{-2}\)&\footnotesize\(10^{-2}\)&\footnotesize\(10^{-2}\)&\footnotesize\(10^{-2}\)\\

\cline{2-11}


&\multirow{2}{*}{Laplacian}& \multicolumn{1}{|c|}{\(\lambda\)}&\footnotesize\(10^{-8}\)&\footnotesize\(0\)&\footnotesize\(0\)&\footnotesize\(0\)&\footnotesize\(0\)&\footnotesize\(10^{-3}\)&\footnotesize\(10^{-3}\)&\footnotesize\(10^{-3}\)\\

&&\multicolumn{1}{|c|}{\(\gamma\)}&\footnotesize\(10^{-1}\)&\footnotesize\(10^{-1}\)&\footnotesize\(10^{-1}\)&\footnotesize\(10^{-1}\)&\footnotesize\(10^{-1}\)&\footnotesize\(10^{-2}\)&\footnotesize\(10^{-2}\)&\footnotesize\(10^{-2}\)\\

\cline{2-11}


&\multirow{3}{*}{Sigmoid}& \multicolumn{1}{|c|}{\(\lambda\)}&\footnotesize\(1\)&\footnotesize\(10^{-6}\)&\footnotesize\(10^{-5}\)&\footnotesize\(0\)&\footnotesize\(10^{-5}\)&\footnotesize\(10^{-6}\)&\footnotesize\(10^{-1}\)&\footnotesize\(1\)\\

&&\multicolumn{1}{|c|}{\(\gamma\)}&\footnotesize\(10^{-1}\)&\footnotesize\(10^{-2}\)&\footnotesize\(10^{-3}\)&\footnotesize\(10^{-3}\)&\footnotesize\(10^{-3}\)&\footnotesize\(10^{-3}\)&\footnotesize\(10^{-3}\)&\footnotesize\(10^{-3}\)\\

&&\multicolumn{1}{|c|}{\(c_0\)}&\footnotesize\(10^{-1}\)&\footnotesize\(1\)&\footnotesize\(1\)&\footnotesize\(10^{-1}\)&\footnotesize\(1\)&\footnotesize\(1\)&\footnotesize\(10^{-1}\)&\footnotesize\(1\)\\

\cline{2-11}


&\multirow{1}{*}{Cosine}& \multicolumn{1}{|c|}{\(\lambda\)}&\footnotesize\(10^2\)&\footnotesize\(10^3\)&\footnotesize\(10^3\)&\footnotesize\(10^3\)&\footnotesize\(10^3\)&\footnotesize\(10^2\)&\footnotesize\(10^2\)&\footnotesize\(10^2\)\\

\cline{1-11}

\end{tabular}
\end{adjustbox}
\caption{The best hyperparameters trialled during the 10-fold cross-validation performed on the training datasets for \(\mathscr{H}_1\). Of all the hyperparameters trialled, the ones listed here resulted in models which achieved the highest average coefficient of determination (\(R^2\)) over all ten folds of the training dataset.}
\label{BestHyperparametersH1}
\end{table}


\begin{table}[H]
\begin{adjustbox}{center,max width=\textwidth}
\begin{tabular}{| c | c c c c c c c c c c |}

\cline{4-11}

\multicolumn{3}{c|}{}&\multicolumn{8}{c|}{\textbf{Number of qubits} \((n)\)}\\

\cline{1-11}

\multicolumn{1}{|c|}{\textbf{Function}}&\multicolumn{1}{c|}{\textbf{Kernel}}& \multicolumn{1}{c|}{\textbf{Hyperparameter}}
& \multicolumn{1}{c|}{5}& \multicolumn{1}{c|}{10}
& \multicolumn{1}{c|}{15}& \multicolumn{1}{c|}{20}
& \multicolumn{1}{c|}{25}& \multicolumn{1}{c|}{30}
& \multicolumn{1}{c|}{35}& \multicolumn{1}{c|}{40}\\

\cline{1-11}


\multirow{13}{*}{\(\mathscr{O}_{XZ}\)}&\multirow{1}{*}{Linear}& \multicolumn{1}{|c|}{\(\lambda\)}&\footnotesize\(10^4\)&\footnotesize\(10^4\)&\footnotesize\(10^5\)&\footnotesize\(10^5\)&\footnotesize\(10^5\)&\footnotesize\(10^5\)&\footnotesize\(10^5\)&\footnotesize\(10^5\)\\

\cline{2-11}


&\multirow{4}{*}{Polynomial}&\multicolumn{1}{|c|}{\(\lambda\)}&\footnotesize\(10^{5}\)&\footnotesize\(10^{-1}\)&\footnotesize\(1\)&\footnotesize\(10^{-2}\)&\footnotesize\(10^{-1}\)&\footnotesize\(10^{-1}\)&\footnotesize\(10^{5}\)&\footnotesize\(10^{5}\)\\

&&\multicolumn{1}{|c|}{\(\gamma\)}&\footnotesize\(1\)&\footnotesize\(10^{-2}\)&\footnotesize\(10^{-2}\)&\footnotesize\(10^{-3}\)&\footnotesize\(10^{-3}\)&\footnotesize\(10^{-3}\)&\footnotesize\(10^{-2}\)&\footnotesize\(10^{-2}\)\\

&&\multicolumn{1}{|c|}{\(c_0\)}&\footnotesize\(10\)&\footnotesize\(1\)&\footnotesize\(1\)&\footnotesize\(1\)&\footnotesize\(1\)&\footnotesize\(1\)&\footnotesize\(10\)&\footnotesize\(10\)\\

&&\multicolumn{1}{|c|}{\(d\)}&\footnotesize\(6\)&\footnotesize\(7\)&\footnotesize\(6\)&\footnotesize\(10\)&\footnotesize\(9\)&\footnotesize\(7\)&\footnotesize\(6\)&\footnotesize\(6\)\\

\cline{2-11}


&\multirow{2}{*}{RBF}&\multicolumn{1}{|c|}{\footnotesize \(\lambda\)}&\footnotesize\(10^{-3}\)&\footnotesize\(10^{-1}\)&\footnotesize\(10^{-2}\)&\footnotesize\(10^{-2}\)&\footnotesize\(10^{-1}\)&\footnotesize\(10^{-1}\)&\footnotesize\(10^{-1}\)&\footnotesize\(10^{-1}\)\\

&&\multicolumn{1}{|c|}{\(\gamma\)}&\footnotesize\(1\)&\footnotesize\(10^{-1}\)&\footnotesize\(10^{-1}\)&\footnotesize\(10^{-2}\)&\footnotesize\(10^{-2}\)&\footnotesize\(10^{-2}\)&\footnotesize\(10^{-2}\)&\footnotesize\(10^{-2}\)\\

\cline{2-11}


&\multirow{2}{*}{Laplacian}& \multicolumn{1}{|c|}{\(\lambda\)}&\footnotesize\(0\)&\footnotesize\(10^{-1}\)&\footnotesize\(10^{-2}\)&\footnotesize\(10^{-2}\)&\footnotesize\(0\)&\footnotesize\(0\)&\footnotesize\(0\)&\footnotesize\(10^{-2}\)\\

&&\multicolumn{1}{|c|}{\(\gamma\)}&\footnotesize\(1\)&\footnotesize\(10^{-1}\)&\footnotesize\(10^{-1}\)&\footnotesize\(10^{-1}\)&\footnotesize\(10^{-1}\)&\footnotesize\(10^{-1}\)&\footnotesize\(10^{-1}\)&\footnotesize\(10^{-2}\)\\

\cline{2-11}


&\multirow{3}{*}{Sigmoid}& \multicolumn{1}{|c|}{\(\lambda\)}&\footnotesize\(1\)&\footnotesize\(10^{-8}\)&\footnotesize\(10^{-4}\)&\footnotesize\(10^{-8}\)&\footnotesize\(10^{-5}\)&\footnotesize\(10^{-6}\)&\footnotesize\(10^{-2}\)&\footnotesize\(1\)\\

&&\multicolumn{1}{|c|}{\(\gamma\)}&\footnotesize\(10^{-1}\)&\footnotesize\(10^{-2}\)&\footnotesize\(10^{-3}\)&\footnotesize\(10^{-3}\)&\footnotesize\(10^{-3}\)&\footnotesize\(10^{-3}\)&\footnotesize\(10^{-3}\)&\footnotesize\(10^{-3}\)\\

&&\multicolumn{1}{|c|}{\(c_0\)}&\footnotesize\(10^{-1}\)&\footnotesize\(1\)&\footnotesize\(10^{-1}\)&\footnotesize\(10^{-1}\)&\footnotesize\(1\)&\footnotesize\(1\)&\footnotesize\(1\)&\footnotesize\(1\)\\

\cline{2-11}


&\multirow{1}{*}{Cosine}& \multicolumn{1}{|c|}{\(\lambda\)}&\footnotesize\(10^{3}\)&\footnotesize\(10^{3}\)&\footnotesize\(10^{3}\)&\footnotesize\(10^{4}\)&\footnotesize\(10^{4}\)&\footnotesize\(10^{4}\)&\footnotesize\(10^{4}\)&\footnotesize\(10^{5}\)\\

\cline{1-11}


\multirow{13}{*}{\(\mathscr{O}_{Sum}\)}&\multirow{1}{*}{Linear}& \multicolumn{1}{|c|}{\(\lambda\)}&\footnotesize\(10^4\)&\footnotesize\(10^4\)&\footnotesize\(10^5\)&\footnotesize\(10^5\)&\footnotesize\(10^5\)&\footnotesize\(10^5\)&\footnotesize\(10^5\)&\footnotesize\(10^5\)\\

\cline{2-11}


&\multirow{4}{*}{Polynomial}&\multicolumn{1}{|c|}{\(\lambda\)}&\footnotesize\(1\)&\footnotesize\(10^5\)&\footnotesize\(10^4\)&\footnotesize\(10^{-4}\)&\footnotesize\(10^{-2}\)&\footnotesize\(10^{-2}\)&\footnotesize\(10^{4}\)&\footnotesize\(10^{3}\)\\

&&\multicolumn{1}{|c|}{\(\gamma\)}&\footnotesize\(10^{-1}\)&\footnotesize\(10^{-1}\)&\footnotesize\(10^{-2}\)&\footnotesize\(10^{-3}\)&\footnotesize\(10^{-3}\)&\footnotesize\(10^{-3}\)&\footnotesize\(10^{-3}\)&\footnotesize\(10^{-3}\)\\

&&\multicolumn{1}{|c|}{\(c_0\)}&\footnotesize\(1\)&\footnotesize\(10\)&\footnotesize\(10\)&\footnotesize\(1\)&\footnotesize\(1\)&\footnotesize\(1\)&\footnotesize\(10\)&\footnotesize\(10\)\\

&&\multicolumn{1}{|c|}{\(d\)}&\footnotesize\(8\)&\footnotesize\(8\)&\footnotesize\(8\)&\footnotesize\(6\)&\footnotesize\(7\)&\footnotesize\(6\)&\footnotesize\(9\)&\footnotesize\(8\)\\

\cline{2-11}


&\multirow{2}{*}{RBF}&\multicolumn{1}{|c|}{\footnotesize \(\lambda\)}&\footnotesize\(10^{-3}\)&\footnotesize\(10^{-1}\)&\footnotesize\(10^{-2}\)&\footnotesize\(10^{-2}\)&\footnotesize\(10^{-1}\)&\footnotesize\(10^{-1}\)&\footnotesize\(10^{-1}\)&\footnotesize\(10^{-1}\)\\

&&\multicolumn{1}{|c|}{\(\gamma\)}&\footnotesize\(1\)&\footnotesize\(10^{-1}\)&\footnotesize\(10^{-1}\)&\footnotesize\(10^{-2}\)&\footnotesize\(10^{-2}\)&\footnotesize\(10^{-2}\)&\footnotesize\(10^{-2}\)&\footnotesize\(10^{-2}\)\\

\cline{2-11}


&\multirow{2}{*}{Laplacian}& \multicolumn{1}{|c|}{\(\lambda\)}&\footnotesize\(10^{-1}\)&\footnotesize\(10^{-1}\)&\footnotesize\(10^{-2}\)&\footnotesize\(10^{-2}\)&\footnotesize\(0\)&\footnotesize\(10^{-2}\)&\footnotesize\(10^{-2}\)&\footnotesize\(10^{-2}\)\\

&&\multicolumn{1}{|c|}{\(\gamma\)}&\footnotesize\(10^{-1}\)&\footnotesize\(10^{-1}\)&\footnotesize\(10^{-1}\)&\footnotesize\(10^{-1}\)&\footnotesize\(10^{-1}\)&\footnotesize\(10^{-2}\)&\footnotesize\(10^{-2}\)&\footnotesize\(10^{-2}\)\\

\cline{2-11}


&\multirow{3}{*}{Sigmoid}& \multicolumn{1}{|c|}{\(\lambda\)}&\footnotesize\(0\)&\footnotesize\(10^{-7}\)&\footnotesize\(10^{-5}\)&\footnotesize\(10^{-8}\)&\footnotesize\(10^{-5}\)&\footnotesize\(10^{-6}\)&\footnotesize\(10^{-2}\)&\footnotesize\(1\)\\

&&\multicolumn{1}{|c|}{\(\gamma\)}&\footnotesize\(10^{-2}\)&\footnotesize\(10^{-2}\)&\footnotesize\(10^{-3}\)&\footnotesize\(10^{-3}\)&\footnotesize\(10^{-3}\)&\footnotesize\(10^{-3}\)&\footnotesize\(10^{-3}\)&\footnotesize\(10^{-3}\)\\

&&\multicolumn{1}{|c|}{\(c_0\)}&\footnotesize\(10^{-1}\)&\footnotesize\(1\)&\footnotesize\(1\)&\footnotesize\(10^{-1}\)&\footnotesize\(1\)&\footnotesize\(1\)&\footnotesize\(1\)&\footnotesize\(1\)\\

\cline{2-11}


&\multirow{1}{*}{Cosine}& \multicolumn{1}{|c|}{\(\lambda\)}&\footnotesize\(10^{3}\)&\footnotesize\(10^{3}\)&\footnotesize\(10^{4}\)&\footnotesize\(10^{4}\)&\footnotesize\(10^{4}\)&\footnotesize\(10^{5}\)&\footnotesize\(10^{4}\)&\footnotesize\(10^{5}\)\\

\cline{1-11}

\end{tabular}
\end{adjustbox}
\caption{The best hyperparameters trialled during the 10-fold cross-validation performed on the training datasets for \(\mathscr{H}_2\). Of all the hyperparameters trialled, the ones listed here resulted in models which achieved the highest average coefficient of determination (\(R^2\)) over all ten folds of the training dataset.}
\label{BestHyperparametersH2}
\end{table}


\begin{table}[H]
\begin{adjustbox}{center,max width=\textwidth}
\begin{tabular}{| c | c c c c c c c c c c |}

\cline{4-11}

\multicolumn{3}{c|}{}&\multicolumn{8}{c|}{\textbf{Number of qubits} \((n)\)}\\

\cline{1-11}

\multicolumn{1}{|c|}{\textbf{Function}}&\multicolumn{1}{c|}{\textbf{Kernel}}& \multicolumn{1}{c|}{\textbf{Hyperparameter}}
& \multicolumn{1}{c|}{5}& \multicolumn{1}{c|}{10}
& \multicolumn{1}{c|}{15}& \multicolumn{1}{c|}{20}
& \multicolumn{1}{c|}{25}& \multicolumn{1}{c|}{30}
& \multicolumn{1}{c|}{35}& \multicolumn{1}{c|}{40}\\

\cline{1-11}


\multirow{13}{*}{\(\mathscr{O}_{XZ}\)}&\multirow{1}{*}{Linear}& \multicolumn{1}{|c|}{\(\lambda\)}&\footnotesize\(10^4\)&\footnotesize\(10^5\)&\footnotesize\(10^5\)&\footnotesize\(10^5\)&\footnotesize\(10^5\)&\footnotesize\(10^5\)&\footnotesize\(10^5\)&\footnotesize\(10^5\)\\

\cline{2-11}


&\multirow{4}{*}{Polynomial}&\multicolumn{1}{|c|}{\(\lambda\)}&\footnotesize\(10^5\)&\footnotesize\(10^{-1}\)&\footnotesize\(1\)&\footnotesize\(1\)&\footnotesize\(10^{-1}\)&\footnotesize\(1\)&\footnotesize\(1\)&\footnotesize\(1\)\\

&&\multicolumn{1}{|c|}{\(\gamma\)}&\footnotesize\(1\)&\footnotesize\(10^{-2}\)&\footnotesize\(10^{-2}\)&\footnotesize\(10^{-2}\)&\footnotesize\(10^{-3}\)&\footnotesize\(10^{-3}\)&\footnotesize\(10^{-3}\)&\footnotesize\(10^{-3}\)\\

&&\multicolumn{1}{|c|}{\(c_0\)}&\footnotesize\(10\)&\footnotesize\(1\)&\footnotesize\(1\)&\footnotesize\(1\)&\footnotesize\(1\)&\footnotesize\(1\)&\footnotesize\(1\)&\footnotesize\(1\)\\

&&\multicolumn{1}{|c|}{\(d\)}&\footnotesize\(6\)&\footnotesize\(7\)&\footnotesize\(6\)&\footnotesize\(6\)&\footnotesize\(10\)&\footnotesize\(10\)&\footnotesize\(9\)&\footnotesize\(8\)\\

\cline{2-11}


&\multirow{2}{*}{RBF}&\multicolumn{1}{|c|}{\footnotesize \(\lambda\)}&\footnotesize\(10^{-2}\)&\footnotesize\(10^{-2}\)&\footnotesize\(10^{-2}\)&\footnotesize\(10^{-3}\)&\footnotesize\(10^{-2}\)&\footnotesize\(10^{-2}\)&\footnotesize\(10^{-2}\)&\footnotesize\(10^{-2}\)\\

&&\multicolumn{1}{|c|}{\(\gamma\)}&\footnotesize\(1\)&\footnotesize\(10^{-1}\)&\footnotesize\(10^{-1}\)&\footnotesize\(10^{-2}\)&\footnotesize\(10^{-2}\)&\footnotesize\(10^{-2}\)&\footnotesize\(10^{-2}\)&\footnotesize\(10^{-2}\)\\

\cline{2-11}


&\multirow{2}{*}{Laplacian}& \multicolumn{1}{|c|}{\(\lambda\)}&\footnotesize\(0\)&\footnotesize\(0\)&\footnotesize\(0\)&\footnotesize\(0\)&\footnotesize\(0\)&\footnotesize\(0\)&\footnotesize\(0\)&\footnotesize\(0\)\\

&&\multicolumn{1}{|c|}{\(\gamma\)}&\footnotesize\(1\)&\footnotesize\(10^{-1}\)&\footnotesize\(10^{-1}\)&\footnotesize\(10^{-1}\)&\footnotesize\(10^{-1}\)&\footnotesize\(10^{-1}\)&\footnotesize\(10^{-1}\)&\footnotesize\(10^{-1}\)\\

\cline{2-11}


&\multirow{3}{*}{Sigmoid}& \multicolumn{1}{|c|}{\(\lambda\)}&\footnotesize\(1\)&\footnotesize\(10^{-6}\)&\footnotesize\(10^{-5}\)&\footnotesize\(10^{-8}\)&\footnotesize\(10^{-5}\)&\footnotesize\(0\)&\footnotesize\(10^{-6}\)&\footnotesize\(1\)\\

&&\multicolumn{1}{|c|}{\(\gamma\)}&\footnotesize\(10^{-1}\)&\footnotesize\(10^{-2}\)&\footnotesize\(10^{-3}\)&\footnotesize\(10^{-3}\)&\footnotesize\(10^{-3}\)&\footnotesize\(10^{-3}\)&\footnotesize\(10^{-3}\)&\footnotesize\(10^{-3}\)\\

&&\multicolumn{1}{|c|}{\(c_0\)}&\footnotesize\(10^{-1}\)&\footnotesize\(1\)&\footnotesize\(1\)&\footnotesize\(10^{-1}\)&\footnotesize\(1\)&\footnotesize\(1\)&\footnotesize\(1\)&\footnotesize\(1\)\\

\cline{2-11}


&\multirow{1}{*}{Cosine}& \multicolumn{1}{|c|}{\(\lambda\)}&\footnotesize\(10^{3}\)&\footnotesize\(10^{3}\)&\footnotesize\(10^{4}\)&\footnotesize\(10^{4}\)&\footnotesize\(10^{4}\)&\footnotesize\(10^{4}\)&\footnotesize\(10^{4}\)&\footnotesize\(10^{4}\)\\

\cline{1-11}


\multirow{13}{*}{\(\mathscr{O}_{Sum}\)}&\multirow{1}{*}{Linear}& \multicolumn{1}{|c|}{\(\lambda\)}&\footnotesize\(10^4\)&\footnotesize\(10^5\)&\footnotesize\(10^5\)&\footnotesize\(10^5\)&\footnotesize\(10^5\)&\footnotesize\(10^5\)&\footnotesize\(10^5\)&\footnotesize\(10^5\)\\

\cline{2-11}


&\multirow{4}{*}{Polynomial}&\multicolumn{1}{|c|}{\(\lambda\)}&\footnotesize\(1\)&\footnotesize\(10^{5}\)&\footnotesize\(1\)&\footnotesize\(10\)&\footnotesize\(10^{5}\)&\footnotesize\(10^{5}\)&\footnotesize\(10^{-2}\)&\footnotesize\(10^{-2}\)\\

&&\multicolumn{1}{|c|}{\(\gamma\)}&\footnotesize\(10^{-1}\)&\footnotesize\(10^{-1}\)&\footnotesize\(10^{-2}\)&\footnotesize\(10^{-2}\)&\footnotesize\(10^{-2}\)&\footnotesize\(10^{-2}\)&\footnotesize\(10^{-3}\)&\footnotesize\(10^{-3}\)\\

&&\multicolumn{1}{|c|}{\(c_0\)}&\footnotesize\(1\)&\footnotesize\(10\)&\footnotesize\(1\)&\footnotesize\(1\)&\footnotesize\(10\)&\footnotesize\(10\)&\footnotesize\(1\)&\footnotesize\(1\)\\

&&\multicolumn{1}{|c|}{\(d\)}&\footnotesize\(8\)&\footnotesize\(8\)&\footnotesize\(8\)&\footnotesize\(8\)&\footnotesize\(9\)&\footnotesize\(8\)&\footnotesize\(8\)&\footnotesize\(8\)\\

\cline{2-11}


&\multirow{2}{*}{RBF}&\multicolumn{1}{|c|}{\footnotesize \(\lambda\)}&\footnotesize\(10^{-3}\)&\footnotesize\(10^{-2}\)&\footnotesize\(10^{-2}\)&\footnotesize\(10^{-3}\)&\footnotesize\(10^{-2}\)&\footnotesize\(10^{-2}\)&\footnotesize\(10^{-2}\)&\footnotesize\(10^{-2}\)\\

&&\multicolumn{1}{|c|}{\(\gamma\)}&\footnotesize\(1\)&\footnotesize\(10^{-1}\)&\footnotesize\(10^{-1}\)&\footnotesize\(10^{-2}\)&\footnotesize\(10^{-2}\)&\footnotesize\(10^{-2}\)&\footnotesize\(10^{-2}\)&\footnotesize\(10^{-2}\)\\

\cline{2-11}


&\multirow{2}{*}{Laplacian}& \multicolumn{1}{|c|}{\(\lambda\)}&\footnotesize\(0\)&\footnotesize\(0\)&\footnotesize\(0\)&\footnotesize\(0\)&\footnotesize\(0\)&\footnotesize\(0\)&\footnotesize\(0\)&\footnotesize\(0\)\\

&&\multicolumn{1}{|c|}{\(\gamma\)}&\footnotesize\(1\)&\footnotesize\(10^{-1}\)&\footnotesize\(10^{-1}\)&\footnotesize\(10^{-1}\)&\footnotesize\(10^{-1}\)&\footnotesize\(10^{-1}\)&\footnotesize\(10^{-1}\)&\footnotesize\(10^{-1}\)\\

\cline{2-11}


&\multirow{3}{*}{Sigmoid}& \multicolumn{1}{|c|}{\(\lambda\)}&\footnotesize\(1\)&\footnotesize\(10^{-6}\)&\footnotesize\(10^{-5}\)&\footnotesize\(10^{-8}\)&\footnotesize\(10^{-5}\)&\footnotesize\(0\)&\footnotesize\(10^{-6}\)&\footnotesize\(1\)\\

&&\multicolumn{1}{|c|}{\(\gamma\)}&\footnotesize\(10^{-1}\)&\footnotesize\(10^{-2}\)&\footnotesize\(10^{-3}\)&\footnotesize\(10^{-3}\)&\footnotesize\(10^{-3}\)&\footnotesize\(10^{-3}\)&\footnotesize\(10^{-3}\)&\footnotesize\(10^{-3}\)\\

&&\multicolumn{1}{|c|}{\(c_0\)}&\footnotesize\(10^{-1}\)&\footnotesize\(1\)&\footnotesize\(1\)&\footnotesize\(10^{-1}\)&\footnotesize\(1\)&\footnotesize\(1\)&\footnotesize\(1\)&\footnotesize\(1\)\\

\cline{2-11}


&\multirow{1}{*}{Cosine}& \multicolumn{1}{|c|}{\(\lambda\)}&\footnotesize\(10^{4}\)&\footnotesize\(10^{5}\)&\footnotesize\(10^{5}\)&\footnotesize\(10^{5}\)&\footnotesize\(10^{4}\)&\footnotesize\(10^{4}\)&\footnotesize\(10^{4}\)&\footnotesize\(10^{4}\)\\

\cline{1-11}

\end{tabular}
\end{adjustbox}
\caption{The best hyperparameters trialled during the 10-fold cross-validation performed on the training datasets for \(\mathscr{H}_3\). Of all the hyperparameters trialled, the ones listed here resulted in models which achieved the highest average coefficient of determination (\(R^2\)) over all ten folds of the training dataset.}
\label{BestHyperparametersH3}
\end{table}


\begin{table}[H]
\begin{adjustbox}{center,max width=\textwidth}
\begin{tabular}{| c | c c c c c c c c c c |}

\cline{4-11}

\multicolumn{3}{c|}{}&\multicolumn{8}{c|}{\textbf{Number of qubits} \((n)\)}\\

\cline{1-11}

\multicolumn{1}{|c|}{\textbf{Function}}&\multicolumn{1}{c|}{\textbf{Kernel}}& \multicolumn{1}{c|}{\textbf{Hyperparameter}}
& \multicolumn{1}{c|}{5}& \multicolumn{1}{c|}{10}
& \multicolumn{1}{c|}{15}& \multicolumn{1}{c|}{20}
& \multicolumn{1}{c|}{25}& \multicolumn{1}{c|}{30}
& \multicolumn{1}{c|}{35}& \multicolumn{1}{c|}{40}\\

\cline{1-11}


\multirow{13}{*}{\(\mathscr{O}_{XZ}\)}&\multirow{1}{*}{Linear}& \multicolumn{1}{|c|}{\(\lambda\)}&\footnotesize\(10^4\)&\footnotesize\(10^4\)&\footnotesize\(10^5\)&\footnotesize\(10^5\)&\footnotesize\(10^5\)&\footnotesize\(10^5\)&\footnotesize\(10^5\)&\footnotesize\(10^5\)\\

\cline{2-11}


&\multirow{4}{*}{Polynomial}&\multicolumn{1}{|c|}{\(\lambda\)}&\footnotesize\(10^{-2}\)&\footnotesize\(10^{-2}\)&\footnotesize\(1\)&\footnotesize\(10\)&\footnotesize\(10^{5}\)&\footnotesize\(10\)&\footnotesize\(10^{5}\)&\footnotesize\(10^{5}\)\\

&&\multicolumn{1}{|c|}{\(\gamma\)}&\footnotesize\(10^{-2}\)&\footnotesize\(10^{-3}\)&\footnotesize\(10^{-3}\)&\footnotesize\(10^{-3}\)&\footnotesize\(10^{-2}\)&\footnotesize\(10^{-3}\)&\footnotesize\(10^{-3}\)&\footnotesize\(10^{-3}\)\\

&&\multicolumn{1}{|c|}{\(c_0\)}&\footnotesize\(1\)&\footnotesize\(1\)&\footnotesize\(1\)&\footnotesize\(1\)&\footnotesize\(10\)&\footnotesize\(1\)&\footnotesize\(10\)&\footnotesize\(10\)\\

&&\multicolumn{1}{|c|}{\(d\)}&\footnotesize\(6\)&\footnotesize\(10\)&\footnotesize\(9\)&\footnotesize\(9\)&\footnotesize\(6\)&\footnotesize\(6\)&\footnotesize\(9\)&\footnotesize\(8\)\\

\cline{2-11}


&\multirow{2}{*}{RBF}&\multicolumn{1}{|c|}{\footnotesize \(\lambda\)}&\footnotesize\(10^{-1}\)&\footnotesize\(10^{-2}\)&\footnotesize\(10^{-2}\)&\footnotesize\(10^{-2}\)&\footnotesize\(10^{-2}\)&\footnotesize\(10^{-2}\)&\footnotesize\(10^{-2}\)&\footnotesize\(10^{-2}\)\\

&&\multicolumn{1}{|c|}{\(\gamma\)}&\footnotesize\(10^{-1}\)&\footnotesize\(10^{-2}\)&\footnotesize\(10^{-2}\)&\footnotesize\(10^{-2}\)&\footnotesize\(10^{-3}\)&\footnotesize\(10^{-3}\)&\footnotesize\(10^{-3}\)&\footnotesize\(10^{-3}\)\\

\cline{2-11}


&\multirow{2}{*}{Laplacian}& \multicolumn{1}{|c|}{\(\lambda\)}&\footnotesize\(0\)&\footnotesize\(0\)&\footnotesize\(10^{-3}\)&\footnotesize\(10^{-3}\)&\footnotesize\(10^{-3}\)&\footnotesize\(10^{-3}\)&\footnotesize\(10^{-3}\)&\footnotesize\(10^{-4}\)\\

&&\multicolumn{1}{|c|}{\(\gamma\)}&\footnotesize\(10^{-1}\)&\footnotesize\(10^{-1}\)&\footnotesize\(10^{-2}\)&\footnotesize\(10^{-2}\)&\footnotesize\(10^{-2}\)&\footnotesize\(10^{-2}\)&\footnotesize\(10^{-2}\)&\footnotesize\(10^{-2}\)\\

\cline{2-11}


&\multirow{3}{*}{Sigmoid}& \multicolumn{1}{|c|}{\(\lambda\)}&\footnotesize\(10^{-6}\)&\footnotesize\(10^{-8}\)&\footnotesize\(10^{-5}\)&\footnotesize\(1\)&\footnotesize\(1\)&\footnotesize\(10\)&\footnotesize\(10^{-7}\)&\footnotesize\(10\)\\

&&\multicolumn{1}{|c|}{\(\gamma\)}&\footnotesize\(10^{-2}\)&\footnotesize\(10^{-3}\)&\footnotesize\(10^{-3}\)&\footnotesize\(10^{-3}\)&\footnotesize\(10^{-3}\)&\footnotesize\(10^{-3}\)&\footnotesize\(10^{-3}\)&\footnotesize\(10^{-3}\)\\

&&\multicolumn{1}{|c|}{\(c_0\)}&\footnotesize\(1\)&\footnotesize\(10^{-1}\)&\footnotesize\(1\)&\footnotesize\(1\)&\footnotesize\(10^{-1}\)&\footnotesize\(1\)&\footnotesize\(10\)&\footnotesize\(1\)\\

\cline{2-11}


&\multirow{1}{*}{Cosine}& \multicolumn{1}{|c|}{\(\lambda\)}&\footnotesize\(10^{3}\)&\footnotesize\(10^{3}\)&\footnotesize\(10^{3}\)&\footnotesize\(10^{3}\)&\footnotesize\(10^{3}\)&\footnotesize\(10^{3}\)&\footnotesize\(10^{3}\)&\footnotesize\(10^{3}\)\\

\cline{1-11}


\multirow{13}{*}{\(\mathscr{O}_{Sum}\)}&\multirow{1}{*}{Linear}& \multicolumn{1}{|c|}{\(\lambda\)}&\footnotesize\(10^4\)&\footnotesize\(10^4\)&\footnotesize\(10^5\)&\footnotesize\(10^5\)&\footnotesize\(10^5\)&\footnotesize\(10^5\)&\footnotesize\(10^5\)&\footnotesize\(10^5\)\\

\cline{2-11}


&\multirow{4}{*}{Polynomial}&\multicolumn{1}{|c|}{\(\lambda\)}&\footnotesize\(10^{-2}\)&\footnotesize\(10^{-2}\)&\footnotesize\(1\)&\footnotesize\(10\)&\footnotesize\(10\)&\footnotesize\(10^{5}\)&\footnotesize\(10^{4}\)&\footnotesize\(10^{5}\)\\

&&\multicolumn{1}{|c|}{\(\gamma\)}&\footnotesize\(10^{-2}\)&\footnotesize\(10^{-3}\)&\footnotesize\(10^{-3}\)&\footnotesize\(10^{-3}\)&\footnotesize\(10^{-3}\)&\footnotesize\(10^{-3}\)&\footnotesize\(10^{-3}\)&\footnotesize\(10^{-3}\)\\

&&\multicolumn{1}{|c|}{\(c_0\)}&\footnotesize\(1\)&\footnotesize\(1\)&\footnotesize\(1\)&\footnotesize\(1\)&\footnotesize\(10\)&\footnotesize\(10\)&\footnotesize\(10\)&\footnotesize\(10\)\\

&&\multicolumn{1}{|c|}{\(d\)}&\footnotesize\(9\)&\footnotesize\(9\)&\footnotesize\(9\)&\footnotesize\(8\)&\footnotesize\(6\)&\footnotesize\(8\)&\footnotesize\(7\)&\footnotesize\(7\)\\

\cline{2-11}


&\multirow{2}{*}{RBF}&\multicolumn{1}{|c|}{\footnotesize \(\lambda\)}&\footnotesize\(10^{-1}\)&\footnotesize\(10^{-2}\)&\footnotesize\(10^{-2}\)&\footnotesize\(10^{-2}\)&\footnotesize\(10^{-2}\)&\footnotesize\(10^{-2}\)&\footnotesize\(10^{-2}\)&\footnotesize\(10^{-2}\)\\

&&\multicolumn{1}{|c|}{\(\gamma\)}&\footnotesize\(10^{-1}\)&\footnotesize\(10^{-2}\)&\footnotesize\(10^{-2}\)&\footnotesize\(10^{-2}\)&\footnotesize\(10^{-3}\)&\footnotesize\(10^{-3}\)&\footnotesize\(10^{-3}\)&\footnotesize\(10^{-3}\)\\

\cline{2-11}


&\multirow{2}{*}{Laplacian}& \multicolumn{1}{|c|}{\(\lambda\)}&\footnotesize\(0\)&\footnotesize\(0\)&\footnotesize\(10^{-3}\)&\footnotesize\(10^{-3}\)&\footnotesize\(10^{-3}\)&\footnotesize\(10^{-3}\)&\footnotesize\(10^{-8}\)&\footnotesize\(10^{-8}\)\\

&&\multicolumn{1}{|c|}{\(\gamma\)}&\footnotesize\(10^{-1}\)&\footnotesize\(10^{-1}\)&\footnotesize\(10^{-2}\)&\footnotesize\(10^{-2}\)&\footnotesize\(10^{-2}\)&\footnotesize\(10^{-2}\)&\footnotesize\(10^{-2}\)&\footnotesize\(10^{-2}\)\\

\cline{2-11}


&\multirow{3}{*}{Sigmoid}& \multicolumn{1}{|c|}{\(\lambda\)}&\footnotesize\(10^{-6}\)&\footnotesize\(10^{-8}\)&\footnotesize\(10^{-5}\)&\footnotesize\(1\)&\footnotesize\(1\)&\footnotesize\(10\)&\footnotesize\(10^{-7}\)&\footnotesize\(10\)\\

&&\multicolumn{1}{|c|}{\(\gamma\)}&\footnotesize\(10^{-2}\)&\footnotesize\(10^{-3}\)&\footnotesize\(10^{-3}\)&\footnotesize\(10^{-3}\)&\footnotesize\(10^{-3}\)&\footnotesize\(10^{-3}\)&\footnotesize\(10^{-3}\)&\footnotesize\(10^{-3}\)\\

&&\multicolumn{1}{|c|}{\(c_0\)}&\footnotesize\(1\)&\footnotesize\(10^{-1}\)&\footnotesize\(1\)&\footnotesize\(1\)&\footnotesize\(10^{-1}\)&\footnotesize\(1\)&\footnotesize\(10\)&\footnotesize\(1\)\\

\cline{2-11}


&\multirow{1}{*}{Cosine}& \multicolumn{1}{|c|}{\(\lambda\)}&\footnotesize\(10^{3}\)&\footnotesize\(10^{3}\)&\footnotesize\(10^{3}\)&\footnotesize\(10^{3}\)&\footnotesize\(10^{3}\)&\footnotesize\(10^{3}\)&\footnotesize\(10^{3}\)&\footnotesize\(10^{3}\)\\

\cline{1-11}

\end{tabular}
\end{adjustbox}
\caption{The best hyperparameters trialled during the 10-fold cross-validation performed on the training datasets for \(\mathscr{H}_4\). Of all the hyperparameters trialled, the ones listed here resulted in models which achieved the highest average coefficient of determination (\(R^2\)) over all ten folds of the training dataset.}
\label{BestHyperparametersH4}
\end{table}
\newpage
\section{Numerical results}
\label{Tables of numerical results}
\renewcommand\thefigure{D.\arabic{figure}} 
\renewcommand\thetable{D.\arabic{table}}
The tables contained in this section of the appendices contain the majority of the numerical results reported in this work. Here we provide learning performance metrics for the models trained on all 1000 training datapoints with the hyperparameter values listed in the last four tables of Appendix \ref{Tables of hyperparameter values}. Specifically, for every choice of $\mathscr{O}_{XZ}$ or $\mathscr{O}_{Sum}$ with one of the parameterised sets of Hamiltonians $\mathscr{H}_1$, $\mathscr{H}_2$, $\mathscr{H}_3$, or $\mathscr{H}_4$, we provide the performance metrics calculated based on predictions made by the trained models on both the training and testing sets.


\subsection{Results for \(\mathscr{O}_{XZ}\) and \(\mathscr{H}_1\)}

\begin{table}[H]
\begin{adjustbox}{center,max width=\textwidth}
\begin{tabular}{| c | c c c c c c c c c c |}

\cline{4-11}

\multicolumn{3}{c|}{}&\multicolumn{8}{c|}{\textbf{Number of qubits} \((n)\)}\\

\cline{1-11}

\multicolumn{1}{|c|}{\textbf{Dataset}}&\multicolumn{1}{c|}{\textbf{Kernel}}& \multicolumn{1}{c|}{\textbf{Metric}}
& \multicolumn{1}{c|}{5}& \multicolumn{1}{c|}{10}
& \multicolumn{1}{c|}{15}& \multicolumn{1}{c|}{20}
& \multicolumn{1}{c|}{25}& \multicolumn{1}{c|}{30}
& \multicolumn{1}{c|}{35}& \multicolumn{1}{c|}{40}\\

\cline{1-11}


\multirow{18}{*}{Testing}&\multirow{3}{*}{\footnotesize Linear}& \multicolumn{1}{|c|}{\footnotesize \(R^2\)}&\footnotesize-1.9284&\footnotesize-4.1339&\footnotesize-3.5942&\footnotesize-3.7616&\footnotesize-3.4297&\footnotesize-3.3197&\footnotesize-3.0432&\footnotesize-2.8091\\

&&\multicolumn{1}{|c|}{\footnotesize RMSE}&\footnotesize0.7886&\footnotesize0.8122&\footnotesize0.8156&\footnotesize0.8214&\footnotesize0.8156&\footnotesize0.8147&\footnotesize0.8030&\footnotesize0.7941\\

&&\multicolumn{1}{|c|}{\footnotesize MAE}&\footnotesize0.7091&\footnotesize0.7296&\footnotesize0.7275&\footnotesize0.7307&\footnotesize0.7186&\footnotesize0.7159&\footnotesize0.6991&\footnotesize0.6857\\

\cline{2-11}


&\multirow{3}{*}{\footnotesize Polynomial}&\multicolumn{1}{|c|}{\footnotesize \(R^2\)}&\footnotesize0.8604&\footnotesize0.8265&\footnotesize0.7872&\footnotesize0.8276&\footnotesize0.8145&\footnotesize0.8132&\footnotesize0.8154&\footnotesize0.8097\\

&&\multicolumn{1}{|c|}{\footnotesize RMSE}&\footnotesize0.1722&\footnotesize0.1493&\footnotesize0.1755&\footnotesize0.1563&\footnotesize0.1669&\footnotesize0.1694&\footnotesize0.1716&\footnotesize0.1775\\

&&\multicolumn{1}{|c|}{\footnotesize MAE}&\footnotesize0.1228&\footnotesize0.1076&\footnotesize0.1185&\footnotesize0.1121&\footnotesize0.1200&\footnotesize0.1191&\footnotesize0.1249&\footnotesize0.1274\\

\cline{2-11}


&\multirow{3}{*}{\footnotesize RBF}&\multicolumn{1}{|c|}{\footnotesize \(R^2\)}&\footnotesize\textbf{0.9646}&\footnotesize0.9454&\footnotesize0.8910&\footnotesize0.8963&\footnotesize0.8771&\footnotesize0.9038&\footnotesize0.9111&\footnotesize0.9246\\

&&\multicolumn{1}{|c|}{\footnotesize RMSE}&\footnotesize\textbf{0.0867}&\footnotesize0.0837&\footnotesize0.1256&\footnotesize0.1212&\footnotesize0.1359&\footnotesize0.1216&\footnotesize0.1191&\footnotesize0.1117\\

&&\multicolumn{1}{|c|}{\footnotesize MAE}&\footnotesize\textbf{0.0431}&\footnotesize0.0489&\footnotesize0.0586&\footnotesize0.0772&\footnotesize0.0790&\footnotesize0.0739&\footnotesize0.0674&\footnotesize0.0594\\

\cline{2-11}


&\multirow{3}{*}{\footnotesize Laplacian}&\multicolumn{1}{|c|}{\footnotesize \(R^2\)}&\footnotesize0.9391&\footnotesize\textbf{0.9696}&\footnotesize\textbf{0.9228}&\footnotesize\textbf{0.9547}&\footnotesize\textbf{0.9356}&\footnotesize\textbf{0.9362}&\footnotesize\textbf{0.9401}&\footnotesize\textbf{0.9542}\\

&&\multicolumn{1}{|c|}{\footnotesize RMSE}&\footnotesize0.1137&\footnotesize\textbf{0.0625}&\footnotesize\textbf{0.1057}&\footnotesize\textbf{0.0801}&\footnotesize\textbf{0.0983}&\footnotesize\textbf{0.0990}&\footnotesize\textbf{0.0978}&\footnotesize\textbf{0.0871}\\

&&\multicolumn{1}{|c|}{\footnotesize MAE}&\footnotesize0.0763&\footnotesize\textbf{0.0388}&\footnotesize\textbf{0.0495}&\footnotesize\textbf{0.0463}&\footnotesize\textbf{0.0529}&\footnotesize\textbf{0.0660}&\footnotesize\textbf{0.0636}&\footnotesize\textbf{0.0559}\\

\cline{2-11}


&\multirow{3}{*}{\footnotesize Sigmoid}&\multicolumn{1}{|c|}{\footnotesize \(R^2\)}&\footnotesize0.6777&\footnotesize0.8639&\footnotesize0.6475&\footnotesize0.8191&\footnotesize0.8144&\footnotesize0.8461&\footnotesize0.6852&\footnotesize0.6400\\

&&\multicolumn{1}{|c|}{\footnotesize RMSE}&\footnotesize0.2616&\footnotesize0.1322&\footnotesize0.2259&\footnotesize0.1601&\footnotesize0.1669&\footnotesize0.1538&\footnotesize0.2240&\footnotesize0.2441\\

&&\multicolumn{1}{|c|}{\footnotesize MAE}&\footnotesize0.1852&\footnotesize0.0895&\footnotesize0.1617&\footnotesize0.1123&\footnotesize0.1176&\footnotesize0.1081&\footnotesize0.1696&\footnotesize0.1899\\

\cline{2-11}


&\multirow{3}{*}{\footnotesize Cosine}&\multicolumn{1}{|c|}{\footnotesize \(R^2\)}&\footnotesize-1.9508&\footnotesize-4.1287&\footnotesize-3.5710&\footnotesize-3.7659&\footnotesize-3.4271&\footnotesize-3.3113&\footnotesize-3.0796&\footnotesize-2.8382\\

&&\multicolumn{1}{|c|}{\footnotesize RMSE}&\footnotesize0.7916&\footnotesize0.8117&\footnotesize0.8135&\footnotesize0.8218&\footnotesize0.8153&\footnotesize0.8139&\footnotesize0.8066&\footnotesize0.7971\\

&&\multicolumn{1}{|c|}{\footnotesize MAE}&\footnotesize0.7116&\footnotesize0.7285&\footnotesize0.7240&\footnotesize0.7306&\footnotesize0.7175&\footnotesize0.7136&\footnotesize0.7024&\footnotesize0.6882\\

\cline{1-11}


\multirow{18}{*}{Training}&\multirow{3}{*}{\footnotesize Linear}& \multicolumn{1}{|c|}{\footnotesize \(R^2\)}&\footnotesize-1.7147&\footnotesize -3.4479&\footnotesize -3.1980&\footnotesize -3.1767&\footnotesize -2.9977&\footnotesize -2.8207&\footnotesize -2.6211&\footnotesize -2.4198\\

&&\multicolumn{1}{|c|}{\footnotesize RMSE}&\footnotesize0.7692&\footnotesize 0.7919&\footnotesize 0.7971&\footnotesize 0.8052&\footnotesize 0.8025&\footnotesize 0.7979&\footnotesize 0.7870&\footnotesize 0.7745\\

&&\multicolumn{1}{|c|}{\footnotesize MAE}&\footnotesize0.6823&\footnotesize 0.6985&\footnotesize 0.6996&\footnotesize 0.7035&\footnotesize 0.6971&\footnotesize 0.6893&\footnotesize 0.6750&\footnotesize 0.6589\\

\cline{2-11}


&\multirow{3}{*}{\footnotesize Polynomial}&\multicolumn{1}{|c|}{\footnotesize \(R^2\)}&\footnotesize0.8942&\footnotesize 0.8747&\footnotesize 0.8709&\footnotesize 0.8712&\footnotesize 0.8715&\footnotesize 0.8779&\footnotesize 0.8650&\footnotesize 0.8572\\

&&\multicolumn{1}{|c|}{\footnotesize RMSE}&\footnotesize0.1519&\footnotesize 0.1329&\footnotesize 0.1398&\footnotesize 0.1414&\footnotesize 0.1439&\footnotesize 0.1426&\footnotesize 0.1520&\footnotesize 0.1583\\

&&\multicolumn{1}{|c|}{\footnotesize MAE}&\footnotesize0.1071&\footnotesize 0.0939&\footnotesize 0.0992&\footnotesize 0.1006&\footnotesize 0.1035&\footnotesize 0.0997&\footnotesize 0.1078&\footnotesize 0.1110\\

\cline{2-11}


&\multirow{3}{*}{\footnotesize RBF}&\multicolumn{1}{|c|}{\footnotesize \(R^2\)}&\footnotesize0.9997&\footnotesize 0.9921&\footnotesize 0.9987&\footnotesize 0.9578&\footnotesize 0.9724&\footnotesize 0.9686&\footnotesize 0.9803&\footnotesize 0.9871\\

&&\multicolumn{1}{|c|}{\footnotesize RMSE}&\footnotesize0.0077&\footnotesize 0.0334&\footnotesize 0.0138&\footnotesize 0.0809&\footnotesize 0.0667&\footnotesize 0.0723&\footnotesize 0.0580&\footnotesize 0.0476\\

&&\multicolumn{1}{|c|}{\footnotesize MAE}&\footnotesize0.0049&\footnotesize 0.0224&\footnotesize 0.0078&\footnotesize 0.0557&\footnotesize 0.0452&\footnotesize 0.0491&\footnotesize 0.0390&\footnotesize 0.0308\\

\cline{2-11}


&\multirow{3}{*}{\footnotesize Laplacian}& \multicolumn{1}{|c|}{\footnotesize \(R^2\)}&\footnotesize\textbf{1.0000}&\footnotesize \textbf{1.0000}&\footnotesize \textbf{1.0000}&\footnotesize \textbf{1.0000}&\footnotesize \textbf{1.0000}&\footnotesize \textbf{0.9997}&\footnotesize \textbf{0.9998}&\footnotesize \textbf{0.9999}\\

&&\multicolumn{1}{|c|}{\footnotesize RMSE}&\footnotesize\textbf{0.0000}&\footnotesize \textbf{0.0000}&\footnotesize \textbf{0.0000}&\footnotesize \textbf{0.0000}&\footnotesize \textbf{0.0000}&\footnotesize \textbf{0.0075}&\footnotesize \textbf{0.0059}&\footnotesize \textbf{0.0047}\\

&&\multicolumn{1}{|c|}{\footnotesize MAE}&\footnotesize\textbf{0.0000}&\footnotesize \textbf{0.0000}&\footnotesize \textbf{0.0000}&\footnotesize \textbf{0.0000}&\footnotesize \textbf{0.0000}&\footnotesize \textbf{0.0054}&\footnotesize \textbf{0.0042}&\footnotesize \textbf{0.0033}\\

\cline{2-11}


&\multirow{3}{*}{\footnotesize Sigmoid}&\multicolumn{1}{|c|}{\footnotesize \(R^2\)}&\footnotesize0.7042&\footnotesize 0.9410&\footnotesize 0.7206&\footnotesize 0.8773&\footnotesize 0.8743&\footnotesize 0.9130&\footnotesize 0.7229&\footnotesize 0.6310\\

&&\multicolumn{1}{|c|}{\footnotesize RMSE}&\footnotesize0.2539&\footnotesize 0.0912&\footnotesize 0.2056&\footnotesize 0.1380&\footnotesize 0.1423&\footnotesize 0.1204&\footnotesize 0.2177&\footnotesize 0.2544\\

&&\multicolumn{1}{|c|}{\footnotesize MAE}&\footnotesize0.1791&\footnotesize 0.0666&\footnotesize 0.1470&\footnotesize 0.0996&\footnotesize 0.1009&\footnotesize 0.0881&\footnotesize 0.1567&\footnotesize 0.1892\\

\cline{2-11}


&\multirow{3}{*}{\footnotesize Cosine}&\multicolumn{1}{|c|}{\footnotesize \(R^2\)}&\footnotesize-1.7129&\footnotesize-3.4530&\footnotesize-3.2072&\footnotesize-3.1761&\footnotesize-2.9999&\footnotesize-2.8250&\footnotesize-2.6125&\footnotesize-2.4122\\

&&\multicolumn{1}{|c|}{\footnotesize RMSE}&\footnotesize0.7690&\footnotesize 0.7923&\footnotesize 0.7979&\footnotesize 0.8052&\footnotesize 0.8027&\footnotesize 0.7983&\footnotesize 0.7860&\footnotesize 0.7736\\

&&\multicolumn{1}{|c|}{\footnotesize MAE}&\footnotesize0.6816&\footnotesize 0.6983&\footnotesize 0.6983&\footnotesize 0.7030&\footnotesize 0.6961&\footnotesize 0.6873&\footnotesize 0.6746&\footnotesize 0.6578\\

\cline{1-11}

\end{tabular}
\end{adjustbox}
\caption{\textbf{(\(\mathscr{O}_{XZ}\) with \(\mathscr{H}_1\))} Learning performance metrics for the models trained on all 1000 training datapoints using the hyperparameters listed in Table \ref{BestHyperparametersH1}. The metrics are calculated based on the predictions made by the models on the training and testing sets with labels determined by \(\mathscr{O}_{XZ}\) with \(\mathscr{H}_1\). The best values for each qubit number and each metric are in bold text for the training and testing datasets.}
\label{ResultsOXZH1}
\end{table}


\subsection{Results for \(\mathscr{O}_{XZ}\) and \(\mathscr{H}_2\)}

\begin{table}[H]
\begin{adjustbox}{center,max width=\textwidth}
\begin{tabular}{| c | c c c c c c c c c c |}

\cline{4-11}

\multicolumn{3}{c|}{}&\multicolumn{8}{c|}{\textbf{Number of qubits} \((n)\)}\\

\cline{1-11}

\multicolumn{1}{|c|}{\textbf{Dataset}}&\multicolumn{1}{c|}{\textbf{Kernel}}& \multicolumn{1}{c|}{\textbf{Metric}}
& \multicolumn{1}{c|}{5}& \multicolumn{1}{c|}{10}
& \multicolumn{1}{c|}{15}& \multicolumn{1}{c|}{20}
& \multicolumn{1}{c|}{25}& \multicolumn{1}{c|}{30}
& \multicolumn{1}{c|}{35}& \multicolumn{1}{c|}{40}\\

\cline{1-11}


\multirow{18}{*}{Testing}&\multirow{3}{*}{\footnotesize Linear}& \multicolumn{1}{|c|}{\footnotesize \(R^2\)}&\footnotesize-2.1150&\footnotesize-3.8313&\footnotesize-4.6197&\footnotesize-4.9758&\footnotesize-4.8876&\footnotesize-4.6711&\footnotesize-4.4779&\footnotesize-4.2726\\

&&\multicolumn{1}{|c|}{\footnotesize RMSE}&\footnotesize0.7436&\footnotesize0.7882&\footnotesize0.8055&\footnotesize0.8159&\footnotesize0.8166&\footnotesize0.8162&\footnotesize0.8148&\footnotesize0.8127\\

&&\multicolumn{1}{|c|}{\footnotesize MAE}&\footnotesize0.6594&\footnotesize0.7145&\footnotesize0.7385&\footnotesize0.7505&\footnotesize0.7493&\footnotesize0.7454&\footnotesize0.7407&\footnotesize0.7354\\

\cline{2-11}


&\multirow{3}{*}{\footnotesize Polynomial}&\multicolumn{1}{|c|}{\footnotesize \(R^2\)}&\footnotesize0.6585&\footnotesize0.6609&\footnotesize0.6947&\footnotesize0.7395&\footnotesize0.7397&\footnotesize0.7450&\footnotesize0.7500&\footnotesize0.7739\\

&&\multicolumn{1}{|c|}{\footnotesize RMSE}&\footnotesize0.2462&\footnotesize0.2088&\footnotesize0.1877&\footnotesize0.1703&\footnotesize0.1717&\footnotesize0.1731&\footnotesize0.1741&\footnotesize0.1683\\

&&\multicolumn{1}{|c|}{\footnotesize MAE}&\footnotesize0.1804&\footnotesize0.1481&\footnotesize0.1335&\footnotesize0.1248&\footnotesize0.1230&\footnotesize0.1264&\footnotesize0.1278&\footnotesize0.1254\\

\cline{2-11}


&\multirow{3}{*}{\footnotesize RBF}&\multicolumn{1}{|c|}{\footnotesize \(R^2\)}&\footnotesize\textbf{0.9656}&\footnotesize\textbf{0.8670}&\footnotesize0.8306&\footnotesize0.8638&\footnotesize0.8725&\footnotesize0.8869&\footnotesize0.8959&\footnotesize\textbf{0.9095}\\

&&\multicolumn{1}{|c|}{\footnotesize RMSE}&\footnotesize\textbf{0.0782}&\footnotesize\textbf{0.1308}&\footnotesize0.1399&\footnotesize0.1232&\footnotesize0.1202&\footnotesize0.1152&\footnotesize0.1123&\footnotesize\textbf{0.1065}\\

&&\multicolumn{1}{|c|}{\footnotesize MAE}&\footnotesize\textbf{0.0457}&\footnotesize\textbf{0.0848}&\footnotesize0.0894&\footnotesize0.0865&\footnotesize0.0827&\footnotesize0.0804&\footnotesize0.0740&\footnotesize\textbf{0.0695}\\

\cline{2-11}


&\multirow{3}{*}{\footnotesize Laplacian}&\multicolumn{1}{|c|}{\footnotesize \(R^2\)}&\footnotesize0.8965&\footnotesize0.8536&\footnotesize\textbf{0.8452}&\footnotesize\textbf{0.8751}&\footnotesize\textbf{0.8834}&\footnotesize\textbf{0.8970}&\footnotesize\textbf{0.8982}&\footnotesize0.8789\\

&&\multicolumn{1}{|c|}{\footnotesize RMSE}&\footnotesize0.1355&\footnotesize0.1372&\footnotesize\textbf{0.1337}&\footnotesize\textbf{0.1179}&\footnotesize\textbf{0.1149}&\footnotesize\textbf{0.1100}&\footnotesize\textbf{0.1110}&\footnotesize0.1232\\

&&\multicolumn{1}{|c|}{\footnotesize MAE}&\footnotesize0.0839&\footnotesize0.0920&\footnotesize\textbf{0.0858}&\footnotesize\textbf{0.0778}&\footnotesize\textbf{0.0744}&\footnotesize\textbf{0.0709}&\footnotesize\textbf{0.0716}&\footnotesize0.0880\\

\cline{2-11}


&\multirow{3}{*}{\footnotesize Sigmoid}&\multicolumn{1}{|c|}{\footnotesize \(R^2\)}&\footnotesize0.3878&\footnotesize0.7562&\footnotesize0.5072&\footnotesize0.7333&\footnotesize0.7533&\footnotesize0.8132&\footnotesize0.6145&\footnotesize0.5679\\

&&\multicolumn{1}{|c|}{\footnotesize RMSE}&\footnotesize0.3297&\footnotesize0.1771&\footnotesize0.2385&\footnotesize0.1724&\footnotesize0.1671&\footnotesize0.1482&\footnotesize0.2161&\footnotesize0.2327\\

&&\multicolumn{1}{|c|}{\footnotesize MAE}&\footnotesize0.2513&\footnotesize0.1221&\footnotesize0.1712&\footnotesize0.1254&\footnotesize0.1198&\footnotesize0.1063&\footnotesize0.1541&\footnotesize0.1646\\

\cline{2-11}


&\multirow{3}{*}{\footnotesize Cosine}& \multicolumn{1}{|c|}{\footnotesize \(R^2\)}&\footnotesize-2.1225&\footnotesize-3.8386&\footnotesize-4.6360&\footnotesize-4.9752&\footnotesize-4.8860&\footnotesize-4.6687&\footnotesize-4.4737&\footnotesize-4.2664\\

&&\multicolumn{1}{|c|}{\footnotesize RMSE}&\footnotesize0.7445&\footnotesize0.7888&\footnotesize0.8067&\footnotesize0.8158&\footnotesize0.8165&\footnotesize0.8161&\footnotesize0.8144&\footnotesize0.8122\\

&&\multicolumn{1}{|c|}{\footnotesize MAE}&\footnotesize0.6603&\footnotesize0.7146&\footnotesize0.7396&\footnotesize0.7502&\footnotesize0.7488&\footnotesize0.7447&\footnotesize0.7400&\footnotesize0.7343\\

\cline{1-11} 


\multirow{18}{*}{Training}&\multirow{3}{*}{\footnotesize Linear}& \multicolumn{1}{|c|}{\footnotesize \(R^2\)}&\footnotesize-2.0465&\footnotesize -3.9309&\footnotesize -5.1218&\footnotesize -5.1428&\footnotesize -5.2954&\footnotesize -4.9662&\footnotesize -4.8482&\footnotesize -4.5928\\

&&\multicolumn{1}{|c|}{\footnotesize RMSE}&\footnotesize0.7533&\footnotesize 0.7960&\footnotesize 0.8159&\footnotesize 0.8177&\footnotesize 0.8237&\footnotesize 0.8204&\footnotesize 0.8194&\footnotesize 0.8164\\

&&\multicolumn{1}{|c|}{\footnotesize MAE}&\footnotesize0.6691&\footnotesize 0.7256&\footnotesize 0.7532&\footnotesize 0.7537&\footnotesize 0.7588&\footnotesize 0.7515&\footnotesize 0.7479&\footnotesize 0.7414\\

\cline{2-11}


&\multirow{3}{*}{\footnotesize Polynomial}&\multicolumn{1}{|c|}{\footnotesize \(R^2\)}&\footnotesize0.7107&\footnotesize 0.6852&\footnotesize 0.7371&\footnotesize 0.7708&\footnotesize 0.7728&\footnotesize 0.7736&\footnotesize 0.7800&\footnotesize 0.7957\\

&&\multicolumn{1}{|c|}{\footnotesize RMSE}&\footnotesize0.2321&\footnotesize 0.2011&\footnotesize 0.1691&\footnotesize 0.1580&\footnotesize 0.1565&\footnotesize 0.1598&\footnotesize 0.1589&\footnotesize 0.1560\\

&&\multicolumn{1}{|c|}{\footnotesize MAE}&\footnotesize0.1741&\footnotesize 0.1374&\footnotesize 0.1204&\footnotesize 0.1111&\footnotesize 0.1100&\footnotesize 0.1115&\footnotesize 0.1125&\footnotesize 0.1123\\

\cline{2-11}


&\multirow{3}{*}{\footnotesize RBF}&\multicolumn{1}{|c|}{\footnotesize \(R^2\)}&\footnotesize0.9999&\footnotesize 0.9037&\footnotesize 0.9958&\footnotesize 0.9047&\footnotesize 0.9131&\footnotesize 0.9341&\footnotesize 0.9571&\footnotesize 0.9720\\

&&\multicolumn{1}{|c|}{\footnotesize RMSE}&\footnotesize0.0035&\footnotesize 0.1112&\footnotesize 0.0215&\footnotesize 0.1018&\footnotesize 0.0968&\footnotesize 0.0862&\footnotesize 0.0702&\footnotesize 0.0578\\

&&\multicolumn{1}{|c|}{\footnotesize MAE}&\footnotesize0.0020&\footnotesize 0.0663&\footnotesize 0.0124&\footnotesize 0.0649&\footnotesize 0.0642&\footnotesize 0.0556&\footnotesize 0.0457&\footnotesize 0.0368\\

\cline{2-11}


&\multirow{3}{*}{\footnotesize Laplacian}& \multicolumn{1}{|c|}{\footnotesize \(R^2\)}&\footnotesize\textbf{1.0000}&\footnotesize \textbf{0.9544}&\footnotesize \textbf{0.9993}&\footnotesize \textbf{0.9997}&\footnotesize \textbf{1.0000}&\footnotesize \textbf{1.0000}&\footnotesize \textbf{1.0000}&\footnotesize \textbf{0.9894}\\

&&\multicolumn{1}{|c|}{\footnotesize RMSE}&\footnotesize\textbf{0.0000}&\footnotesize \textbf{0.0765}&\footnotesize \textbf{0.0087}&\footnotesize \textbf{0.0060}&\footnotesize \textbf{0.0000}&\footnotesize \textbf{0.0000}&\footnotesize \textbf{0.0000}&\footnotesize \textbf{0.0355}\\

&&\multicolumn{1}{|c|}{\footnotesize MAE}&\footnotesize\textbf{0.0000}&\footnotesize \textbf{0.0485}&\footnotesize \textbf{0.0058}&\footnotesize \textbf{0.0038}&\footnotesize \textbf{0.0000}&\footnotesize \textbf{0.0000}&\footnotesize \textbf{0.0000}&\footnotesize \textbf{0.0259}\\

\cline{2-11}


&\multirow{3}{*}{\footnotesize Sigmoid}& \multicolumn{1}{|c|}{\footnotesize \(R^2\)}&\footnotesize0.4002&\footnotesize 0.8042&\footnotesize 0.5287&\footnotesize 0.7707&\footnotesize 0.7758&\footnotesize 0.8554&\footnotesize 0.6587&\footnotesize 0.5484\\

&&\multicolumn{1}{|c|}{\footnotesize RMSE}&\footnotesize0.3343&\footnotesize 0.1586&\footnotesize 0.2264&\footnotesize 0.1580&\footnotesize 0.1554&\footnotesize 0.1277&\footnotesize 0.1979&\footnotesize 0.2320\\

&&\multicolumn{1}{|c|}{\footnotesize MAE}&\footnotesize0.2581&\footnotesize 0.1065&\footnotesize 0.1629&\footnotesize 0.1132&\footnotesize 0.1097&\footnotesize 0.0874&\footnotesize 0.1414&\footnotesize 0.1670\\

\cline{2-11}


&\multirow{3}{*}{\footnotesize Cosine}& \multicolumn{1}{|c|}{\footnotesize \(R^2\)}&\footnotesize-2.0444&\footnotesize -3.9348&\footnotesize -5.1180&\footnotesize -5.1514&\footnotesize -5.3066&\footnotesize -4.9779&\footnotesize -4.8603&\footnotesize -4.6056\\

&&\multicolumn{1}{|c|}{\footnotesize RMSE}&\footnotesize0.7530&\footnotesize 0.7963&\footnotesize 0.8156&\footnotesize 0.8183&\footnotesize 0.8244&\footnotesize 0.8212&\footnotesize 0.8202&\footnotesize 0.8174\\

&&\multicolumn{1}{|c|}{\footnotesize MAE}&\footnotesize0.6687&\footnotesize 0.7265&\footnotesize 0.7530&\footnotesize 0.7546&\footnotesize 0.7600&\footnotesize 0.7529&\footnotesize 0.7493&\footnotesize 0.7428\\

\cline{1-11}

\end{tabular}
\end{adjustbox}
\caption{\textbf{(\(\mathscr{O}_{XZ}\) with \(\mathscr{H}_2\))} Learning performance metrics for the models trained on all 1000 training datapoints using the hyperparameters listed in Table \ref{BestHyperparametersH2}. The metrics are calculated based on the predictions made by the models on the training and testing sets with labels determined by \(\mathscr{O}_{XZ}\) with \(\mathscr{H}_2\). The best values for each qubit number and each metric are in bold text for the training and testing datasets.}
\label{ResultsOXZH2}
\end{table}


\subsection{Results for \(\mathscr{O}_{XZ}\) and \(\mathscr{H}_3\)}

\begin{table}[H]
\begin{adjustbox}{center,max width=\textwidth}
\begin{tabular}{| c | c c c c c c c c c c |}

\cline{4-11}

\multicolumn{3}{c|}{}&\multicolumn{8}{c|}{Number of qubits \((n)\)}\\

\cline{1-11}

\multicolumn{1}{|c|}{\textbf{Dataset}}&\multicolumn{1}{c|}{\textbf{Kernel}}& \multicolumn{1}{c|}{\textbf{Metric}}
& \multicolumn{1}{c|}{5}& \multicolumn{1}{c|}{10}
& \multicolumn{1}{c|}{15}& \multicolumn{1}{c|}{20}
& \multicolumn{1}{c|}{25}& \multicolumn{1}{c|}{30}
& \multicolumn{1}{c|}{35}& \multicolumn{1}{c|}{40}\\

\cline{1-11}


\multirow{18}{*}{Testing}&\multirow{3}{*}{\footnotesize Linear}& \multicolumn{1}{|c|}{\footnotesize \(R^2\)}&\footnotesize-1.5252&\footnotesize-1.9782&\footnotesize-1.9471&\footnotesize-1.8755&\footnotesize-1.7657&\footnotesize-1.6170&\footnotesize-1.4506&\footnotesize-1.2877\\

&&\multicolumn{1}{|c|}{\footnotesize RMSE}&\footnotesize0.6230&\footnotesize0.6630&\footnotesize0.6810&\footnotesize0.6884&\footnotesize0.6873&\footnotesize0.6775&\footnotesize0.6611&\footnotesize0.6408\\

&&\multicolumn{1}{|c|}{\footnotesize MAE}&\footnotesize0.5066&\footnotesize0.5423&\footnotesize0.5542&\footnotesize0.5567&\footnotesize0.5506&\footnotesize0.5349&\footnotesize0.5117&\footnotesize0.4843\\

\cline{2-11}


&\multirow{3}{*}{\footnotesize Polynomial}&\multicolumn{1}{|c|}{\footnotesize \(R^2\)}&\footnotesize0.7911&\footnotesize0.8681&\footnotesize0.8797&\footnotesize0.8807&\footnotesize0.8866&\footnotesize0.8824&\footnotesize0.8855&\footnotesize0.8780\\

&&\multicolumn{1}{|c|}{\footnotesize RMSE}&\footnotesize0.1792&\footnotesize0.1395&\footnotesize0.1376&\footnotesize0.1402&\footnotesize0.1392&\footnotesize0.1436&\footnotesize0.1429&\footnotesize0.1480\\

&&\multicolumn{1}{|c|}{\footnotesize MAE}&\footnotesize0.1323&\footnotesize0.1031&\footnotesize0.1035&\footnotesize0.1066&\footnotesize0.1054&\footnotesize0.1076&\footnotesize0.1059&\footnotesize0.1080\\

\cline{2-11}


&\multirow{3}{*}{\footnotesize RBF}&\multicolumn{1}{|c|}{\footnotesize \(R^2\)}&\footnotesize\textbf{0.9760}&\footnotesize\textbf{0.9829}&\footnotesize\textbf{0.9866}&\footnotesize0.9682&\footnotesize\textbf{0.9749}&\footnotesize\textbf{0.9772}&\footnotesize\textbf{0.9799}&\footnotesize\textbf{0.9823}\\

&&\multicolumn{1}{|c|}{\footnotesize RMSE}&\footnotesize\textbf{0.0608}&\footnotesize\textbf{0.0502}&\footnotesize\textbf{0.0459}&\footnotesize0.0724&\footnotesize\textbf{0.0655}&\footnotesize\textbf{0.0633}&\footnotesize\textbf{0.0598}&\footnotesize\textbf{0.0564}\\

&&\multicolumn{1}{|c|}{\footnotesize MAE}&\footnotesize\textbf{0.0379}&\footnotesize\textbf{0.0354}&\footnotesize\textbf{0.0290}&\footnotesize0.0511&\footnotesize\textbf{0.0452}&\footnotesize\textbf{0.0430}&\footnotesize\textbf{0.0384}&\footnotesize\textbf{0.0347}\\

\cline{2-11}


&\multirow{3}{*}{\footnotesize Laplacian}& \multicolumn{1}{|c|}{\footnotesize \(R^2\)}&\footnotesize0.9537&\footnotesize0.9588&\footnotesize0.9685&\footnotesize\textbf{0.9720}&\footnotesize0.9736&\footnotesize0.9744&\footnotesize0.9745&\footnotesize0.9736\\

&&\multicolumn{1}{|c|}{\footnotesize RMSE}&\footnotesize0.0843&\footnotesize0.0780&\footnotesize0.0704&\footnotesize\textbf{0.0679}&\footnotesize0.0671&\footnotesize0.0670&\footnotesize0.0675&\footnotesize0.0688\\

&&\multicolumn{1}{|c|}{\footnotesize MAE}&\footnotesize0.0573&\footnotesize0.0570&\footnotesize0.0498&\footnotesize\textbf{0.0470}&\footnotesize0.0456&\footnotesize0.0445&\footnotesize0.0431&\footnotesize0.0424\\

\cline{2-11}


&\multirow{3}{*}{\footnotesize Sigmoid}& \multicolumn{1}{|c|}{\footnotesize \(R^2\)}&\footnotesize0.7489&\footnotesize0.9173&\footnotesize0.8105&\footnotesize0.8782&\footnotesize0.8828&\footnotesize0.9099&\footnotesize0.9340&\footnotesize0.7311\\

&&\multicolumn{1}{|c|}{\footnotesize RMSE}&\footnotesize0.1965&\footnotesize0.1105&\footnotesize0.1727&\footnotesize0.1417&\footnotesize0.1415&\footnotesize0.1257&\footnotesize0.1085&\footnotesize0.2197\\

&&\multicolumn{1}{|c|}{\footnotesize MAE}&\footnotesize0.1504&\footnotesize0.0835&\footnotesize0.1336&\footnotesize0.1067&\footnotesize0.1092&\footnotesize0.0975&\footnotesize0.0818&\footnotesize0.1781\\

\cline{2-11}


&\multirow{3}{*}{\footnotesize Cosine}& \multicolumn{1}{|c|}{\footnotesize \(R^2\)}&\footnotesize-1.5284&\footnotesize-1.9820&\footnotesize-1.9474&\footnotesize-1.8757&\footnotesize-1.7659&\footnotesize-1.6170&\footnotesize-1.4505&\footnotesize-1.2875\\

&&\multicolumn{1}{|c|}{\footnotesize RMSE}&\footnotesize0.6234&\footnotesize0.6635&\footnotesize0.6810&\footnotesize0.6884&\footnotesize0.6873&\footnotesize0.6775&\footnotesize0.6611&\footnotesize0.6407\\

&&\multicolumn{1}{|c|}{\footnotesize MAE}&\footnotesize0.5068&\footnotesize0.5431&\footnotesize0.5539&\footnotesize0.5561&\footnotesize0.5493&\footnotesize0.5326&\footnotesize0.5087&\footnotesize0.4808\\

\cline{1-11}


\multirow{18}{*}{Training}&\multirow{3}{*}{\footnotesize Linear}& \multicolumn{1}{|c|}{\footnotesize \(R^2\)}&\footnotesize-1.5555&\footnotesize -2.0435&\footnotesize -1.9755&\footnotesize -1.8833&\footnotesize -1.7698&\footnotesize -1.6269&\footnotesize -1.4667&\footnotesize -1.3070\\

&&\multicolumn{1}{|c|}{\footnotesize RMSE}&\footnotesize0.6146&\footnotesize 0.6641&\footnotesize 0.6847&\footnotesize 0.6934&\footnotesize 0.6927&\footnotesize 0.6825&\footnotesize 0.6646&\footnotesize 0.6424\\

&&\multicolumn{1}{|c|}{\footnotesize MAE}&\footnotesize0.5005&\footnotesize 0.5452&\footnotesize 0.5581&\footnotesize 0.5607&\footnotesize 0.5547&\footnotesize 0.5390&\footnotesize 0.5154&\footnotesize 0.4871\\

\cline{2-11}


&\multirow{3}{*}{\footnotesize Polynomial}&\multicolumn{1}{|c|}{\footnotesize \(R^2\)}&\footnotesize0.8181&\footnotesize 0.8861&\footnotesize 0.8976&\footnotesize 0.9018&\footnotesize 0.9097&\footnotesize 0.9057&\footnotesize 0.9100&\footnotesize 0.9054\\

&&\multicolumn{1}{|c|}{\footnotesize RMSE}&\footnotesize0.1640&\footnotesize 0.1285&\footnotesize 0.1270&\footnotesize 0.1279&\footnotesize 0.1251&\footnotesize 0.1293&\footnotesize 0.1270&\footnotesize 0.1301\\

&&\multicolumn{1}{|c|}{\footnotesize MAE}&\footnotesize0.1217&\footnotesize 0.0954&\footnotesize 0.0960&\footnotesize 0.0984&\footnotesize 0.0952&\footnotesize 0.0969&\footnotesize 0.0949&\footnotesize 0.0966\\

\cline{2-11}


&\multirow{3}{*}{\footnotesize RBF}&\multicolumn{1}{|c|}{\footnotesize \(R^2\)}&\footnotesize0.9992&\footnotesize 0.9945&\footnotesize 0.9991&\footnotesize 0.9847&\footnotesize 0.9871&\footnotesize 0.9920&\footnotesize 0.9956&\footnotesize 0.9979\\

&&\multicolumn{1}{|c|}{\footnotesize RMSE}&\footnotesize0.0108&\footnotesize 0.0282&\footnotesize 0.0118&\footnotesize 0.0506&\footnotesize 0.0474&\footnotesize 0.0377&\footnotesize 0.0282&\footnotesize 0.0196\\

&&\multicolumn{1}{|c|}{\footnotesize MAE}&\footnotesize0.0070&\footnotesize 0.0190&\footnotesize 0.0069&\footnotesize 0.0355&\footnotesize 0.0322&\footnotesize 0.0255&\footnotesize 0.0185&\footnotesize 0.0123\\

\cline{2-11}


&\multirow{3}{*}{\footnotesize Laplacian}& \multicolumn{1}{|c|}{\footnotesize \(R^2\)}&\footnotesize\textbf{1.0000}&\footnotesize \textbf{1.0000}&\footnotesize \textbf{1.0000}&\footnotesize \textbf{1.0000}&\footnotesize \textbf{1.0000}&\footnotesize \textbf{1.0000}&\footnotesize \textbf{1.0000}&\footnotesize \textbf{1.0000}\\

&&\multicolumn{1}{|c|}{\footnotesize RMSE}&\footnotesize\textbf{0.0000}&\footnotesize \textbf{0.0000}&\footnotesize \textbf{0.0000}&\footnotesize \textbf{0.0000}&\footnotesize \textbf{0.0000}&\footnotesize \textbf{0.0000}&\footnotesize \textbf{0.0000}&\footnotesize \textbf{0.0000}\\

&&\multicolumn{1}{|c|}{\footnotesize MAE}&\footnotesize\textbf{0.0000}&\footnotesize \textbf{0.0000}&\footnotesize \textbf{0.0000}&\footnotesize \textbf{0.0000}&\footnotesize \textbf{0.0000}&\footnotesize \textbf{0.0000}&\footnotesize \textbf{0.0000}&\footnotesize \textbf{0.0000}\\

\cline{2-11}


&\multirow{3}{*}{\footnotesize Sigmoid}& \multicolumn{1}{|c|}{\footnotesize \(R^2\)}&\footnotesize0.6817&\footnotesize 0.9448&\footnotesize 0.7911&\footnotesize 0.9070&\footnotesize 0.8992&\footnotesize 0.9353&\footnotesize 0.9559&\footnotesize 0.6980\\

&&\multicolumn{1}{|c|}{\footnotesize RMSE}&\footnotesize0.2169&\footnotesize 0.0894&\footnotesize 0.1814&\footnotesize 0.1245&\footnotesize 0.1321&\footnotesize 0.1071&\footnotesize 0.0888&\footnotesize 0.2324\\

&&\multicolumn{1}{|c|}{\footnotesize MAE}&\footnotesize0.1649&\footnotesize 0.0693&\footnotesize 0.1367&\footnotesize 0.0965&\footnotesize 0.1015&\footnotesize 0.0839&\footnotesize 0.0691&\footnotesize 0.1849\\

\cline{2-11}


&\multirow{3}{*}{\footnotesize Cosine}& \multicolumn{1}{|c|}{\footnotesize \(R^2\)}&\footnotesize-1.5528&\footnotesize -2.0398&\footnotesize -1.9766&\footnotesize -1.8849&\footnotesize -1.7718&\footnotesize -1.6290&\footnotesize -1.4689&\footnotesize -1.3090\\

&&\multicolumn{1}{|c|}{\footnotesize RMSE}&\footnotesize0.6143&\footnotesize 0.6637&\footnotesize 0.6849&\footnotesize 0.6936&\footnotesize 0.6929&\footnotesize 0.6828&\footnotesize 0.6649&\footnotesize 0.6427\\

&&\multicolumn{1}{|c|}{\footnotesize MAE}&\footnotesize0.5001&\footnotesize 0.5450&\footnotesize 0.5582&\footnotesize 0.5607&\footnotesize 0.5541&\footnotesize 0.5375&\footnotesize 0.5130&\footnotesize 0.4840\\

\cline{1-11}

\end{tabular}
\end{adjustbox}
\caption{\textbf{(\(\mathscr{O}_{XZ}\) with \(\mathscr{H}_3\))} Learning performance metrics for the models trained on all 1000 training datapoints using the hyperparameters listed in Table \ref{BestHyperparametersH3}. The metrics are calculated based on the predictions made by the models on the training and testing sets with labels determined by \(\mathscr{O}_{XZ}\) with \(\mathscr{H}_3\). The best values for each qubit number and each metric are in bold text for the training and testing datasets.}
\label{ResultsOXZH3}
\end{table}


\subsection{Results for \(\mathscr{O}_{XZ}\) and \(\mathscr{H}_4\)}

\begin{table}[H]
\begin{adjustbox}{center,max width=\textwidth}
\begin{tabular}{| c | c c c c c c c c c c |}

\cline{4-11}

\multicolumn{3}{c|}{}&\multicolumn{8}{c|}{Number of qubits \((n)\)}\\

\cline{1-11}

\multicolumn{1}{|c|}{\textbf{Dataset}}&\multicolumn{1}{c|}{\textbf{Kernel}}& \multicolumn{1}{c|}{\textbf{Metric}}
& \multicolumn{1}{c|}{5}& \multicolumn{1}{c|}{10}
& \multicolumn{1}{c|}{15}& \multicolumn{1}{c|}{20}
& \multicolumn{1}{c|}{25}& \multicolumn{1}{c|}{30}
& \multicolumn{1}{c|}{35}& \multicolumn{1}{c|}{40}\\

\cline{1-11}


\multirow{18}{*}{Testing}&\multirow{3}{*}{\footnotesize Linear}& \multicolumn{1}{|c|}{\footnotesize \(R^2\)}&\footnotesize-5.4877&\footnotesize-9.7666&\footnotesize-12.2134&\footnotesize-13.2260&\footnotesize-13.3083&\footnotesize-12.7522&\footnotesize-11.8802&\footnotesize-10.6947\\

&&\multicolumn{1}{|c|}{\footnotesize RMSE}&\footnotesize0.8851&\footnotesize0.9240&\footnotesize0.9337&\footnotesize0.9401&\footnotesize0.9424&\footnotesize0.9408&\footnotesize0.9360&\footnotesize0.9293\\

&&\multicolumn{1}{|c|}{\footnotesize MAE}&\footnotesize0.8375&\footnotesize0.8844&\footnotesize0.8976&\footnotesize0.9061&\footnotesize0.9084&\footnotesize0.9053&\footnotesize0.8982&\footnotesize0.8878\\

\cline{2-11}


&\multirow{3}{*}{\footnotesize Polynomial}&\multicolumn{1}{|c|}{\footnotesize \(R^2\)}&\footnotesize0.7034&\footnotesize0.6119&\footnotesize0.6135&\footnotesize0.6729&\footnotesize0.6911&\footnotesize0.7070&\footnotesize0.7900&\footnotesize0.7817\\

&&\multicolumn{1}{|c|}{\footnotesize RMSE}&\footnotesize0.1892&\footnotesize0.1754&\footnotesize0.1597&\footnotesize0.1425&\footnotesize0.1385&\footnotesize0.1373&\footnotesize0.1195&\footnotesize0.1270\\

&&\multicolumn{1}{|c|}{\footnotesize MAE}&\footnotesize0.1060&\footnotesize0.0914&\footnotesize0.0822&\footnotesize0.0782&\footnotesize0.0745&\footnotesize0.0814&\footnotesize0.0690&\footnotesize0.0742\\

\cline{2-11}


&\multirow{3}{*}{\footnotesize RBF}&\multicolumn{1}{|c|}{\footnotesize \(R^2\)}&\footnotesize0.8360&\footnotesize0.7167&\footnotesize\textbf{0.7751}&\footnotesize\textbf{0.8281}&\footnotesize0.7555&\footnotesize\textbf{0.8177}&\footnotesize\textbf{0.8674}&\footnotesize\textbf{0.8931}\\

&&\multicolumn{1}{|c|}{\footnotesize RMSE}&\footnotesize0.1407&\footnotesize0.1499&\footnotesize\textbf{0.1218}&\footnotesize\textbf{0.1033}&\footnotesize0.1232&\footnotesize\textbf{0.1083}&\footnotesize\textbf{0.0950}&\footnotesize\textbf{0.0888}\\

&&\multicolumn{1}{|c|}{\footnotesize MAE}&\footnotesize0.0724&\footnotesize0.0731&\footnotesize\textbf{0.0522}&\footnotesize\textbf{0.0498}&\footnotesize\textbf{0.0650}&\footnotesize\textbf{0.0577}&\footnotesize\textbf{0.0507}&\footnotesize\textbf{0.0468}\\

\cline{2-11}


&\multirow{3}{*}{\footnotesize Laplacian}& \multicolumn{1}{|c|}{\footnotesize \(R^2\)}&\footnotesize\textbf{0.9214}&\footnotesize\textbf{0.8612}&\footnotesize0.7511&\footnotesize0.7732&\footnotesize\textbf{0.7835}&\footnotesize0.8140&\footnotesize0.8521&\footnotesize0.8817\\

&&\multicolumn{1}{|c|}{\footnotesize RMSE}&\footnotesize\textbf{0.0974}&\footnotesize\textbf{0.1049}&\footnotesize0.1282&\footnotesize0.1187&\footnotesize\textbf{0.1159}&\footnotesize0.1094&\footnotesize0.1003&\footnotesize0.0934\\

&&\multicolumn{1}{|c|}{\footnotesize MAE}&\footnotesize\textbf{0.0514}&\footnotesize\textbf{0.0521}&\footnotesize0.0715&\footnotesize0.0686&\footnotesize0.0658&\footnotesize0.0623&\footnotesize0.0575&\footnotesize0.0536\\

\cline{2-11}


&\multirow{3}{*}{\footnotesize Sigmoid}& \multicolumn{1}{|c|}{\footnotesize \(R^2\)}&\footnotesize0.6985&\footnotesize0.6030&\footnotesize0.6269&\footnotesize0.4990&\footnotesize0.5695&\footnotesize-3.7352&\footnotesize0.3487&\footnotesize-11636.6066\\

&&\multicolumn{1}{|c|}{\footnotesize RMSE}&\footnotesize0.1908&\footnotesize0.1774&\footnotesize0.1569&\footnotesize0.1764&\footnotesize0.1635&\footnotesize0.5520&\footnotesize0.2105&\footnotesize29.3142\\

&&\multicolumn{1}{|c|}{\footnotesize MAE}&\footnotesize0.0981&\footnotesize0.0940&\footnotesize0.0815&\footnotesize0.1014&\footnotesize0.1055&\footnotesize0.3001&\footnotesize0.1254&\footnotesize17.0978\\

\cline{2-11}


&\multirow{3}{*}{\footnotesize Cosine}& \multicolumn{1}{|c|}{\footnotesize \(R^2\)}&\footnotesize-5.4872&\footnotesize-9.7161&\footnotesize-12.2097&\footnotesize-13.1944&\footnotesize-13.2552&\footnotesize-12.6843&\footnotesize-11.8028&\footnotesize-10.6139\\

&&\multicolumn{1}{|c|}{\footnotesize RMSE}&\footnotesize0.8851&\footnotesize0.9218&\footnotesize0.9336&\footnotesize0.9391&\footnotesize0.9407&\footnotesize0.9384&\footnotesize0.9332&\footnotesize0.9261\\

&&\multicolumn{1}{|c|}{\footnotesize MAE}&\footnotesize0.8376&\footnotesize0.8821&\footnotesize0.8973&\footnotesize0.9050&\footnotesize0.9068&\footnotesize0.9033&\footnotesize0.8959&\footnotesize0.8851\\

\cline{1-11}


\multirow{18}{*}{Training}&\multirow{3}{*}{\footnotesize Linear}& \multicolumn{1}{|c|}{\footnotesize \(R^2\)}&\footnotesize-6.5416&\footnotesize -11.1825&\footnotesize -12.0591&\footnotesize -12.1214&\footnotesize -11.7599&\footnotesize -11.1197&\footnotesize -10.3358&\footnotesize -9.4707\\

&&\multicolumn{1}{|c|}{\footnotesize RMSE}&\footnotesize0.8912&\footnotesize 0.9224&\footnotesize 0.9317&\footnotesize 0.9352&\footnotesize 0.9363&\footnotesize 0.9345&\footnotesize 0.9300&\footnotesize 0.9233\\

&&\multicolumn{1}{|c|}{\footnotesize MAE}&\footnotesize0.8470&\footnotesize 0.8860&\footnotesize 0.8965&\footnotesize 0.9001&\footnotesize 0.9002&\footnotesize 0.8968&\footnotesize 0.8903&\footnotesize 0.8810\\

\cline{2-11}


&\multirow{3}{*}{\footnotesize Polynomial}&\multicolumn{1}{|c|}{\footnotesize \(R^2\)}&\footnotesize0.7641&\footnotesize 0.8003&\footnotesize 0.8095&\footnotesize 0.8358&\footnotesize 0.8196&\footnotesize 0.8226&\footnotesize 0.8916&\footnotesize 0.8879\\

&&\multicolumn{1}{|c|}{\footnotesize RMSE}&\footnotesize0.1576&\footnotesize 0.1181&\footnotesize 0.1125&\footnotesize 0.1046&\footnotesize 0.1113&\footnotesize 0.1131&\footnotesize 0.0909&\footnotesize 0.0955\\

&&\multicolumn{1}{|c|}{\footnotesize MAE}&\footnotesize0.0884&\footnotesize 0.0656&\footnotesize 0.0616&\footnotesize 0.0616&\footnotesize 0.0641&\footnotesize 0.0704&\footnotesize 0.0514&\footnotesize 0.0555\\

\cline{2-11}


&\multirow{3}{*}{\footnotesize RBF}&\multicolumn{1}{|c|}{\footnotesize \(R^2\)}&\footnotesize0.9374&\footnotesize 0.9012&\footnotesize 0.9703&\footnotesize 0.9927&\footnotesize 0.8783&\footnotesize 0.9120&\footnotesize 0.9443&\footnotesize 0.9666\\

&&\multicolumn{1}{|c|}{\footnotesize RMSE}&\footnotesize0.0812&\footnotesize 0.0831&\footnotesize 0.0444&\footnotesize 0.0221&\footnotesize 0.0914&\footnotesize 0.0796&\footnotesize 0.0652&\footnotesize 0.0521\\

&&\multicolumn{1}{|c|}{\footnotesize MAE}&\footnotesize0.0434&\footnotesize 0.0440&\footnotesize 0.0226&\footnotesize 0.0103&\footnotesize 0.0510&\footnotesize 0.0429&\footnotesize 0.0349&\footnotesize 0.0275\\

\cline{2-11}


&\multirow{3}{*}{\footnotesize Laplacian}& \multicolumn{1}{|c|}{\footnotesize \(R^2\)}&\footnotesize\textbf{1.0000}&\footnotesize \textbf{1.0000}&\footnotesize \textbf{0.9991}&\footnotesize \textbf{0.9996}&\footnotesize \textbf{0.9998}&\footnotesize \textbf{0.9999}&\footnotesize \textbf{1.0000}&\footnotesize \textbf{1.0000}\\

&&\multicolumn{1}{|c|}{\footnotesize RMSE}&\footnotesize\textbf{0.0000}&\footnotesize \textbf{0.0000}&\footnotesize \textbf{0.0076}&\footnotesize \textbf{0.0048}&\footnotesize \textbf{0.0034}&\footnotesize \textbf{0.0026}&\footnotesize \textbf{0.0019}&\footnotesize \textbf{0.0002}\\

&&\multicolumn{1}{|c|}{\footnotesize MAE}&\footnotesize\textbf{0.0000}&\footnotesize \textbf{0.0000}&\footnotesize \textbf{0.0051}&\footnotesize \textbf{0.0032}&\footnotesize \textbf{0.0022}&\footnotesize \textbf{0.0016}&\footnotesize \textbf{0.0012}&\footnotesize \textbf{0.0001}\\

\cline{2-11}


&\multirow{3}{*}{\footnotesize Sigmoid}& \multicolumn{1}{|c|}{\footnotesize \(R^2\)}&\footnotesize0.8788&\footnotesize 0.7988&\footnotesize 0.8540&\footnotesize 0.5826&\footnotesize 0.6472&\footnotesize -4.3227&\footnotesize 0.4554&\footnotesize -13333.5415\\

&&\multicolumn{1}{|c|}{\footnotesize RMSE}&\footnotesize0.1130&\footnotesize 0.1185&\footnotesize 0.0985&\footnotesize 0.1668&\footnotesize 0.1557&\footnotesize 0.6193&\footnotesize 0.2038&\footnotesize 32.9478\\

&&\multicolumn{1}{|c|}{\footnotesize MAE}&\footnotesize0.0658&\footnotesize 0.0673&\footnotesize 0.0550&\footnotesize 0.0976&\footnotesize 0.0989&\footnotesize 0.3556&\footnotesize 0.1172&\footnotesize 20.1928\\

\cline{2-11}


&\multirow{3}{*}{\footnotesize Cosine}& \multicolumn{1}{|c|}{\footnotesize \(R^2\)}&\footnotesize-6.5589&\footnotesize -11.2173&\footnotesize -12.0804&\footnotesize -12.1504&\footnotesize -11.7911&\footnotesize -11.1499&\footnotesize -10.3634&\footnotesize -9.4954\\

&&\multicolumn{1}{|c|}{\footnotesize RMSE}&\footnotesize0.8922&\footnotesize 0.9237&\footnotesize 0.9325&\footnotesize 0.9363&\footnotesize 0.9374&\footnotesize 0.9357&\footnotesize 0.9311&\footnotesize 0.9243\\

&&\multicolumn{1}{|c|}{\footnotesize MAE}&\footnotesize0.8483&\footnotesize 0.8874&\footnotesize 0.8969&\footnotesize 0.9003&\footnotesize 0.9003&\footnotesize 0.8967&\footnotesize 0.8896&\footnotesize 0.8797\\

\cline{1-11}

\end{tabular}
\end{adjustbox}
\caption{\textbf{(\(\mathscr{O}_{XZ}\) with \(\mathscr{H}_4\))} Learning performance metrics for the models trained on all 1000 training datapoints using the hyperparameters listed in Table \ref{BestHyperparametersH4}. The metrics are calculated based on the predictions made by the models on the training and testing sets with labels determined by \(\mathscr{O}_{XZ}\) with \(\mathscr{H}_4\). The best values for each qubit number and each metric are in bold text for the training and testing datasets.}
\label{ResultsOXZH4}
\end{table}


\subsection{Results for \(\mathscr{O}_{Sum}\) and \(\mathscr{H}_1\)}

\begin{table}[H]
\begin{adjustbox}{center,max width=\textwidth}
\begin{tabular}{| c | c c c c c c c c c c |}

\cline{4-11}

\multicolumn{3}{c|}{}&\multicolumn{8}{c|}{Number of qubits \((n)\)}\\

\cline{1-11}

\multicolumn{1}{|c|}{\textbf{Dataset}}&\multicolumn{1}{c|}{\textbf{Kernel}}& \multicolumn{1}{c|}{\textbf{Metric}}
& \multicolumn{1}{c|}{5}& \multicolumn{1}{c|}{10}
& \multicolumn{1}{c|}{15}& \multicolumn{1}{c|}{20}
& \multicolumn{1}{c|}{25}& \multicolumn{1}{c|}{30}
& \multicolumn{1}{c|}{35}& \multicolumn{1}{c|}{40}\\

\cline{1-11}


\multirow{18}{*}{Testing}&\multirow{3}{*}{\footnotesize Linear}& \multicolumn{1}{|c|}{\footnotesize \(R^2\)}&\footnotesize-2.6121&\footnotesize-3.7405&\footnotesize-3.5677&\footnotesize-3.5157&\footnotesize-3.3017&\footnotesize-3.1259&\footnotesize-2.8970&\footnotesize-2.6603\\

&&\multicolumn{1}{|c|}{\footnotesize RMSE}&\footnotesize6.6797&\footnotesize7.2117&\footnotesize7.2580&\footnotesize7.3482&\footnotesize7.2775&\footnotesize7.2564&\footnotesize7.1498&\footnotesize7.0573\\

&&\multicolumn{1}{|c|}{\footnotesize MAE}&\footnotesize5.8351&\footnotesize6.4148&\footnotesize6.4226&\footnotesize6.5026&\footnotesize6.3839&\footnotesize6.3326&\footnotesize6.1876&\footnotesize6.0515\\

\cline{2-11}


&\multirow{3}{*}{\footnotesize Polynomial}&\multicolumn{1}{|c|}{\footnotesize \(R^2\)}&\footnotesize0.9305&\footnotesize0.8697&\footnotesize0.8734&\footnotesize0.8188&\footnotesize0.8258&\footnotesize0.8168&\footnotesize0.8246&\footnotesize0.8211\\

&&\multicolumn{1}{|c|}{\footnotesize RMSE}&\footnotesize0.9266&\footnotesize1.1954&\footnotesize1.2084&\footnotesize1.4719&\footnotesize1.4644&\footnotesize1.5292&\footnotesize1.5170&\footnotesize1.5602\\

&&\multicolumn{1}{|c|}{\footnotesize MAE}&\footnotesize0.6546&\footnotesize0.8219&\footnotesize0.8558&\footnotesize1.0416&\footnotesize1.0461&\footnotesize1.0767&\footnotesize1.0694&\footnotesize1.1116\\

\cline{2-11}


&\multirow{3}{*}{\footnotesize RBF}&\multicolumn{1}{|c|}{\footnotesize \(R^2\)}&\footnotesize0.9556&\footnotesize0.9289&\footnotesize0.9315&\footnotesize0.8979&\footnotesize0.8985&\footnotesize0.9094&\footnotesize0.9208&\footnotesize0.9283\\

&&\multicolumn{1}{|c|}{\footnotesize RMSE}&\footnotesize0.7402&\footnotesize0.8833&\footnotesize0.8885&\footnotesize1.1049&\footnotesize1.1179&\footnotesize1.0755&\footnotesize1.0196&\footnotesize0.9878\\

&&\multicolumn{1}{|c|}{\footnotesize MAE}&\footnotesize0.5409&\footnotesize0.4966&\footnotesize0.4385&\footnotesize0.6916&\footnotesize0.6624&\footnotesize0.6367&\footnotesize0.5657&\footnotesize0.5190\\

\cline{2-11}


&\multirow{3}{*}{\footnotesize Laplacian}& \multicolumn{1}{|c|}{\footnotesize \(R^2\)}&\footnotesize\textbf{0.9621}&\footnotesize\textbf{0.9652}&\footnotesize\textbf{0.9578}&\footnotesize\textbf{0.9555}&\footnotesize\textbf{0.9532}&\footnotesize\textbf{0.9429}&\footnotesize\textbf{0.9503}&\footnotesize\textbf{0.9596}\\

&&\multicolumn{1}{|c|}{\footnotesize RMSE}&\footnotesize\textbf{0.6841}&\footnotesize\textbf{0.6176}&\footnotesize\textbf{0.6979}&\footnotesize\textbf{0.7291}&\footnotesize\textbf{0.7592}&\footnotesize\textbf{0.8534}&\footnotesize\textbf{0.8078}&\footnotesize\textbf{0.7417}\\

&&\multicolumn{1}{|c|}{\footnotesize MAE}&\footnotesize\textbf{0.4643}&\footnotesize\textbf{0.3579}&\footnotesize\textbf{0.3771}&\footnotesize\textbf{0.3990}&\footnotesize\textbf{0.4168}&\footnotesize\textbf{0.5653}&\footnotesize\textbf{0.5250}&\footnotesize\textbf{0.4746}\\

\cline{2-11}


&\multirow{3}{*}{\footnotesize Sigmoid}&\multicolumn{1}{|c|}{\footnotesize \(R^2\)}&\footnotesize0.7472&\footnotesize0.8715&\footnotesize0.6827&\footnotesize0.8175&\footnotesize0.8290&\footnotesize0.8526&\footnotesize0.6879&\footnotesize0.6413\\

&&\multicolumn{1}{|c|}{\footnotesize RMSE}&\footnotesize1.7673&\footnotesize1.1875&\footnotesize1.9128&\footnotesize1.4774&\footnotesize1.4508&\footnotesize1.3716&\footnotesize2.0234&\footnotesize2.2092\\

&&\multicolumn{1}{|c|}{\footnotesize MAE}&\footnotesize1.3062&\footnotesize0.7939&\footnotesize1.4378&\footnotesize1.0354&\footnotesize1.0400&\footnotesize0.9645&\footnotesize1.5305&\footnotesize1.7456\\

\cline{2-11}


&\multirow{3}{*}{\footnotesize Cosine}& \multicolumn{1}{|c|}{\footnotesize \(R^2\)}&\footnotesize-2.6387&\footnotesize-3.7354&\footnotesize-3.5476&\footnotesize-3.4864&\footnotesize-3.2995&\footnotesize-3.1673&\footnotesize-2.9321&\footnotesize-2.6887\\

&&\multicolumn{1}{|c|}{\footnotesize RMSE}&\footnotesize6.7042&\footnotesize7.2079&\footnotesize7.2421&\footnotesize7.3244&\footnotesize7.2756&\footnotesize7.2927&\footnotesize7.1820&\footnotesize7.0846\\

&&\multicolumn{1}{|c|}{\footnotesize MAE}&\footnotesize5.8541&\footnotesize6.4031&\footnotesize6.3958&\footnotesize6.4589&\footnotesize6.3751&\footnotesize6.3714&\footnotesize6.2191&\footnotesize6.0739\\

\cline{1-11}


\multirow{18}{*}{Training}&\multirow{3}{*}{\footnotesize Linear}& \multicolumn{1}{|c|}{\footnotesize \(R^2\)}&\footnotesize-2.1828&\footnotesize -3.1641&\footnotesize -3.0659&\footnotesize -2.9840&\footnotesize -2.8636&\footnotesize -2.6929&\footnotesize -2.5129&\footnotesize -2.3154\\

&&\multicolumn{1}{|c|}{\footnotesize RMSE}&\footnotesize6.4890&\footnotesize 7.0385&\footnotesize 7.0937&\footnotesize 7.1710&\footnotesize 7.1516&\footnotesize 7.1017&\footnotesize 6.9922&\footnotesize 6.8715\\

&&\multicolumn{1}{|c|}{\footnotesize MAE}&\footnotesize5.5720&\footnotesize 6.1532&\footnotesize 6.1925&\footnotesize 6.2540&\footnotesize 6.1792&\footnotesize 6.1015&\footnotesize 5.9658&\footnotesize 5.8112\\

\cline{2-11}


&\multirow{3}{*}{\footnotesize Polynomial}&\multicolumn{1}{|c|}{\footnotesize \(R^2\)}&\footnotesize0.9596&\footnotesize 0.9364&\footnotesize 0.9343&\footnotesize 0.8788&\footnotesize 0.8862&\footnotesize 0.8801&\footnotesize 0.8902&\footnotesize 0.8845\\

&&\multicolumn{1}{|c|}{\footnotesize RMSE}&\footnotesize0.7307&\footnotesize 0.8701&\footnotesize 0.9015&\footnotesize 1.2510&\footnotesize 1.2271&\footnotesize 1.2795&\footnotesize 1.2360&\footnotesize 1.2827\\

&&\multicolumn{1}{|c|}{\footnotesize MAE}&\footnotesize0.5112&\footnotesize 0.6282&\footnotesize 0.6547&\footnotesize 0.8919&\footnotesize 0.8785&\footnotesize 0.8960&\footnotesize 0.8745&\footnotesize 0.9044\\

\cline{2-11}


&\multirow{3}{*}{\footnotesize RBF}&\multicolumn{1}{|c|}{\footnotesize \(R^2\)}&\footnotesize0.9753&\footnotesize 0.9822&\footnotesize 0.9990&\footnotesize 0.9603&\footnotesize 0.9754&\footnotesize 0.9715&\footnotesize 0.9825&\footnotesize 0.9880\\

&&\multicolumn{1}{|c|}{\footnotesize RMSE}&\footnotesize0.5713&\footnotesize 0.4603&\footnotesize 0.1088&\footnotesize 0.7159&\footnotesize 0.5706&\footnotesize 0.6236&\footnotesize 0.4932&\footnotesize 0.4137\\

&&\multicolumn{1}{|c|}{\footnotesize MAE}&\footnotesize0.4250&\footnotesize 0.3093&\footnotesize 0.0643&\footnotesize 0.4913&\footnotesize 0.3904&\footnotesize 0.4216&\footnotesize 0.3307&\footnotesize 0.2663\\

\cline{2-11}


&\multirow{3}{*}{\footnotesize Laplacian}& \multicolumn{1}{|c|}{\footnotesize \(R^2\)}&\footnotesize\textbf{1.0000}&\footnotesize \textbf{1.0000}&\footnotesize \textbf{1.0000}&\footnotesize \textbf{1.0000}&\footnotesize \textbf{1.0000}&\footnotesize \textbf{0.9997}&\footnotesize \textbf{0.9998}&\footnotesize \textbf{0.9999}\\

&&\multicolumn{1}{|c|}{\footnotesize RMSE}&\footnotesize\textbf{0.0000}&\footnotesize \textbf{0.0000}&\footnotesize \textbf{0.0000}&\footnotesize \textbf{0.0000}&\footnotesize \textbf{0.0000}&\footnotesize \textbf{0.0648}&\footnotesize \textbf{0.0492}&\footnotesize \textbf{0.0397}\\

&&\multicolumn{1}{|c|}{\footnotesize MAE}&\footnotesize\textbf{0.0000}&\footnotesize \textbf{0.0000}&\footnotesize \textbf{0.0000}&\footnotesize \textbf{0.0000}&\footnotesize \textbf{0.0000}&\footnotesize \textbf{0.0463}&\footnotesize \textbf{0.0352}&\footnotesize \textbf{0.0279}\\

\cline{2-11}


&\multirow{3}{*}{\footnotesize Sigmoid}& \multicolumn{1}{|c|}{\footnotesize \(R^2\)}&\footnotesize0.7473&\footnotesize 0.9429&\footnotesize 0.7354&\footnotesize 0.8782&\footnotesize 0.8768&\footnotesize 0.9162&\footnotesize 0.7237&\footnotesize 0.6295\\

&&\multicolumn{1}{|c|}{\footnotesize RMSE}&\footnotesize1.8283&\footnotesize 0.8245&\footnotesize 1.8096&\footnotesize 1.2541&\footnotesize 1.2770&\footnotesize 1.0701&\footnotesize 1.9611&\footnotesize 2.2970\\

&&\multicolumn{1}{|c|}{\footnotesize MAE}&\footnotesize1.3103&\footnotesize 0.5973&\footnotesize 1.3112&\footnotesize 0.9025&\footnotesize 0.9082&\footnotesize 0.7856&\footnotesize 1.4166&\footnotesize 1.7306\\

\cline{2-11}


&\multirow{3}{*}{\footnotesize Cosine}& \multicolumn{1}{|c|}{\footnotesize \(R^2\)}&\footnotesize-2.1793&\footnotesize -3.1696&\footnotesize -3.0751&\footnotesize -2.9934&\footnotesize -2.8659&\footnotesize -2.6827&\footnotesize -2.5044&\footnotesize -2.3079\\

&&\multicolumn{1}{|c|}{\footnotesize RMSE}&\footnotesize6.4854&\footnotesize 7.0431&\footnotesize 7.1016&\footnotesize 7.1794&\footnotesize 7.1538&\footnotesize 7.0918&\footnotesize 6.9837&\footnotesize 6.8637\\

&&\multicolumn{1}{|c|}{\footnotesize MAE}&\footnotesize5.5633&\footnotesize 6.1464&\footnotesize 6.1758&\footnotesize 6.2249&\footnotesize 6.1695&\footnotesize 6.1030&\footnotesize 5.9616&\footnotesize 5.8011\\

\cline{1-11}

\end{tabular}
\end{adjustbox}
\caption{\textbf{(\(\mathscr{O}_{Sum}\) with \(\mathscr{H}_1\))} Learning performance metrics for the models trained on all 1000 training datapoints using the hyperparameters listed in Table \ref{BestHyperparametersH1}. The metrics are calculated based on the predictions made by the models on the training and testing sets with labels determined by \(\mathscr{O}_{Sum}\) with \(\mathscr{H}_1\). The best values for each qubit number and each metric are in bold text for the training and testing datasets.}
\label{ResultsOSumH1}
\end{table}


\subsection{Results for \(\mathscr{O}_{Sum}\) and \(\mathscr{H}_2\)}

\begin{table}[H]
\begin{adjustbox}{center,max width=\textwidth}
\begin{tabular}{| c | c c c c c c c c c c |}

\cline{4-11}

\multicolumn{3}{c|}{}&\multicolumn{8}{c|}{Number of qubits \((n)\)}\\

\cline{1-11}

\multicolumn{1}{|c|}{\textbf{Dataset}}&\multicolumn{1}{c|}{\textbf{Kernel}}& \multicolumn{1}{c|}{\textbf{Metric}}
& \multicolumn{1}{c|}{5}& \multicolumn{1}{c|}{10}
& \multicolumn{1}{c|}{15}& \multicolumn{1}{c|}{20}
& \multicolumn{1}{c|}{25}& \multicolumn{1}{c|}{30}
& \multicolumn{1}{c|}{35}& \multicolumn{1}{c|}{40}\\

\cline{1-11}


\multirow{18}{*}{Testing}&\multirow{3}{*}{\footnotesize Linear}& \multicolumn{1}{|c|}{\footnotesize \(R^2\)}&\footnotesize-2.4632&\footnotesize-4.3090&\footnotesize-5.1549&\footnotesize-5.2246&\footnotesize-5.0555&\footnotesize-4.7658&\footnotesize-4.5452&\footnotesize-4.3089\\

&&\multicolumn{1}{|c|}{\footnotesize RMSE}&\footnotesize6.6084&\footnotesize6.9384&\footnotesize7.1717&\footnotesize7.2117&\footnotesize7.2634&\footnotesize7.2450&\footnotesize7.2467&\footnotesize7.2143\\

&&\multicolumn{1}{|c|}{\footnotesize MAE}&\footnotesize5.7357&\footnotesize6.2669&\footnotesize6.5641&\footnotesize6.6096&\footnotesize6.6403&\footnotesize6.5921&\footnotesize6.5668&\footnotesize6.5059\\

\cline{2-11}


&\multirow{3}{*}{\footnotesize Polynomial}&\multicolumn{1}{|c|}{\footnotesize \(R^2\)}&\footnotesize0.8230&\footnotesize0.8447&\footnotesize0.7799&\footnotesize0.8029&\footnotesize0.7957&\footnotesize0.8045&\footnotesize0.6727&\footnotesize0.7990\\

&&\multicolumn{1}{|c|}{\footnotesize RMSE}&\footnotesize1.4942&\footnotesize1.1867&\footnotesize1.3563&\footnotesize1.2833&\footnotesize1.3342&\footnotesize1.3341&\footnotesize1.7604&\footnotesize1.4036\\

&&\multicolumn{1}{|c|}{\footnotesize MAE}&\footnotesize1.1004&\footnotesize0.8443&\footnotesize1.0029&\footnotesize0.9740&\footnotesize1.0096&\footnotesize1.0189&\footnotesize1.2957&\footnotesize1.0740\\

\cline{2-11}


&\multirow{3}{*}{\footnotesize RBF}&\multicolumn{1}{|c|}{\footnotesize \(R^2\)}&\footnotesize\textbf{0.9706}&\footnotesize\textbf{0.9049}&\footnotesize\textbf{0.9022}&\footnotesize0.8957&\footnotesize0.8967&\footnotesize\textbf{0.8965}&\footnotesize\textbf{0.9053}&\footnotesize\textbf{0.9156}\\

&&\multicolumn{1}{|c|}{\footnotesize RMSE}&\footnotesize\textbf{0.6089}&\footnotesize\textbf{0.9284}&\footnotesize\textbf{0.9039}&\footnotesize0.9337&\footnotesize0.9487&\footnotesize\textbf{0.9708}&\footnotesize\textbf{0.9471}&\footnotesize\textbf{0.9099}\\

&&\multicolumn{1}{|c|}{\footnotesize MAE}&\footnotesize\textbf{0.3170}&\footnotesize\textbf{0.6004}&\footnotesize\textbf{0.5720}&\footnotesize0.6676&\footnotesize0.6700&\footnotesize\textbf{0.6674}&\footnotesize\textbf{0.6390}&\footnotesize\textbf{0.6063}\\

\cline{2-11}


&\multirow{3}{*}{\footnotesize Laplacian}& \multicolumn{1}{|c|}{\footnotesize \(R^2\)}&\footnotesize0.9098&\footnotesize0.8973&\footnotesize0.8990&\footnotesize\textbf{0.9088}&\footnotesize\textbf{0.9108}&\footnotesize0.8862&\footnotesize0.8865&\footnotesize0.8897\\

&&\multicolumn{1}{|c|}{\footnotesize RMSE}&\footnotesize1.0664&\footnotesize0.9648&\footnotesize0.9189&\footnotesize\textbf{0.8730}&\footnotesize\textbf{0.8813}&\footnotesize1.0177&\footnotesize1.0370&\footnotesize1.0398\\

&&\multicolumn{1}{|c|}{\footnotesize MAE}&\footnotesize0.7712&\footnotesize0.6657&\footnotesize0.6309&\footnotesize\textbf{0.6007}&\footnotesize\textbf{0.5915}&\footnotesize0.7304&\footnotesize0.7410&\footnotesize0.7411\\

\cline{2-11}


&\multirow{3}{*}{\footnotesize Sigmoid}& \multicolumn{1}{|c|}{\footnotesize \(R^2\)}&\footnotesize0.4967&\footnotesize0.8416&\footnotesize0.5619&\footnotesize0.8142&\footnotesize0.8178&\footnotesize0.8291&\footnotesize0.6329&\footnotesize0.5820\\

&&\multicolumn{1}{|c|}{\footnotesize RMSE}&\footnotesize2.5193&\footnotesize1.1984&\footnotesize1.9133&\footnotesize1.2459&\footnotesize1.2601&\footnotesize1.2474&\footnotesize1.8645&\footnotesize2.0244\\

&&\multicolumn{1}{|c|}{\footnotesize MAE}&\footnotesize1.9383&\footnotesize0.8575&\footnotesize1.3799&\footnotesize0.9470&\footnotesize0.9429&\footnotesize0.8848&\footnotesize1.3602&\footnotesize1.4461\\

\cline{2-11}


&\multirow{3}{*}{\footnotesize Cosine}& \multicolumn{1}{|c|}{\footnotesize \(R^2\)}&\footnotesize-2.4709&\footnotesize-4.3156&\footnotesize-5.1549&\footnotesize-5.2228&\footnotesize-5.0525&\footnotesize-4.7597&\footnotesize-4.5384&\footnotesize-4.2995\\

&&\multicolumn{1}{|c|}{\footnotesize RMSE}&\footnotesize6.6158&\footnotesize6.9427&\footnotesize7.1717&\footnotesize7.2107&\footnotesize7.2616&\footnotesize7.2411&\footnotesize7.2422&\footnotesize7.2080\\

&&\multicolumn{1}{|c|}{\footnotesize MAE}&\footnotesize5.7417&\footnotesize6.2682&\footnotesize6.5637&\footnotesize6.6062&\footnotesize6.6348&\footnotesize6.5832&\footnotesize6.5569&\footnotesize6.4933\\

\cline{1-11}


\multirow{18}{*}{Training}&\multirow{3}{*}{\footnotesize Linear}& \multicolumn{1}{|c|}{\footnotesize \(R^2\)}&\footnotesize-2.5404&\footnotesize -4.3985&\footnotesize -5.3317&\footnotesize -5.2921&\footnotesize -5.1667&\footnotesize -4.8959&\footnotesize -4.6703&\footnotesize -4.4283\\

&&\multicolumn{1}{|c|}{\footnotesize RMSE}&\footnotesize6.6747&\footnotesize 6.9451&\footnotesize 7.2232&\footnotesize 7.2249&\footnotesize 7.2857&\footnotesize 7.2512&\footnotesize 7.2503&\footnotesize 7.2167\\

&&\multicolumn{1}{|c|}{\footnotesize MAE}&\footnotesize5.7924&\footnotesize 6.2699&\footnotesize 6.6278&\footnotesize 6.6248&\footnotesize 6.6670&\footnotesize 6.6053&\footnotesize 6.5781&\footnotesize 6.5165\\

\cline{2-11}


&\multirow{3}{*}{\footnotesize Polynomial}&\multicolumn{1}{|c|}{\footnotesize \(R^2\)}&\footnotesize0.8817&\footnotesize 0.8715&\footnotesize 0.7817&\footnotesize 0.8027&\footnotesize 0.7974&\footnotesize 0.8080&\footnotesize 0.7078&\footnotesize 0.8052\\

&&\multicolumn{1}{|c|}{\footnotesize RMSE}&\footnotesize1.2201&\footnotesize 1.0716&\footnotesize 1.3412&\footnotesize 1.2795&\footnotesize 1.3206&\footnotesize 1.3084&\footnotesize 1.6460&\footnotesize 1.3672\\

&&\multicolumn{1}{|c|}{\footnotesize MAE}&\footnotesize0.9328&\footnotesize 0.7624&\footnotesize 0.9493&\footnotesize 0.9258&\footnotesize 0.9539&\footnotesize 0.9551&\footnotesize 1.2068&\footnotesize 1.0022\\

\cline{2-11}


&\multirow{3}{*}{\footnotesize RBF}&\multicolumn{1}{|c|}{\footnotesize \(R^2\)}&\footnotesize\textbf{0.9999}&\footnotesize 0.9441&\footnotesize 0.9964&\footnotesize 0.9292&\footnotesize 0.9275&\footnotesize 0.9478&\footnotesize 0.9626&\footnotesize 0.9752\\

&&\multicolumn{1}{|c|}{\footnotesize RMSE}&\footnotesize\textbf{0.0252}&\footnotesize 0.7066&\footnotesize 0.1727&\footnotesize 0.7663&\footnotesize 0.7902&\footnotesize 0.6823&\footnotesize 0.5885&\footnotesize 0.4879\\

&&\multicolumn{1}{|c|}{\footnotesize MAE}&\footnotesize\textbf{0.0142}&\footnotesize 0.4788&\footnotesize 0.1020&\footnotesize 0.5265&\footnotesize 0.5465&\footnotesize 0.4763&\footnotesize 0.4020&\footnotesize 0.3227\\

\cline{2-11}


&\multirow{3}{*}{\footnotesize Laplacian}& \multicolumn{1}{|c|}{\footnotesize \(R^2\)}&\footnotesize0.9539&\footnotesize \textbf{0.9736}&\footnotesize \textbf{0.9995}&\footnotesize \textbf{0.9998}&\footnotesize \textbf{1.0000}&\footnotesize \textbf{0.9860}&\footnotesize \textbf{0.9891}&\footnotesize \textbf{0.9912}\\

&&\multicolumn{1}{|c|}{\footnotesize RMSE}&\footnotesize0.7618&\footnotesize \textbf{0.4858}&\footnotesize \textbf{0.0639}&\footnotesize \textbf{0.0435}&\footnotesize \textbf{0.0000}&\footnotesize \textbf{0.3533}&\footnotesize \textbf{0.3186}&\footnotesize \textbf{0.2906}\\

&&\multicolumn{1}{|c|}{\footnotesize MAE}&\footnotesize0.5515&\footnotesize \textbf{0.3406}&\footnotesize \textbf{0.0450}&\footnotesize \textbf{0.0301}&\footnotesize \textbf{0.0000}&\footnotesize \textbf{0.2641}&\footnotesize \textbf{0.2397}&\footnotesize \textbf{0.2191}\\

\cline{2-11}


&\multirow{3}{*}{\footnotesize Sigmoid}& \multicolumn{1}{|c|}{\footnotesize \(R^2\)}&\footnotesize0.5367&\footnotesize 0.8811&\footnotesize 0.5941&\footnotesize 0.8180&\footnotesize 0.8082&\footnotesize 0.8815&\footnotesize 0.6817&\footnotesize 0.5779\\

&&\multicolumn{1}{|c|}{\footnotesize RMSE}&\footnotesize2.4146&\footnotesize 1.0306&\footnotesize 1.8288&\footnotesize 1.2287&\footnotesize 1.2850&\footnotesize 1.0282&\footnotesize 1.7178&\footnotesize 2.0124\\

&&\multicolumn{1}{|c|}{\footnotesize MAE}&\footnotesize1.8768&\footnotesize 0.7487&\footnotesize 1.3039&\footnotesize 0.8932&\footnotesize 0.9275&\footnotesize 0.7365&\footnotesize 1.2508&\footnotesize 1.4609\\

\cline{2-11}


&\multirow{3}{*}{\footnotesize Cosine}& \multicolumn{1}{|c|}{\footnotesize \(R^2\)}&\footnotesize-2.5378&\footnotesize -4.4031&\footnotesize -5.3380&\footnotesize -5.3013&\footnotesize -5.1766&\footnotesize -4.9077&\footnotesize -4.6811&\footnotesize -4.4397\\

&&\multicolumn{1}{|c|}{\footnotesize RMSE}&\footnotesize6.6723&\footnotesize 6.9480&\footnotesize 7.2267&\footnotesize 7.2301&\footnotesize 7.2915&\footnotesize 7.2585&\footnotesize 7.2572&\footnotesize 7.2243\\

&&\multicolumn{1}{|c|}{\footnotesize MAE}&\footnotesize5.7902&\footnotesize 6.2753&\footnotesize 6.6324&\footnotesize 6.6318&\footnotesize 6.6754&\footnotesize 6.6159&\footnotesize 6.5880&\footnotesize 6.5268\\

\cline{1-11}

\end{tabular}
\end{adjustbox}
\caption{\textbf{(\(\mathscr{O}_{Sum}\) with \(\mathscr{H}_2\))} Learning performance metrics for the models trained on all 1000 training datapoints using the hyperparameters listed in Table \ref{BestHyperparametersH2}. The metrics are calculated based on the predictions made by the models on the training and testing sets with labels determined by \(\mathscr{O}_{Sum}\) with \(\mathscr{H}_2\). The best values for each qubit number and each metric are in bold text for the training and testing datasets.}
\label{ResultsOSumH2}
\end{table}


\subsection{Results for \(\mathscr{O}_{Sum}\) and \(\mathscr{H}_3\)}

\begin{table}[H]
\begin{adjustbox}{center,max width=\textwidth}
\begin{tabular}{| c | c c c c c c c c c c |}

\cline{4-11}

\multicolumn{3}{c|}{}&\multicolumn{8}{c|}{Number of qubits \((n)\)}\\

\cline{1-11}

\multicolumn{1}{|c|}{\textbf{Dataset}}&\multicolumn{1}{c|}{\textbf{Kernel}}& \multicolumn{1}{c|}{\textbf{Metric}}
& \multicolumn{1}{c|}{5}& \multicolumn{1}{c|}{10}
& \multicolumn{1}{c|}{15}& \multicolumn{1}{c|}{20}
& \multicolumn{1}{c|}{25}& \multicolumn{1}{c|}{30}
& \multicolumn{1}{c|}{35}& \multicolumn{1}{c|}{40}\\

\cline{1-11}


\multirow{18}{*}{Testing}&\multirow{3}{*}{\footnotesize Linear}& \multicolumn{1}{|c|}{\footnotesize \(R^2\)}&\footnotesize-1.8253&\footnotesize-2.0274&\footnotesize-1.9619&\footnotesize-1.8855&\footnotesize-1.7631&\footnotesize-1.5998&\footnotesize-1.4225&\footnotesize-1.2536\\

&&\multicolumn{1}{|c|}{\footnotesize RMSE}&\footnotesize5.2940&\footnotesize5.8909&\footnotesize6.0765&\footnotesize6.1494&\footnotesize6.1253&\footnotesize6.0140&\footnotesize5.8459&\footnotesize5.6489\\

&&\multicolumn{1}{|c|}{\footnotesize MAE}&\footnotesize4.2697&\footnotesize4.8224&\footnotesize4.9492&\footnotesize4.9773&\footnotesize4.9052&\footnotesize4.7378&\footnotesize4.5049&\footnotesize4.2420\\

\cline{2-11}


&\multirow{3}{*}{\footnotesize Polynomial}&\multicolumn{1}{|c|}{\footnotesize \(R^2\)}&\footnotesize0.8647&\footnotesize0.9194&\footnotesize0.9221&\footnotesize0.9244&\footnotesize0.9296&\footnotesize0.9222&\footnotesize0.9160&\footnotesize0.9140\\

&&\multicolumn{1}{|c|}{\footnotesize RMSE}&\footnotesize1.1585&\footnotesize0.9611&\footnotesize0.9857&\footnotesize0.9956&\footnotesize0.9778&\footnotesize1.0404&\footnotesize1.0888&\footnotesize1.1037\\

&&\multicolumn{1}{|c|}{\footnotesize MAE}&\footnotesize0.8594&\footnotesize0.7375&\footnotesize0.7432&\footnotesize0.7518&\footnotesize0.7466&\footnotesize0.7980&\footnotesize0.8239&\footnotesize0.8420\\

\cline{2-11}


&\multirow{3}{*}{\footnotesize RBF}&\multicolumn{1}{|c|}{\footnotesize \(R^2\)}&\footnotesize\textbf{0.9852}&\footnotesize\textbf{0.9867}&\footnotesize\textbf{0.9882}&\footnotesize0.9703&\footnotesize\textbf{0.9776}&\footnotesize\textbf{0.9810}&\footnotesize\textbf{0.9836}&\footnotesize\textbf{0.9850}\\

&&\multicolumn{1}{|c|}{\footnotesize RMSE}&\footnotesize\textbf{0.3834}&\footnotesize\textbf{0.3908}&\footnotesize\textbf{0.3831}&\footnotesize0.6243&\footnotesize\textbf{0.5514}&\footnotesize\textbf{0.5142}&\footnotesize\textbf{0.4808}&\footnotesize\textbf{0.4614}\\

&&\multicolumn{1}{|c|}{\footnotesize MAE}&\footnotesize\textbf{0.2247}&\footnotesize\textbf{0.2704}&\footnotesize\textbf{0.2342}&\footnotesize0.4376&\footnotesize\textbf{0.3829}&\footnotesize\textbf{0.3485}&\footnotesize\textbf{0.3089}&\footnotesize\textbf{0.2831}\\

\cline{2-11}


&\multirow{3}{*}{\footnotesize Laplacian}&\multicolumn{1}{|c|}{\footnotesize \(R^2\)}&\footnotesize0.9637&\footnotesize0.9606&\footnotesize0.9705&\footnotesize\textbf{0.9740}&\footnotesize0.9755&\footnotesize0.9763&\footnotesize0.9764&\footnotesize0.9755\\

&&\multicolumn{1}{|c|}{\footnotesize RMSE}&\footnotesize0.6000&\footnotesize0.6719&\footnotesize0.6060&\footnotesize\textbf{0.5842}&\footnotesize0.5770&\footnotesize0.5740&\footnotesize0.5768&\footnotesize0.5884\\

&&\multicolumn{1}{|c|}{\footnotesize MAE}&\footnotesize0.3753&\footnotesize0.4879&\footnotesize0.4285&\footnotesize\textbf{0.4042}&\footnotesize0.3895&\footnotesize0.3767&\footnotesize0.3644&\footnotesize0.3564\\

\cline{2-11}


&\multirow{3}{*}{\footnotesize Sigmoid}&\multicolumn{1}{|c|}{\footnotesize \(R^2\)}&\footnotesize0.7468&\footnotesize0.9181&\footnotesize0.8022&\footnotesize0.8799&\footnotesize0.8832&\footnotesize0.9097&\footnotesize0.9348&\footnotesize0.7250\\

&&\multicolumn{1}{|c|}{\footnotesize RMSE}&\footnotesize1.5850&\footnotesize0.9691&\footnotesize1.5702&\footnotesize1.2545&\footnotesize1.2594&\footnotesize1.1208&\footnotesize0.9594&\footnotesize1.9734\\

&&\multicolumn{1}{|c|}{\footnotesize MAE}&\footnotesize1.1870&\footnotesize0.7475&\footnotesize1.2129&\footnotesize0.9506&\footnotesize0.9707&\footnotesize0.8719&\footnotesize0.7301&\footnotesize1.6113\\

\cline{2-11}


&\multirow{3}{*}{\footnotesize Cosine}& \multicolumn{1}{|c|}{\footnotesize \(R^2\)}&\footnotesize-1.8257&\footnotesize-2.0275&\footnotesize-1.9619&\footnotesize-1.8855&\footnotesize-1.7633&\footnotesize-1.5999&\footnotesize-1.4225&\footnotesize-1.2535\\

&&\multicolumn{1}{|c|}{\footnotesize RMSE}&\footnotesize5.2944&\footnotesize5.8910&\footnotesize6.0766&\footnotesize6.1494&\footnotesize6.1255&\footnotesize6.0142&\footnotesize5.8459&\footnotesize5.6488\\

&&\multicolumn{1}{|c|}{\footnotesize MAE}&\footnotesize4.2685&\footnotesize4.8210&\footnotesize4.9456&\footnotesize4.9710&\footnotesize4.8940&\footnotesize4.7187&\footnotesize4.4806&\footnotesize4.2142\\

\cline{1-11}


\multirow{18}{*}{Training}&\multirow{3}{*}{\footnotesize Linear}& \multicolumn{1}{|c|}{\footnotesize \(R^2\)}&\footnotesize-1.7817&\footnotesize -2.0309&\footnotesize -1.9525&\footnotesize -1.8691&\footnotesize -1.7526&\footnotesize -1.6012&\footnotesize -1.4342&\footnotesize -1.2712\\

&&\multicolumn{1}{|c|}{\footnotesize RMSE}&\footnotesize5.2967&\footnotesize 5.9302&\footnotesize 6.1264&\footnotesize 6.2033&\footnotesize 6.1787&\footnotesize 6.0624&\footnotesize 5.8818&\footnotesize 5.6703\\

&&\multicolumn{1}{|c|}{\footnotesize MAE}&\footnotesize4.2821&\footnotesize 4.8539&\footnotesize 4.9820&\footnotesize 5.0092&\footnotesize 4.9391&\footnotesize 4.7727&\footnotesize 4.5383&\footnotesize 4.2713\\

\cline{2-11}


&\multirow{3}{*}{\footnotesize Polynomial}&\multicolumn{1}{|c|}{\footnotesize \(R^2\)}&\footnotesize0.9067&\footnotesize 0.9451&\footnotesize 0.9455&\footnotesize 0.9510&\footnotesize 0.9561&\footnotesize 0.9446&\footnotesize 0.9416&\footnotesize 0.9450\\

&&\multicolumn{1}{|c|}{\footnotesize RMSE}&\footnotesize0.9698&\footnotesize 0.7981&\footnotesize 0.8325&\footnotesize 0.8109&\footnotesize 0.7800&\footnotesize 0.8850&\footnotesize 0.9109&\footnotesize 0.8821\\

&&\multicolumn{1}{|c|}{\footnotesize MAE}&\footnotesize0.7195&\footnotesize 0.6049&\footnotesize 0.6223&\footnotesize 0.6265&\footnotesize 0.6016&\footnotesize 0.6786&\footnotesize 0.6957&\footnotesize 0.6847\\

\cline{2-11}


&\multirow{3}{*}{\footnotesize RBF}&\multicolumn{1}{|c|}{\footnotesize \(R^2\)}&\footnotesize0.9999&\footnotesize 0.9954&\footnotesize 0.9992&\footnotesize 0.9861&\footnotesize 0.9885&\footnotesize 0.9931&\footnotesize 0.9962&\footnotesize 0.9981\\

&&\multicolumn{1}{|c|}{\footnotesize RMSE}&\footnotesize0.0333&\footnotesize 0.2315&\footnotesize 0.1039&\footnotesize 0.4314&\footnotesize 0.3995&\footnotesize 0.3123&\footnotesize 0.2316&\footnotesize 0.1623\\

&&\multicolumn{1}{|c|}{\footnotesize MAE}&\footnotesize0.0183&\footnotesize 0.1548&\footnotesize 0.0602&\footnotesize 0.3035&\footnotesize 0.2726&\footnotesize 0.2113&\footnotesize 0.1520&\footnotesize 0.1017\\

\cline{2-11}


&\multirow{3}{*}{\footnotesize Laplacian}&\multicolumn{1}{|c|}{\footnotesize \(R^2\)}&\footnotesize\textbf{1.0000}&\footnotesize \textbf{1.0000}&\footnotesize \textbf{1.0000}&\footnotesize \textbf{1.0000}&\footnotesize \textbf{1.0000}&\footnotesize \textbf{1.0000}&\footnotesize \textbf{1.0000}&\footnotesize \textbf{1.0000}\\

&&\multicolumn{1}{|c|}{\footnotesize RMSE}&\footnotesize\textbf{0.0000}&\footnotesize \textbf{0.0000}&\footnotesize \textbf{0.0000}&\footnotesize \textbf{0.0000}&\footnotesize \textbf{0.0000}&\footnotesize \textbf{0.0000}&\footnotesize \textbf{0.0000}&\footnotesize \textbf{0.0000}\\

&&\multicolumn{1}{|c|}{\footnotesize MAE}&\footnotesize\textbf{0.0000}&\footnotesize \textbf{0.0000}&\footnotesize \textbf{0.0000}&\footnotesize \textbf{0.0000}&\footnotesize \textbf{0.0000}&\footnotesize \textbf{0.0000}&\footnotesize \textbf{0.0000}&\footnotesize \textbf{0.0000}\\

\cline{2-11}

\cline{2-11}


&\multirow{3}{*}{\footnotesize Sigmoid}&\multicolumn{1}{|c|}{\footnotesize \(R^2\)}&\footnotesize0.7305&\footnotesize 0.9498&\footnotesize 0.7981&\footnotesize 0.9104&\footnotesize 0.9024&\footnotesize 0.9368&\footnotesize 0.9564&\footnotesize 0.6995\\

&&\multicolumn{1}{|c|}{\footnotesize RMSE}&\footnotesize1.6486&\footnotesize 0.7634&\footnotesize 1.6019&\footnotesize 1.0965&\footnotesize 1.1632&\footnotesize 0.9449&\footnotesize 0.7869&\footnotesize 2.0626\\

&&\multicolumn{1}{|c|}{\footnotesize MAE}&\footnotesize1.2481&\footnotesize 0.5958&\footnotesize 1.2083&\footnotesize 0.8516&\footnotesize 0.8922&\footnotesize 0.7384&\footnotesize 0.6158&\footnotesize 1.6549\\

\cline{2-11}


&\multirow{3}{*}{\footnotesize Cosine}& \multicolumn{1}{|c|}{\footnotesize \(R^2\)}&\footnotesize-1.7830&\footnotesize -2.0316&\footnotesize -1.9539&\footnotesize -1.8711&\footnotesize -1.7545&\footnotesize -1.6031&\footnotesize -1.4362&\footnotesize -1.2730\\

&&\multicolumn{1}{|c|}{\footnotesize RMSE}&\footnotesize5.2979&\footnotesize 5.9310&\footnotesize 6.1279&\footnotesize 6.2054&\footnotesize 6.1808&\footnotesize 6.0647&\footnotesize 5.8841&\footnotesize 5.6725\\

&&\multicolumn{1}{|c|}{\footnotesize MAE}&\footnotesize4.2842&\footnotesize 4.8553&\footnotesize 4.9839&\footnotesize 5.0095&\footnotesize 4.9333&\footnotesize 4.7599&\footnotesize 4.5186&\footnotesize 4.2461\\

\cline{1-11}

\end{tabular}
\end{adjustbox}
\caption{\textbf{(\(\mathscr{O}_{Sum}\) with \(\mathscr{H}_3\))} Learning performance metrics for the models trained on all 1000 training datapoints using the hyperparameters listed in Table \ref{BestHyperparametersH3}. The metrics are calculated based on the predictions made by the models on the training and testing sets with labels determined by \(\mathscr{O}_{Sum}\) with \(\mathscr{H}_3\). The best values for each qubit number and each metric are in bold text for the training and testing datasets.}
\label{ResultsOSumH3}
\end{table}


\subsection{Results for \(\mathscr{O}_{Sum}\) and \(\mathscr{H}_4\)}

\begin{table}[H]
\begin{adjustbox}{center,max width=\textwidth}
\begin{tabular}{| c | c c c c c c c c c c |}

\cline{4-11}

\multicolumn{3}{c|}{}&\multicolumn{8}{c|}{Number of qubits \((n)\)}\\

\cline{1-11}

\multicolumn{1}{|c|}{\textbf{Dataset}}&\multicolumn{1}{c|}{\textbf{Kernel}}& \multicolumn{1}{c|}{\textbf{Metric}}
& \multicolumn{1}{c|}{5}& \multicolumn{1}{c|}{10}
& \multicolumn{1}{c|}{15}& \multicolumn{1}{c|}{20}
& \multicolumn{1}{c|}{25}& \multicolumn{1}{c|}{30}
& \multicolumn{1}{c|}{35}& \multicolumn{1}{c|}{40}\\

\cline{1-11}


\multirow{18}{*}{Testing}&\multirow{3}{*}{\footnotesize Linear}& \multicolumn{1}{|c|}{\footnotesize \(R^2\)}&\footnotesize-6.0930&\footnotesize-10.4791&\footnotesize-12.5001&\footnotesize-13.3118&\footnotesize-13.1500&\footnotesize-12.4112&\footnotesize-11.3274&\footnotesize-10.0421\\

&&\multicolumn{1}{|c|}{\footnotesize RMSE}&\footnotesize7.8717&\footnotesize8.2751&\footnotesize8.3789&\footnotesize8.4404&\footnotesize8.4590&\footnotesize8.4367&\footnotesize8.3851&\footnotesize8.3171\\

&&\multicolumn{1}{|c|}{\footnotesize MAE}&\footnotesize7.3590&\footnotesize7.8947&\footnotesize8.0577&\footnotesize8.1362&\footnotesize8.1498&\footnotesize8.1097&\footnotesize8.0297&\footnotesize7.9231\\

\cline{2-11}


&\multirow{3}{*}{\footnotesize Polynomial}&\multicolumn{1}{|c|}{\footnotesize \(R^2\)}&\footnotesize0.8593&\footnotesize0.6920&\footnotesize0.6842&\footnotesize0.6860&\footnotesize0.5918&\footnotesize0.7476&\footnotesize0.7734&\footnotesize0.7838\\

&&\multicolumn{1}{|c|}{\footnotesize RMSE}&\footnotesize1.1086&\footnotesize1.3555&\footnotesize1.2815&\footnotesize1.2502&\footnotesize1.4368&\footnotesize1.1573&\footnotesize1.1370&\footnotesize1.1639\\

&&\multicolumn{1}{|c|}{\footnotesize MAE}&\footnotesize0.7069&\footnotesize0.7836&\footnotesize0.6983&\footnotesize0.7379&\footnotesize0.8742&\footnotesize0.6691&\footnotesize0.6764&\footnotesize0.7258\\

\cline{2-11}


&\multirow{3}{*}{\footnotesize RBF}&\multicolumn{1}{|c|}{\footnotesize \(R^2\)}&\footnotesize0.9091&\footnotesize0.7997&\footnotesize\textbf{0.8563}&\footnotesize\textbf{0.8734}&\footnotesize0.7908&\footnotesize0.8480&\footnotesize\textbf{0.8862}&\footnotesize0.9075\\

&&\multicolumn{1}{|c|}{\footnotesize RMSE}&\footnotesize0.8913&\footnotesize1.0930&\footnotesize\textbf{0.8644}&\footnotesize\textbf{0.7940}&\footnotesize1.0286&\footnotesize0.8982&\footnotesize\textbf{0.8055}&\footnotesize0.7614\\

&&\multicolumn{1}{|c|}{\footnotesize MAE}&\footnotesize0.5528&\footnotesize0.5934&\footnotesize\textbf{0.4079}&\footnotesize\textbf{0.3943}&\footnotesize0.5672&\footnotesize\textbf{0.5084}&\footnotesize\textbf{0.4461}&\footnotesize\textbf{0.4183}\\

\cline{2-11}


&\multirow{3}{*}{\footnotesize Laplacian}& \multicolumn{1}{|c|}{\footnotesize \(R^2\)}&\footnotesize\textbf{0.9607}&\footnotesize\textbf{0.9038}&\footnotesize0.8151&\footnotesize0.8103&\footnotesize\textbf{0.8209}&\footnotesize\textbf{0.8532}&\footnotesize0.8848&\footnotesize\textbf{0.9090}\\

&&\multicolumn{1}{|c|}{\footnotesize RMSE}&\footnotesize\textbf{0.5859}&\footnotesize\textbf{0.7574}&\footnotesize0.9806&\footnotesize0.9716&\footnotesize\textbf{0.9516}&\footnotesize\textbf{0.8828}&\footnotesize0.8107&\footnotesize\textbf{0.7550}\\

&&\multicolumn{1}{|c|}{\footnotesize MAE}&\footnotesize\textbf{0.3555}&\footnotesize\textbf{0.4011}&\footnotesize0.5750&\footnotesize0.5742&\footnotesize\textbf{0.5497}&\footnotesize0.5137&\footnotesize0.4757&\footnotesize0.4492\\

\cline{2-11}


&\multirow{3}{*}{\footnotesize Sigmoid}& \multicolumn{1}{|c|}{\footnotesize \(R^2\)}&\footnotesize0.8475&\footnotesize0.6959&\footnotesize0.7261&\footnotesize0.5394&\footnotesize0.6076&\footnotesize-4.1101&\footnotesize0.3548&\footnotesize-7943.9105\\

&&\multicolumn{1}{|c|}{\footnotesize RMSE}&\footnotesize1.1541&\footnotesize1.3469&\footnotesize1.1936&\footnotesize1.5142&\footnotesize1.4086&\footnotesize5.2078&\footnotesize1.9184&\footnotesize223.0957\\

&&\multicolumn{1}{|c|}{\footnotesize MAE}&\footnotesize0.7069&\footnotesize0.7720&\footnotesize0.6515&\footnotesize0.8967&\footnotesize0.9337&\footnotesize2.8464&\footnotesize1.1720&\footnotesize130.1390\\

\cline{2-11}


&\multirow{3}{*}{\footnotesize Cosine}& \multicolumn{1}{|c|}{\footnotesize \(R^2\)}&\footnotesize-6.0862&\footnotesize-10.4191&\footnotesize-12.4909&\footnotesize-13.2760&\footnotesize-13.0931&\footnotesize-12.3407&\footnotesize-11.2493&\footnotesize-9.9617\\

&&\multicolumn{1}{|c|}{\footnotesize RMSE}&\footnotesize7.8679&\footnotesize8.2535&\footnotesize8.3761&\footnotesize8.4298&\footnotesize8.4419&\footnotesize8.4145&\footnotesize8.3585&\footnotesize8.2868\\

&&\multicolumn{1}{|c|}{\footnotesize MAE}&\footnotesize7.3541&\footnotesize7.8779&\footnotesize8.0550&\footnotesize8.1268&\footnotesize8.1352&\footnotesize8.0915&\footnotesize8.0086&\footnotesize7.8983\\

\cline{1-11}


\multirow{18}{*}{Training}&\multirow{3}{*}{\footnotesize Linear}& \multicolumn{1}{|c|}{\footnotesize \(R^2\)}&\footnotesize-6.6620&\footnotesize -10.8265&\footnotesize -11.6110&\footnotesize -11.5868&\footnotesize -11.1922&\footnotesize -10.5742&\footnotesize -9.7730&\footnotesize -8.8826\\

&&\multicolumn{1}{|c|}{\footnotesize RMSE}&\footnotesize7.9094&\footnotesize 8.2592&\footnotesize 8.3579&\footnotesize 8.3957&\footnotesize 8.4016&\footnotesize 8.3794&\footnotesize 8.3295&\footnotesize 8.2601\\

&&\multicolumn{1}{|c|}{\footnotesize MAE}&\footnotesize7.4386&\footnotesize 7.9118&\footnotesize 8.0255&\footnotesize 8.0652&\footnotesize 8.0626&\footnotesize 8.0271&\footnotesize 7.9567&\footnotesize 7.8595\\

\cline{2-11}


&\multirow{3}{*}{\footnotesize Polynomial}&\multicolumn{1}{|c|}{\footnotesize \(R^2\)}&\footnotesize0.9262&\footnotesize 0.8383&\footnotesize 0.8414&\footnotesize 0.8347&\footnotesize 0.7679&\footnotesize 0.8541&\footnotesize 0.8699&\footnotesize 0.8783\\

&&\multicolumn{1}{|c|}{\footnotesize RMSE}&\footnotesize0.7762&\footnotesize 0.9656&\footnotesize 0.9374&\footnotesize 0.9621&\footnotesize 1.1591&\footnotesize 0.9408&\footnotesize 0.9154&\footnotesize 0.9165\\

&&\multicolumn{1}{|c|}{\footnotesize MAE}&\footnotesize0.4871&\footnotesize 0.5608&\footnotesize 0.5223&\footnotesize 0.5788&\footnotesize 0.6837&\footnotesize 0.5476&\footnotesize 0.5393&\footnotesize 0.5459\\

\cline{2-11}


&\multirow{3}{*}{\footnotesize RBF}&\multicolumn{1}{|c|}{\footnotesize \(R^2\)}&\footnotesize0.9681&\footnotesize 0.9275&\footnotesize 0.9772&\footnotesize 0.9941&\footnotesize 0.8921&\footnotesize 0.9228&\footnotesize 0.9514&\footnotesize 0.9698\\

&&\multicolumn{1}{|c|}{\footnotesize RMSE}&\footnotesize0.5106&\footnotesize 0.6467&\footnotesize 0.3556&\footnotesize 0.1823&\footnotesize 0.7904&\footnotesize 0.6843&\footnotesize 0.5594&\footnotesize 0.4568\\

&&\multicolumn{1}{|c|}{\footnotesize MAE}&\footnotesize0.3166&\footnotesize 0.3636&\footnotesize 0.1849&\footnotesize 0.0864&\footnotesize 0.4471&\footnotesize 0.3794&\footnotesize 0.3089&\footnotesize 0.2443\\

\cline{2-11}


&\multirow{3}{*}{\footnotesize Laplacian}& \multicolumn{1}{|c|}{\footnotesize \(R^2\)}&\footnotesize\textbf{1.0000}&\footnotesize \textbf{1.0000}&\footnotesize \textbf{0.9993}&\footnotesize \textbf{0.9997}&\footnotesize \textbf{0.9999}&\footnotesize \textbf{0.9999}&\footnotesize \textbf{1.0000}&\footnotesize \textbf{1.0000}\\

&&\multicolumn{1}{|c|}{\footnotesize RMSE}&\footnotesize\textbf{0.0000}&\footnotesize \textbf{0.0000}&\footnotesize \textbf{0.0602}&\footnotesize \textbf{0.0398}&\footnotesize \textbf{0.0281}&\footnotesize \textbf{0.0204}&\footnotesize \textbf{0.0000}&\footnotesize \textbf{0.0000}\\

&&\multicolumn{1}{|c|}{\footnotesize MAE}&\footnotesize\textbf{0.0000}&\footnotesize \textbf{0.0000}&\footnotesize \textbf{0.0412}&\footnotesize \textbf{0.0266}&\footnotesize \textbf{0.0184}&\footnotesize \textbf{0.0132}&\footnotesize \textbf{0.0000}&\footnotesize \textbf{0.0000}\\

\cline{2-11}


&\multirow{3}{*}{\footnotesize Sigmoid}& \multicolumn{1}{|c|}{\footnotesize \(R^2\)}&\footnotesize0.9294&\footnotesize 0.8485&\footnotesize 0.8796&\footnotesize 0.6078&\footnotesize 0.6826&\footnotesize -4.6875&\footnotesize 0.4620&\footnotesize -9101.2820\\

&&\multicolumn{1}{|c|}{\footnotesize RMSE}&\footnotesize0.7593&\footnotesize 0.9347&\footnotesize 0.8167&\footnotesize 1.4820&\footnotesize 1.3557&\footnotesize 5.8739&\footnotesize 1.8615&\footnotesize 250.6840\\

&&\multicolumn{1}{|c|}{\footnotesize MAE}&\footnotesize0.4922&\footnotesize 0.5471&\footnotesize 0.4609&\footnotesize 0.8870&\footnotesize 0.8745&\footnotesize 3.3699&\footnotesize 1.0980&\footnotesize 153.6522\\

\cline{2-11}


&\multirow{3}{*}{\footnotesize Cosine}& \multicolumn{1}{|c|}{\footnotesize \(R^2\)}&\footnotesize-6.6762&\footnotesize -10.8590&\footnotesize -11.6306&\footnotesize -11.6133&\footnotesize -11.2211&\footnotesize -10.6022&\footnotesize -9.7986&\footnotesize -8.9055\\

&&\multicolumn{1}{|c|}{\footnotesize RMSE}&\footnotesize7.9167&\footnotesize 8.2705&\footnotesize 8.3644&\footnotesize 8.4045&\footnotesize 8.4116&\footnotesize 8.3895&\footnotesize 8.3394&\footnotesize 8.2697\\

&&\multicolumn{1}{|c|}{\footnotesize MAE}&\footnotesize7.4446&\footnotesize 7.9173&\footnotesize 8.0274&\footnotesize 8.0670&\footnotesize 8.0634&\footnotesize 8.0236&\footnotesize 7.9482&\footnotesize 7.8462\\

\cline{1-11}

\end{tabular}
\end{adjustbox}
\caption{\textbf{(\(\mathscr{O}_{Sum}\) with \(\mathscr{H}_4\))} Learning performance metrics for the models trained on all 1000 training datapoints using the hyperparameters listed in Table \ref{BestHyperparametersH4}. The metrics are calculated based on the predictions made by the models on the training and testing sets with labels determined by \(\mathscr{O}_{Sum}\) with \(\mathscr{H}_4\). The best values for each qubit number and each metric are in bold text for the training and testing datasets.}
\label{ResultsOSumH4}
\end{table}

\end{document}